\documentclass[12pt,letterpaper]{article}
\usepackage[T1]{fontenc}
\usepackage{lmodern}
\usepackage[utf8]{inputenc}
\usepackage[margin = 1in]{geometry}
\usepackage{booktabs}
\usepackage{amsmath,amssymb,amsfonts,amsthm}
\usepackage{thmtools}
\usepackage{dsfont}
\usepackage{cancel}
\usepackage{bm}
\usepackage{mathtools}
\usepackage{changepage}
\usepackage{verbatim}
\usepackage{graphicx}
\usepackage{emptypage}
\usepackage{newlfont}
\usepackage[all,cmtip]{xy}
\usepackage[bottom]{footmisc}
\usepackage{fnpct}
\usepackage[titletoc,title]{appendix}
\usepackage[nottoc,numbib]{tocbibind}
\usepackage{chngcntr}
\usepackage{apptools}
\usepackage{listings}
\usepackage{color}
\usepackage{sgame, tikz} 
\usepackage{pgfplots}
\usepackage{kpfonts}    
\usepackage{microtype}    
\usepackage[normalem]{ulem}    

\AtAppendix{\counterwithin{lem}{section}}
\usepackage{caption}
\usepackage{subcaption}
\usepackage{float}
\usepackage{xr-hyper}
\definecolor{GmailBlue}{RGB}{42, 93, 176}
\definecolor{myblue}{HTML}{0072B2}
\PassOptionsToPackage{hyphens}{url}\usepackage[colorlinks=true,linkcolor=GmailBlue,allcolors=GmailBlue]{hyperref}
\usepackage{sgamevar}
\usepackage{lipsum}
\usepackage{xcolor}
\theoremstyle{plain} 
\newtheorem{theorem}{Theorem}
\newtheorem*{theorem*}{Theorem}
\newtheorem{corollary}{Corollary}
\newtheorem{lemma}{Lemma}
\newtheorem*{lemma*}{Lemma}

\newtheorem{proposition}{Proposition}

\theoremstyle{definition} 
\newtheorem{definition}{Definition}
\newtheorem*{definition*}{Definition}

\newtheorem*{prop*}{Properties}

\newtheorem*{ex*}{Example}

\newtheorem{remark}{Remark}
\newtheorem{assumption}{Assumption}

\newcommand{\argmax}{\operatornamewithlimits{argmax}}

\usepackage[nodisplayskipstretch]{setspace}
\reversemarginpar

\makeatletter
\renewcommand{\paragraph}{%
	\@startsection{paragraph}{4}%
	{\z@}{1.5ex \@plus 1ex \@minus .2ex}{-0.7em}%
	{\normalfont\normalsize\bfseries}%
}
\makeatother

\let\originalleft\left
\let\originalright\right
\renewcommand{\left}{\mathopen{}\mathclose\bgroup\originalleft}
\renewcommand{\right}{\aftergroup\egroup\originalright}

\usepackage[nameinlink]{cleveref}
\crefname{manualasm}{assumption}{assumptions}
\crefname{innercustomthm}{theorem}{theorems}
\crefname{prop}{proposition}{propositions}
\crefname{ex}{example}{examples}
\crefname{defn}{definition}{definitions}
\crefname{claim}{claim}{claims}
\crefname{lem}{lemma}{lemmas}
\crefname{thm}{theorem}{theorems}
\Crefname{assumption}{Assumption}{Assumptions}
\Crefname{lemma}{Lemma}{Lemmata}

\usepackage{etoolbox}
\newcommand{\zerodisplayskips}{%
  \setlength{\abovedisplayskip}{1.5ex}%
  \setlength{\belowdisplayskip}{1.5ex}%
  \setlength{\abovedisplayshortskip}{1.5ex}%
  \setlength{\belowdisplayshortskip}{1.5ex}}
\appto{\normalsize}{\zerodisplayskips}
\appto{\small}{\zerodisplayskips}
\appto{\footnotesize}{\zerodisplayskips}

\newcommand{\headercref}[2]{\texorpdfstring{\Cref{#2}}{#1 \ref{#2}}} 

\allowdisplaybreaks
\usepackage[shortlabels]{enumitem}
    \setlist[itemize]{noitemsep,nolistsep}
    \setlist[enumerate,1]{noitemsep,nolistsep,label=(\arabic*)}
    \setlist[enumerate,2]{noitemsep,nolistsep,label=(\alph*)}
    \setlist[enumerate,3]{noitemsep,nolistsep,label=(\roman*)}

\DeclareMathOperator*{\MPC}{MPC}
\DeclareMathOperator*{\MPS}{MPS}
\DeclareMathOperator*{\supp}{supp}
\DeclareMathOperator*{\BR}{BR}
\DeclareMathOperator*{\IC}{IC}

\DeclareMathOperator*{\ccv}{ccv}
\DeclareMathOperator*{\cvx}{cvx}
\DeclareMathOperator*{\interior}{int}
\DeclareMathOperator*{\opt}{opt}
\newcommand{\Ccv}{\mathcal{I}_{\ccv}}
\newcommand{\Cvx}{\mathcal{I}_{\cvx}}

\newcommand{\de}{\mathop{}\!\mathrm{d}}

\usepackage{accents}

\usepackage{natbib}
\pgfplotsset{compat=1.17}

\title{Delegated Information Provision\thanks{\protect We thank 
Martino Banchio,
Gregorio Curello,
Alkis Georgiadis-Harris,
Nicola Pavoni,
Marco Ottaviani,
Mark Whitmeyer,
and seminar audiences at 
Bielefeld, Bocconi, and
Essex
for their comments.
Preusser acknowledges financial support by the European Research Council (HEUROPE 2022 ADG, GA No. 101055295 – InfoEcoScience).
}

}
\author{Francesco Bilotta\thanks{Università Bocconi. Email: \href{mailto:francesco.bilotta2@phd.unibocconi.it}{francesco.bilotta2@phd.unibocconi.it}}
\and
Christoph Carnehl\thanks{Università Bocconi and IGIER. Email: \href{mailto:christoph.carnehl@unibocconi.it}{christoph.carnehl@unibocconi.it}}
\and
Justus Preusser\thanks{Università Bocconi and IGIER. Email: \href{mailto:justus.preusser@unibocconi.it}{justus.preusser@unibocconi.it}}
}

\date{\today}
\pgfplotsset{compat=1.15}
\begin{document}
\maketitle

\begin{abstract}

\noindent A designer relies on an experimenter to provide information to a decision maker, but the experimenter has incentives to persuade rather than merely transmit information. Anticipating this motive, the designer can restrict the set of admissible experiments, but cannot prevent the experimenter from garbling any admissible experiment. We model this situation as delegation over experiments. The optimal delegation set is obtained by comparing maximally informative experiments among those the experimenter has no incentive to garble. When the experimenter’s preferences are $S$-shaped, we characterize these experiments as double censorship. Relative to the full-delegation benchmark, double censorship features an intermediate pooling region, inducing a smaller pooling region for the highest states. We show that the designer strictly benefits from imposing a nontrivial delegation set that constrains persuasion while retaining information provision. Applying our results to recommender systems, we show that privacy constraints can arise endogenously to protect consumers against persuasion.

\end{abstract}

\pagenumbering{arabic}

\section{Introduction}\label{sec:intro}

In many environments, delegation separates the production of information from its use in decision making. In courts, prosecutors generate evidence that informs judges; in regulatory settings, firms conduct tests that guide approval decisions; and on platforms, algorithms recommend products to consumers.
In each case, the party who produces information has incentives to influence the decision maker rather than merely transmit facts. While a large literature on Bayesian persuasion characterizes how the experimenter optimally designs information to exploit this influence, much less attention has been paid to the complementary problem of a designer who anticipates persuasion and seeks to limit its force. To do so, a designer may restrict the experimenter's discretion by specifying which experiments are admissible. Yet even under such restrictions, delegation remains subject to moral hazard: the experimenter can garble any admissible experiment before it reaches the decision maker. We study the optimal design of delegated information provision, which trades off the informativeness of admissible experiments against the experimenter's ability to persuade.

In practice, such control over informational discretion often takes the form of restrictions on admissible experiments rather than direct control of actions. In the examples above, legislators restrict which evidence prosecutors may present, regulators specify testing procedures for firms, and privacy protection limits which data algorithms can use. These restrictions impose an upper bound on informativeness while leaving substantial discretion to the experimenter. As a result, the experimenter may still suppress available evidence, implement tests imperfectly, or communicate only coarse information.

We therefore cast the design of restrictions as a delegation problem over experiments. The designer specifies a delegation set: a set of admissible Blackwell-experiments. The experimenter can choose an arbitrary garbling of any admissible experiment. Therefore, the designer can only impose an upper bound on information. Given a delegation set, the experimenter faces a standard Bayesian persuasion problem with a one-dimensional state and posterior-mean preferences. The experimenter picks an experiment to persuade a decision maker (DM) to take a binary action. The DM's optimal choice depends on a privately known outside option. We assume that the designer has convex preferences in the posterior mean and thus values information. Due to this assumption, the designer need not necessarily be distinct from the DM but can also be interpreted as the DM themselves who commits ex ante to an admissibility rule as a form of self-discipline against persuasion.

Our central insight is that effective limits on the experimenter's discretion cannot deliver experiments that are uniformly more informative than the full-delegation benchmark---the outcome of the traditional Bayesian persuasion problem. Any uniformly more informative restriction will be garbled by the experimenter, while any uniformly less informative restriction is dominated by the persuasion outcome. The designer's only viable strategy is therefore to reshape the admissible information structures by selecting an experiment that is Blackwell-incomparable to persuasion, and prevents persuasion from being replicated through garbling. Such restrictions redistribute informativeness across states: by deliberately inducing pooling of some regions of the state space, the designer dampens incentives for concealment and induces finer disclosure where persuasion would otherwise be most severe. We show that optimally designed restrictions strictly dominate full delegation, adding value not via more information per se, but with a different shape of information revelation. 

As a first step, we show that the designer's problem reduces to choosing among \emph{maximal incentive-compatible} (MIC) experiments. An experiment $F$ is incentive compatible if it is self-enforcing: when the designer sets $F$ as the upper bound, the experimenter optimally implements $F$ rather than garble it. An incentive-compatible experiment is maximal if no strictly more informative experiment is also incentive compatible. Since the designer values information, the designer never imposes a restriction that is incentive compatible but not maximal. 

While incentive compatibility and maximality are intuitive properties of optimal delegation, they already yield significant structure on the designer's problem. A key implication of MIC is that any profitable restriction must be Blackwell-incomparable to the persuasion outcome under full delegation $F^*$. Blackwell-more informative restrictions are garbled back to $F^*$, while Blackwell-less informative restrictions are dominated by $F^*$. Therefore, any strict gain from restricting discretion must arise from Blackwell-incomparable experiments that reallocate informativeness across regions of the state space.

This fact is particularly evident when the experimenter's payoff is $S$-shaped, which is our baseline assumption. In this case, we fully characterize maximal incentive-compatible (MIC) experiments. They take the form of \emph{double censorship}: there exist two thresholds such that all states below the first threshold are fully revealed, while states between the thresholds and above the second threshold are pooled into one intermediate and one high signal, respectively. Full delegation is itself MIC and corresponds to the unrestricted $S$-shaped persuasion outcome; it arises as the limiting case in which the two thresholds coincide, yielding \emph{upper censorship}.

Double censorship makes the designer's informational trade-off transparent. Incentive compatibility implies that shrinking the top pooling region---i.e., improving information precisely in the high states the experimenter would otherwise censor---requires expanding the intermediate pooling region. Thus, better information in high states comes at the cost of worse information in intermediate ones. MIC experiments are therefore ordered by the size of the top pooling interval. However, they are not Blackwell-comparable: each MIC experiment redistributes informativeness between high and intermediate states rather than uniformly increasing or decreasing it. This redistribution logic and the ensuing incomparability are the key factors generating scope for improvements beyond full delegation.

In fact, leveraging the structure of MIC experiments, we show that the designer strictly prefers a nontrivial restriction: full delegation is never optimal. Intuitively, starting from full delegation, introducing an arbitrary small amount of pooling around the censorship cutoff entails only a second-order informational loss, yet it disciplines the experimenter's incentives and reduces the pooling region at the highest states. The resulting gain is of first order, as it yields a benefit proportional to the mass of states pooled under persuasion. Hence, improving information in high states strictly outweighs the local loss on intermediate states, yielding a higher payoff for the designer.

We illustrate our results in the context of platforms persuading consumers to purchase a product. Platforms have access to vast data about consumer demographics and their browsing behavior that they can use to estimate their valuation of products. As platforms profit from sales, they have an incentive to persuade consumers into purchasing. We show that this setting can be mapped into our environment and that a regulatory authority that cares about consumers has a strict incentive to restrict the data that the platform's recommendation algorithm can use.\footnote{Indeed, such restrictions may also arise if the platform cares about consumer surplus in addition to commissions from sales. If the platform is organized hierarchically and the recommendation algorithm is designed to maximize sales, then management may want to limit the data that the algorithm can use.} We show that our restrictions can be implemented using prominent notions of privacy constraints, such as privacy sets and differential privacy. Hence, the tension between information provision and persuasion gives rise to endogenous privacy constraints---even in the absence of intrinsic preferences for privacy. 

We then ask whether nontrivial restrictions remain optimal beyond the $S$-shaped case. In the baseline environment, profitable restrictions work by shifting the location of the atom in the top interval of the experimenter's best reply. This channel may be infeasible under alternative payoff shapes. If the DM's outside option is degenerate, for instance, the experimenter's best reply never assigns mass to states strictly above that outside option; in this case, restrictions cannot move any top atom and full delegation is optimal. Similarly, if the experimenter's payoff is $M$-shaped and the prior is sufficiently dispersed, the persuasion outcome is a binary experiment supported on the two peaks of the $M$. Any restriction can then only induce the experimenter to shift mass into the valley between the peaks, which reduces informativeness and lowers the designer's payoff. Building on these observations, we provide general conditions under which full delegation is suboptimal for general posterior-mean payoffs of the experimenter. 

Moreover, we show that the designer's problem is generally solved by an incentive-compatible bi-pooling experiment. In fact, we show that all MIC experiments are extreme points of the set of all experiments. Essentially, these results demonstrate how the literature's extreme-point and linear-duality approaches can be combined to solve information design problems with the novel incentive-compatibility and maximality constraints that we introduce.

\hypertarget{par:related_literature}{\paragraph*{Related Literature.}}
Most closely related are \citet{ichihashi2019limiting}, \citet{arieli2022bayesianpersuasionmediators}, and \citet{mylovanov2024constructive}. \citet{ichihashi2019limiting} studies endogenously constrained information of an experimenter in a binary-action persuasion environment with an uninformed DM.\footnote{\citet{bird2022should} study a related problem in which a designer restricts an experimenter who persuades a DM. In contrast to our setting, the experimenter can restrict the actions available to the DM.} In our setting, by contrast, the DM has private information, which generates $S$-shaped posterior-mean payoffs for the experimenter. This introduces fundamentally different incentives for the experimenter and delivers a key implication that is absent in the uninformed benchmark: the designer strictly benefits from imposing a nontrivial restriction. Moreover, while \citet{ichihashi2019limiting} focuses on characterizing attainable utility profiles, we characterize optimal restrictions, show how they can be implemented and how they trade off information revelation across states.

\citet{arieli2022bayesianpersuasionmediators} analyze persuasion with strategic mediators who sequentially garble a sender’s experiment. They provide geometric characterizations for abstract settings. We instead focus on the setting with a single experimenter (a single mediator in their framework) and posterior-mean preferences. We use this structure to obtain a sharp description of MIC restrictions and the informational trade-offs.

\citet{mylovanov2024constructive} compare sequential obfuscation and sequential disclosure. Sequential obfuscation corresponds to designing Blackwell-upper bounds on experiments, while sequential disclosure corresponds to designing Blackwell-lower bounds. Their analysis is tailored to conflict in debates, with the designer and the intermediary having opposing preferences over posterior means; this differs from our environment, where the designer values information.

Our framework also connects to the literature on attention management. In attention-management models, a receiver may rationally ignore a sender's information due to attention costs. Our model nests a class of such problems.\footnote{Namely, by identifying the designer with the sender, and the experimenter with the receiver. The sender has convex posterior-mean payoffs, and the receiver's costs are posterior-mean separable.} Within that class, our characterization implies that full disclosure is not optimal. A close precedent is an example due to \citet{lipnowski2020attention} that likewise emphasizes informational trade-offs across states.\footnote{Other related papers in this strand of literature are \citet{wei2021persuasion,BloedelSegal2021,lipnowski2022optimal,song2025robust,jain2026competitive,dallara2026persuadinginattentiveprivatelyinformed}.}

A related strand studies information design under exogenously given constraints on information structures, such as limits on the number of messages or monotonicity requirements; see \citet{le2019persuasion,babichenko2021bayesianpersuasionexante,ivanov2021optimal,mensch2021monotone,tsakas2021noisy,ball2022experimental,onuchic2023conveying,aybas2024persuasioncoarsecommunication,lyu2025coarseinformationdesign}.
We instead make the restriction itself the design object.

The persuasion literature has studied other reasons why information transmission may improve in environments with persuasion incentives.
\citet{tsakas2021resisting} ask whether the DM can gain by committing to burning money when taking certain actions.
\citet{curello2024comparativestaticspersuasion} ask which changes to the experimenter's payoffs induce the experimenter to choose a more informative experiment.\footnote{\citet{curello2024comparativestaticspersuasion} also provide a result concerning changes to the prior. We comment on how this result relates to our problem in \Cref{footnote:curello_sinander} further ahead.}
We instead focus on informational restrictions on admissible experiments and highlight that profitable restrictions necessarily induce experiments that are Blackwell-incomparable to unrestricted experimentation.

\citet{kleiner2021extreme} and \citet{kolotilin2025persuasion} establish equivalences between a class of delegation and persuasion problems. These equivalences do not apply to our setting since the designer delegates persuasion itself to the experimenter.

Finally, we contribute to the literature on information design subject to privacy constraints.
We explain in \Cref{sec:privacy_application} how our restrictions can capture both differential privacy constraints, as proposed by \citet{dwork2006calibrating} and widely used in practice, and privacy sets, as proposed by \citet{strack2024privacy}.\footnote{See also \citet{he2025privateprivateinformation} for the more distant notion of private private information.}
These privacy notions are motivated by an intrinsic interest in protecting individual-specific characteristics.
\citet{schmutte2025information} and \citet{pan2025differentially} study information design subject to differential privacy constraints.
In contrast, we study the optimal design of such restrictions by a designer who has no intrinsic preference for privacy but seeks to steer the incentives of a biased experimenter.
We show that privacy restrictions may arise nevertheless to protect individuals from persuasion.

\section{Model} \label{sec:model}

Our model features a designer, an experimenter, and a decision maker (DM). 

\paragraph*{State, outside option, and payoffs.}
There is a state $\omega \in [0,1]$ which is the realization of a random variable $\bm{\omega}$ with prior CDF $H$ and strictly positive, continuous density $h$. We denote the prior mean of the state by $\mu = \mathbb{E}_{H}[\bm{\omega}]$. 
There is also an outside option $r\in [0, 1]$ which is the realization of a random variable $\bm{r}$ with prior CDF $G$ and strictly positive, continuously differentiable density $g$.
The outside option is privately observed by the DM.

The DM decides whether or not to act, $a\in \{0,1\}$, where $a=0$ represents inaction and $a=1$ represents action. 
The DM's payoff from inaction is normalized to zero.
When the state is $\omega$ and the outside option is $r$, the DM's payoff from action is $\omega - r$.
Accordingly, when the DM's belief about the state has mean $m$, the DM takes the action if and only if $m \geq r$.\footnote{The DM's tie-breaking will be irrelevant as the CDF $G$ of the outside option is atomless.}

The experimenter's payoff from inaction is zero and that from action is one. Hence, the experimenter has a state-independent preference for action.

We describe the designer's payoffs in reduced form given the DM's posterior mean about the state: if the posterior mean is $m$, the designer's interim expected payoff equals $u_{D}(m)$, where the expectation is taken over the DM's outside option and action.
We assume that $u_{D}\colon [0, 1]\to\mathbb{R}$ is strictly convex and differentiable with a bounded derivative. Note that one interpretation of our model is for the designer and the DM to represent the same agent. In this case, we obtain $u_{D}(m) = \mathbb{E}_{G}[\max\lbrace 0, m - \bm{r}\rbrace]$.

We make two assumptions on the distribution of the outside option $G$ and the prior mean of the state $\mu$.
\begin{assumption}\label{assumption:S-shape}
The density $g$ is strictly quasiconcave with an interior maximum $r_{0} \in (0, 1)$.
Accordingly, the CDF $G$ is \emph{S-shaped}: strictly convex on $[0, r_{0}]$, and strictly concave on $[r_{0}, 1]$.
\end{assumption}
\begin{assumption}\label{assumption:informativeness}
    It holds $g(\mu) \mu > G(\mu)$.
\end{assumption}

\Cref{assumption:S-shape} is a common assumption in the persuasion literature \citep[see, for example, ][]{kolotilin2022censorship}. For us, it yields tractability in the experimenter's problem as described momentarily. We study more general preferences in \Cref{sec:discussion}.

\Cref{assumption:informativeness} states that the prior mean $\mu$ lies in the convex part of $G$ or close to the inflection point $r_{0}$ in the concave part. The assumption rules out trivial cases: if \Cref{assumption:informativeness} fails, no information transmission would arise; see \Cref{sec:unrestricted_persuasion}.

\paragraph*{Information.}
While the DM privately observes the realized outside option $r$, the DM does not observe the realized state $\omega$. Instead, the experimenter provides information about the state to the DM via an \emph{experiment}. Due to the assumed posterior-mean preferences, we may identify each experiment with its induced CDF $F$ of posterior means. 
It is convenient to consider \emph{integrated CDFs (ICDFs)} and \emph{mean-preserving contractions (MPCs)}.
Given a CDF $F$, we define its ICDF $I_{F}$ by $I_{F}(m) = \int_{0}^{m} F(m^{\prime}) \de m^{\prime}$ for all $m\in [0, 1]$.
A CDF $F$ is an MPC of a CDF $\bar{F}$ if and only if $I_{F} \leq I_{\bar{F}}$ and $I_{F}(1) = I_{\bar{F}}(1)$; analogously, $\bar{F}$ is a mean-preserving spread (MPS) of $F$.
As is well-known, the set of experiments coincides with the set of MPCs of the prior $H$, denoted $\MPC(H)$. 
We equip $\MPC(H)$ with the $L^{1}$-norm.

Let $\delta_{\mu}$ be the degenerate (uninformative) experiment, i.e., the Dirac measure on the prior mean $\mu$. 
Then, we can express $\MPC(H)$ as the set of CDFs $F$ satisfying $I_{\delta_{\mu}} \leq I_{F} \leq I_{H}$.
\Cref{fig:ICDF_example} depicts the ICDF representation and constraint.
\begin{figure}[t!]
    \centering
    \includegraphics[width=0.75\linewidth]{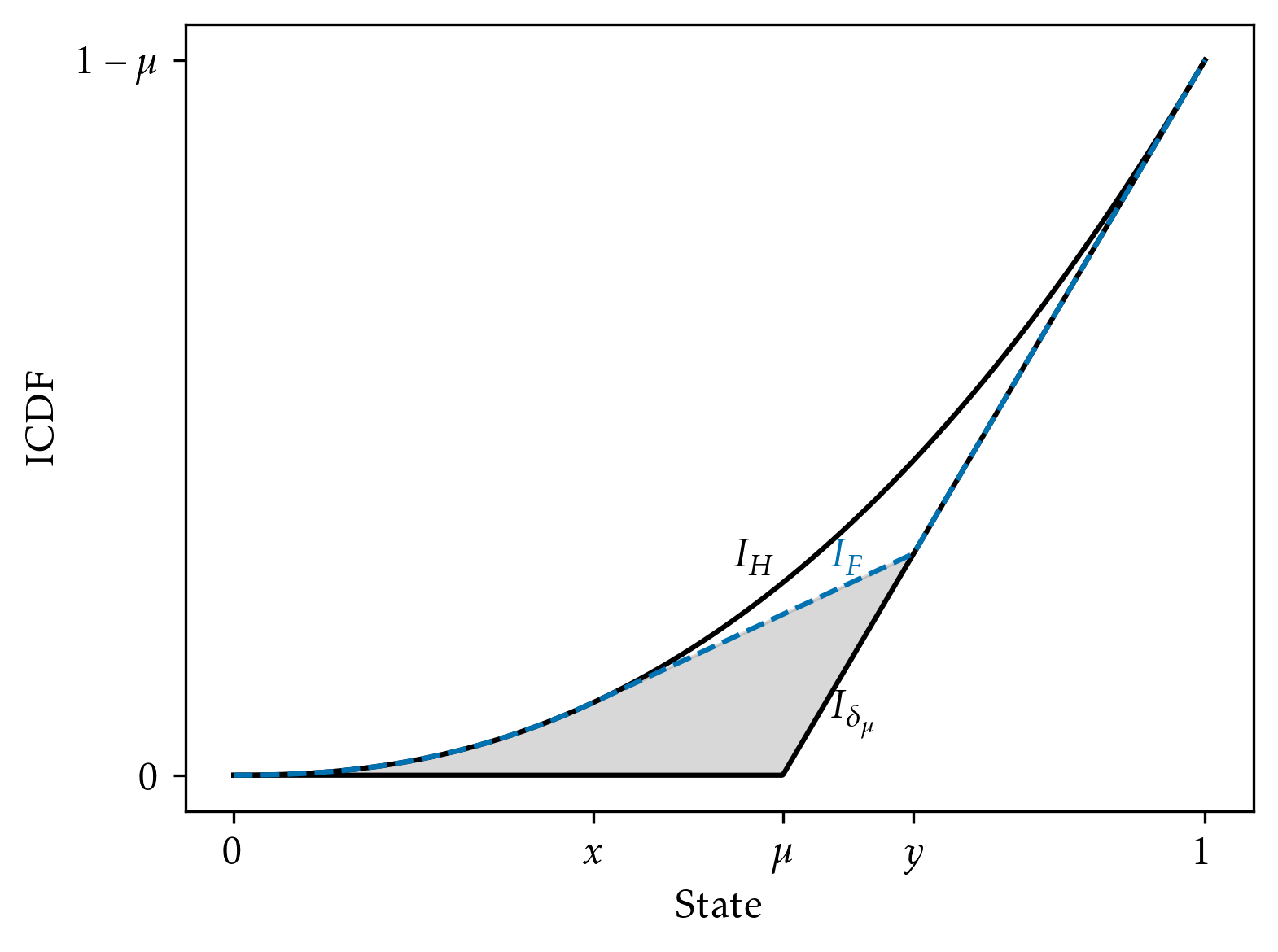}
    \caption{\emph{The ICDFs of the prior $H$, the degenerate experiment $\delta_{\mu}$, and an experiment $F$ that pools $H$ above a threshold $x$ to a point $y$, and else coincides with $H$.}}
    \label{fig:ICDF_example}
\end{figure}

\paragraph*{Restrictions.}
Before the experimenter chooses an experiment, the designer imposes a \emph{restriction} $\bar{F}\in\MPC(H)$: given restriction $\bar{F}$, the experimenter must choose an experiment from the set $\MPC(\bar{F})$ rather than from $\MPC(H)$.
Note, a restriction is an MPS-upper bound on admissible experiments, not an MPS-lower bound: the experimenter can always garble information.
We discuss this model of informational restrictions along with real-world analogues at the end of this section.
In \Cref{fig:ICDF_example}, if the restriction is $F$, the experimenter can choose among experiments whose ICDFs lie between $I_{F}$ and $I_{\delta_{\mu}}$, as indicated by the shaded area. For example, $F$ is admissible, but $H$ is not.

\paragraph*{Incentive-compatibility and optimality.}

Given a posterior mean $m$ and outside option $r$, the DM chooses to act ($a=1$) if and only if $m\geq r$. It follows from the experimenter's state-independent payoffs that the experimenter's interim expected payoff from inducing a posterior mean $m$ equals $G(m)$.
Importantly, this payoff is $S$-shaped (\Cref{assumption:S-shape}). 
The experimenter's ex-ante payoff from an experiment $F$ equals $\int G(m) \de F(m)$.

Given restriction $\bar{F}$, let $\BR(\bar{F})$ be the set of the experimenter's \emph{best replies}:\footnote{Note that $\BR(\bar{F})$ is non-empty as $\MPC(\bar{F})$ is compact and $F\mapsto \int G(m) \de F(m)$ is continuous.}
\begin{equation*}
    \BR(\bar{F}) = \argmax_{F\in\MPC(\bar{F})} \int G(m)\de F(m).
\end{equation*}
An alternative interpretation of $\MPC(\bar{F})$ is simply the set of all experiments when $\bar{F}$ is the underlying prior distribution of the state.
Accordingly, $F\in\BR(\bar{F})$ means that the experimenter finds $F$ optimal given full delegation and the fictitious prior $\bar{F}$.\footnote{Interpreting the restriction $\bar{F}$ as a fictitious prior, we explain how our analysis relates to a result of \citet{curello2024comparativestaticspersuasion}. Their Proposition 5 implies that there is no shift of the prior $\bar{F}$ that makes $\BR(\bar{F})$ more informative (in the weak set order) for all $S$-shaped $G$. This result has no direct bearing on our analysis since we fix $G$. Moreover, as we will highlight in \Cref{sec:mic-characterization}, optimizing over restrictions $\bar{F}$ entails optimizing over a certain set of incomparable experiments.\label{footnote:curello_sinander}}

We define the set of incentive-compatible experiments as those that are best replies to some restriction.
\begin{definition}
An experiment $F\in\MPC(H)$ is \emph{incentive compatible} if there exists a restriction $\bar{F}\in\MPC(H)$ such that $F\in\BR(\bar{F})$. We denote the set of incentive-compatible experiments by $\mathcal{F}^{\IC}$.
\end{definition}

A useful observation is that an experiment $F$ is incentive compatible if and only if $F\in\BR(F)$; that is, when $F$ is the imposed restriction, the experimenter finds it optimal not to garble the restriction. 
To see why, suppose $F\in\BR(\bar{F})$ for some restriction $\bar{F}$, implying that $F$ is optimal for the experimenter among the admissible experiments $\MPC(\bar{F})$. 
Since $F\in \MPC(\bar{F})$, all mean-preserving contraction of $F$ are admissible too, that is, $\MPC(F)\subseteq \MPC(\bar{F})$. 
Thus, imposing $F$ as a restriction instead of $\bar{F}$ possibly shrinks the set of admissible experiments while $F$ remains admissible. Naturally, $F$ remains optimal among $\MPC(F)$, implying $F\in\BR(F)$.

The designer's utility from an experiment $F$ equals $\int u_{D}(m) \de F(m)$.
An incentive-compatible experiment is \emph{optimal} if it maximizes the designer's utility across the set of incentive-compatible experiments $\mathcal{F}^{\IC}$.\footnote{This optimality notion assumes that the experimenter breaks ties in favor of the designer.}
An optimal incentive-compatible experiment exists; see \Cref{appendix:optimal_existence}.

\subsection{Examples of Restrictions}\label{sec:model:restriction_examples}

We illustrate through examples how our framework captures restrictions on real-world information provision. In these examples, we consider the experimenter's available information and possible communication more explicitly using the language of signals. A \emph{signal} consists of a measurable space $\mathcal{S}$ and a random variable $\pi\colon [0, 1]\times [0, 1] \to \mathcal{S}$; here, $\pi(\omega, z)$ represents the signal realization given the state $\omega$ and a uniformly distributed random variable $z$ that is independent of $\omega$. It is apparent how choices of signals and restrictions on them map back to our reduced-form notation in terms of distributions of the posterior means and mean-preserving contractions. For any signal $(\mathcal{S},\pi)$, let $F_{(\mathcal{S}, \pi)}$ be the induced distribution of posterior means. Any garbling of $(\mathcal{S}, \pi)$ induces a mean-preserving contraction of $F_{(\mathcal{S}, \pi)}$. 

\paragraph*{Privacy constraints.} An important class of restrictions arises through privacy notions studied in the literature and applied in practice \citep[see, for example,][for the notion of privacy sets and differential privacy, respectively]{strack2024privacy,dwork2006calibrating}. Such constraints define sets of signals that are closed with respect to garblings and are therefore captured by our model. We make this connection more explicit in \Cref{sec:privacy_application}.

    \paragraph{Restrictions on communication.} 
    Consider a model in which the designer restricts what the experimenter can reveal about the state. Suppose that the designer first chooses a set $\Pi$ of signals. Then, the experimenter can choose any signal from $\Pi$ or a Blackwell-garbling thereof. With the initial choice of $\Pi$, the designer effectively places an upper bound on what the experimenter can reveal about the state, even if the experimenter can observe the state perfectly. 
    Now, if the experimenter finds it optimal to induce an experiment $F$ via some signal in $\Pi$, then we must have $F\in \BR(F)$ since any mean-preserving contraction of $F$ can also be induced via some signal in $\Pi$.\footnote{See \Cref{lemma:ordered_experiments} in \Cref{appendix:ordered_experiments} for the details.}
    Thus, in our model the designer can implement $F$ by imposing $F$ as a restriction.
    This argument also shows that nothing is gained by imposing multiple restrictions, i.e. restricting the experimenter to $\bigcap_{\bar{F}\in\mathcal{F}} \MPC(\bar{F})$ for a collection $\mathcal{F}$ of restrictions. The experimenter will choose some $F$ satisfying $F\in\BR(F)$, meaning it suffices to impose $F$ directly.

    For example, suppose that the state is distributed uniformly on $[0,1]$. A legislator requires a three-category claim structure with $\mathcal{S} =\{i,sg,vg\}$ (innocent, somewhat guilty, and very guilty) and imposes the deterministic map 
    \begin{align}
        \pi(\omega)=\begin{cases}
            i, &\text{ if } \omega \in [0,1/3) \\
            sg, &\text{ if } \omega \in [1/3, 2/3) \\
            vg, &\text{ if } \omega \in [2/3, 1].
        \end{cases}
    \end{align}
    This corresponds to a restriction $\bar{F}=\frac{1}{3} \delta_{1/6} + \frac{1}{3} \delta_{1/2} + \frac{1}{3} \delta_{5/6}$. The prosecutor can however choose to communicate coarser by merging $sg$ and $vg$ into a joint guilty signal $g$. The resulting experiment $F$ would correspond to $F=\frac{1}{3} \delta_{1/6}+\frac{2}{3} \delta_{2/3}$.\footnote{Note that the experimenter is not restricted to a weakly lower number of messages in our environment. The experimenter is permitted to use an arbitrary amount of messages, but these are still restricted to be a mean-preserving contraction of $\bar{F}$. For example, the prosecutor can create further subcategories of the guilt level by post-processing the maximal signals $\pi(\omega)$ given $\omega$. However, the prosecutor is cannot reveal more information than by directly reported $\pi(\omega)$. Post-trial audits or preventing certain types of evidence can be used to prevent such claims.}

    \paragraph{Restrictions on measurement.} Consider a model in which the designer restricts what the experimenter can measure about the state. Suppose that the designer can choose to reveal only partial information about the state to the experimenter, that is, the designer chooses a signal $([0, 1], \pi)$ in which the signal space coincides with the state space $[0, 1]$. The experimenter then chooses an arbitrary signal $(\mathcal{S}^{\prime}, \pi^{\prime})$ that defines what the DM will eventually observes. However, the input into the experimenter's signal is not the realized state $\omega$ but the signal realization that the experimenter receives through the designer's signal. 
    The experimenter can then induce any experiment that is a mean-preserving contraction of the experiment $\bar{F}$ that is induced by $([0, 1], \pi)$. Thus, in our model, this situation is captured by imposing $\bar{F}$ as a restriction.

    For example, a regulator specifies the measurement technology that a producer must use to estimate a product's safety. Suppose that the underlying state (the product's safety) $\omega$ is uniformly distributed on $[0,1]$. Then, one restriction is that the measurement technology can only produce binary outcomes $\pi \in\{0,1\}$ (safe or not safe) with $\Pr(\bm{\pi} =1\mid \omega)=\omega$ and $\Pr(\bm{\pi}=0 \mid \omega)=1-\omega$. Thus, the regulator's restriction corresponds to $\bar{F}=\frac{1}{2}\delta_{1/3}+ \frac{1}{2} \delta_{2/3}$. The producer can then process these estimated states further. For instance, the producer can hide the test result with probability $1-\alpha$ and truthfully reveal it with probability $\alpha$. This induces the experiment $F=(1-\alpha)\delta_{1/2}+\frac{\alpha}{2}\delta_{1/3}+\frac{\alpha}{2}\delta_{2/3}$.

\section{Benchmark: Full Delegation}\label{sec:unrestricted_persuasion}
In this section, we discuss the benchmark in which the designer fully delegates the experiment choice to the experimenter, that is, when $\bar{F}=H$. Hence, there is no restriction in place and the experimenter chooses among all mean-preserving contractions of the prior $H$, $\MPC(H)$.  
The results in \citet{kolotilin2022censorship} imply that upper censorship of the prior is the unique best reply of the experimenter: the experimenter reveals all states below a threshold $x$, and pools all remaining states to an atom $y=\mathbb{E}_H[\bm{\omega} \mid \bm{\omega} \geq x]$.
The dashed blue experiment $F$ depicted in \Cref{fig:ICDF_example} is an instance of an upper censorship experiment.

\begin{definition}\label{def:upper_censorship}
An experiment $F$ is \emph{upper censorship} (of the prior $H$) with threshold $x$ and atom $y$ if $F$ coincides with the prior $H$ on the interval $[0, x]$ and assigns mass $1 - H(x)$ to the point $y=\mathbb{E}_H[\bm{\omega} \mid \bm{\omega} \geq x]$.
\end{definition}

We restate the characterization from \citet{kolotilin2022censorship} of the experimenter's best reply.

\begin{proposition}[\citet{kolotilin2022censorship}]\label{prop:unrestricted_persuasion}
There is a unique best reply $F^\ast$ to $H$. In particular, $F^\ast$ is an upper censorship experiment with a threshold $x^\ast$ and an atom $y^\ast = \mathbb{E}_H[\bm{\omega} \mid \bm{\omega} \geq x^{\ast}]$ 
satisfying $0 < x^{\ast} < r_{0} < y^{\ast}< 1$ and
    \begin{equation}\label{eq:unrestricted_persuasion_FOC}
        G(x^{\ast}) + g(y^{\ast})(y^{\ast} - x^{\ast}) = G(y^{\ast}).
    \end{equation}
\end{proposition}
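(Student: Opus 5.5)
We reduce the problem to a one-dimensional maximization over the censorship threshold and then rule out every other experiment by exhibiting a concave-closure certificate. The natural route is linear duality in the style of \citet{kolotilin2022censorship}. An experiment $F\in\MPC(H)$ is optimal if there is a convex function $p$ with $p\ge G$ on $[0,1]$ such that:
\begin{itemize}
\item $p=G$ on $\supp F$;
\item $I_F=I_H$ wherever $p$ is strictly convex, i.e., on the support of the measure $p''$.
\end{itemize}
This follows from $\int G\,\de F'\le \int p\,\de F'\le \int p\,\de H$ for every $F'\in\MPC(H)$, with equality for $F$.

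\textbf{Step 1: the first-order condition.} Restrict attention to upper censorship experiments with threshold $x$ and atom $y(x)=\mathbb{E}_H[\bm{\omega}\mid\bm{\omega}\ge x]$. The experimenter's payoff is
\[
V(x)=\int_0^x G\,\de H+(1-H(x))\,G(y(x)).
\]
Using $y'(x)=h(x)(y-x)/(1-H(x))$, we get
\[
V'(x)=h(x)\bigl[G(x)-G(y)+g(y)(y-x)\bigr].
\]
Hence any interior optimum satisfies \eqref{eq:unrestricted_persuasion_FOC}, which is the tangency condition: the line through $(x,G(x))$ is tangent to $G$ at $y$.

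\textbf{Step 2: location of $x^\ast$ and $y^\ast$.} Since $G$ is S-shaped, a tangent from a point $x<r_0$ touches $G$ at some $y>r_0$, and any solution must have $x<r_0<y$. To show the maximizer is interior, note first that $y(0)=\mu$. Then
\[
V'(0^+)=h(0)\bigl[G(0)-G(\mu)+g(\mu)\mu\bigr]=h(0)\bigl[g(\mu)\mu-G(\mu)\bigr]>0
\]
by \Cref{assumption:informativeness}, since $G(0)=0$. At the other end, as $x\to1$ we have $y\to1$, and the bracket behaves like $G(x)-G(y)+g(y)(y-x)<0$ on the concave part of $G$. So the maximum of $V$ is attained in the interior. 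Finally, $y^\ast<1$ because $x^\ast<1$ and $h>0$.

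\textbf{Step 3: optimality among all experiments.} Given $x^\ast$ and $y^\ast$ from Step 2, define
\[
p=G \text{ on } [0,x^\ast],\qquad p \text{ equal to the tangent line } \ell(m)=G(y^\ast)+g(y^\ast)(m-y^\ast) \text{ on } [x^\ast,y^\ast],\qquad p=G \text{ on } [y^\ast,1].
\]
We need $p$ convex with $p\ge G$. Convexity on $[0,x^\ast]$ holds because $G$ is convex there. At $x^\ast$ the line has slope $g(y^\ast)$, which must be at least $g(x^\ast)$. Beyond $y^\ast$, $G$ is concave, so $p$ there is the concave piece $G$. This shows $p$ is not globally convex, so the argument must be repaired.

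The repair is to take $p=\ell$ on all of $[x^\ast,1]$. This function is convex provided $g(x^\ast)\le g(y^\ast)$, and it dominates $G$ on $[x^\ast,1]$ because it is the tangent to the concave part, and on $[x^\ast,r_0]$ the chord-type tangency ensures $\ell\ge G$. The strict convexity of $p$ occurs only on $[0,x^\ast]$, where $F^\ast=H$, so $I_{F^\ast}=I_H$ there. On $[x^\ast,1]$ the function $p$ is affine, so mean preservation of the atom gives $\int p\,\de F^\ast=\int p\,\de H$. The contact condition $p=G$ on $\supp F^\ast=[0,x^\ast]\cup\{y^\ast\}$ holds.

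The \textbf{main obstacle} is verifying $g(x^\ast)\le g(y^\ast)$ together with $\ell\ge G$ on $[x^\ast,r_0]$. I would derive both from $V$ being maximized at $x^\ast$ rather than merely stationary: a local second-order argument, or the fact that $x\mapsto G(x)+g(y)(y-x)-G(y)$ has the right sign change.

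\textbf{Step 4: uniqueness.} Any best reply $F$ must satisfy $\int(p-G)\,\de F=0$, so $\supp F\subseteq[0,x^\ast]\cup\{y^\ast\}$, using that $\ell>G$ strictly off $y^\ast$ on $(x^\ast,1]$ by strict concavity and convexity. In addition, $I_F=I_H$ wherever $p$ is strictly convex, i.e., on $[0,x^\ast]$ by \Cref{assumption:S-shape}. Thus $F=H$ on $[0,x^\ast]$. The remaining mass $1-H(x^\ast)$ sits at $y^\ast$, and mean preservation forces $y^\ast=\mathbb{E}_H[\bm{\omega}\mid\bm{\omega}\ge x^\ast]$. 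Therefore $F=F^\ast$.

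If several thresholds solved the first-order condition, the global dual certificate built at the maximizer would exclude the others.
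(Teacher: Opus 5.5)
Your proof is correct and uses the standard price-function verification. The paper does not prove this proposition itself; it restates \citet{kolotilin2022censorship}. Your repaired certificate $p$, equal to $G$ on $[0,x^\ast]$ and to the tangent line $\ell$ on $[x^\ast,1]$, is exactly the price function the paper builds in the proofs of \Cref{lemma:s_shaped_IC} and \Cref{cor:double_censorship_implementation}. So your route matches the paper's machinery.

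The step you flag as the main obstacle needs only the first-order condition and $x^\ast<r_0<y^\ast$, not maximality:
\begin{itemize}
\item $\ell(x^\ast)=G(x^\ast)$ by \eqref{eq:unrestricted_persuasion_FOC}.
\item $\ell(r_0)>G(r_0)$, because $\ell$ is tangent to the strictly concave part of $G$ at $y^\ast\neq r_0$.
\item $G$ is convex on $[x^\ast,r_0]$, so it lies below its chord there, and that chord lies below $\ell$.
\end{itemize}
Hence $\ell\ge G$ on $[x^\ast,r_0]$, strictly on $(x^\ast,r_0]$, which is exactly what your Step~4 needs. The inequality $g(y^\ast)\ge g(x^\ast)$ then follows because the right derivative of $\ell-G\ge 0$ at its zero $x^\ast$ is nonnegative. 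This is your second suggestion made precise, since $x\mapsto G(x)+g(y^\ast)(y^\ast-x)-G(y^\ast)$ is just $G-\ell$.

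Your first suggestion, a second-order condition, would not deliver this. At a zero of the bracket, $V''(x^\ast)=h(x^\ast)\bigl[g(x^\ast)-g(y^\ast)+g'(y^\ast)(y^\ast-x^\ast)y'(x^\ast)\bigr]$. The last term is nonpositive, so $V''(x^\ast)\le 0$ is compatible with $g(x^\ast)>g(y^\ast)$.

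Because the certificate is valid at every stationary point, Step~2 only needs some zero of $V'$ in $(0,r_0)$. The intermediate value theorem gives one, since $V'(0)>0$ and $V'<0$ on $[r_0,1)$. Your uniqueness argument in Step~4 then implies this zero is unique.
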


We recall the intuition from \citet{kolotilin2022censorship}.
The atom $y^{\ast}$ lies in the concave region of the experimenter's payoff $G$. Due to this concavity, the experimenter optimally pools some states around $y^{\ast}$ to $y^{\ast}$. As pooling additional states into $y^{\ast}$ raises the probability of the pooling outcome, the experimenter also adds some states from the convex part into the pooling region.
\Cref{eq:unrestricted_persuasion_FOC} is the first-order condition that characterizes the optimal pooling region, trading off the value of revelation in the convex region against the likelihood of the pooling outcome. \Cref{fig:price_FOC} depicts this optimality condition and illustrates the optimal upper censorship experiment.

\begin{figure}[ht]
    \centering
    \includegraphics[width=0.75\linewidth]{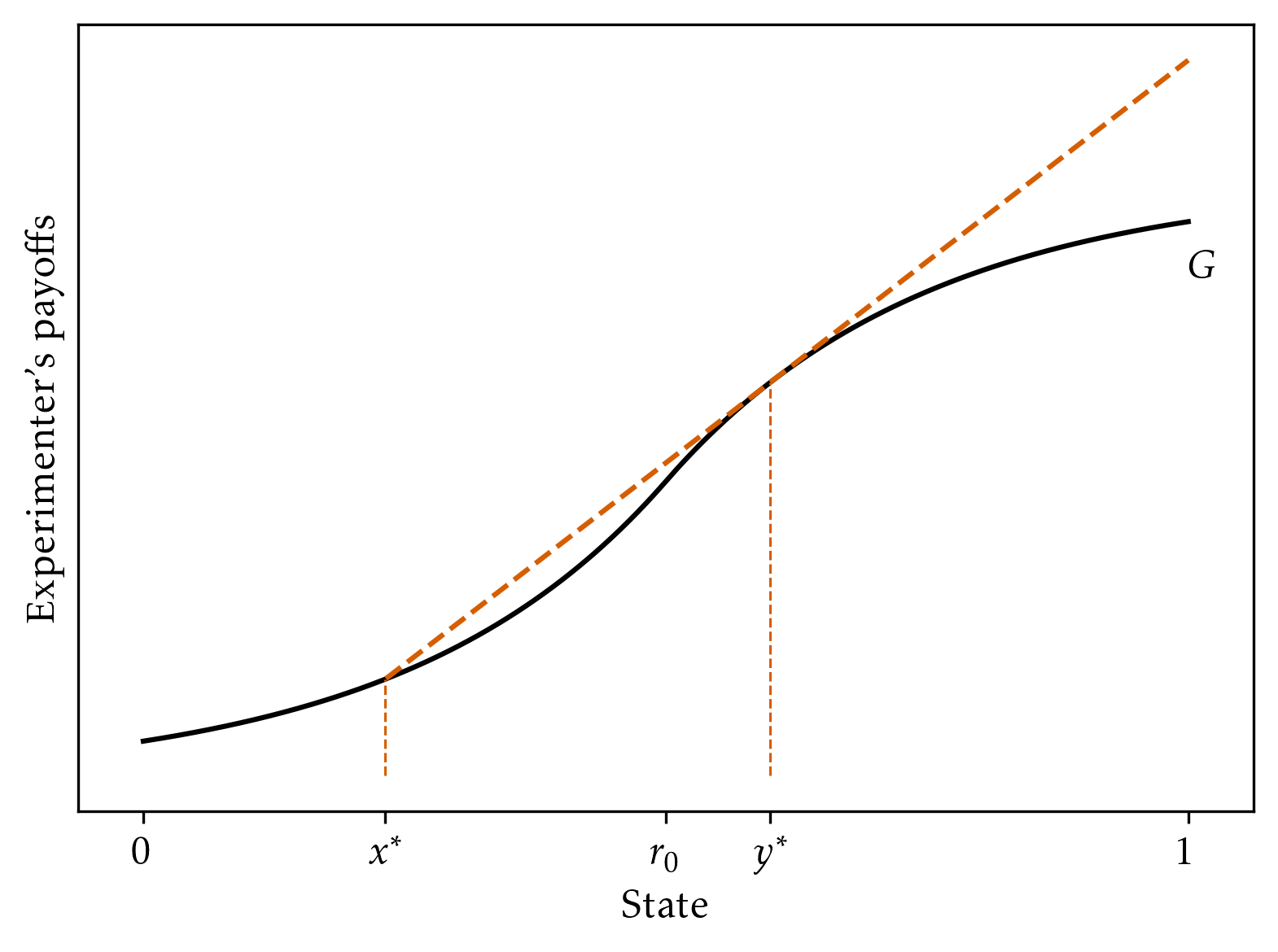}
    \caption{\emph{\Cref{eq:unrestricted_persuasion_FOC} defines the optimal upper censorship experiment, illustrated by the red dotted line. The tangent to $G$ at $y^{\ast}$ intersects $G$ at $x^{\ast}$. Since $x^{\ast} < r_{0}$, some states in the convex region of the experimenter's payoff are also pooled into the high atom $y^{\ast}$.}}
    \label{fig:price_FOC}
\end{figure}

Importantly, the threshold $x^{\ast}$ is interior: the experimenter reveals a non-trivial interval of states. The interiority of $x^{\ast}$ follows from \Cref{assumption:informativeness}.
If this assumption fails, the results in \citet{kolotilin2022censorship} imply that the uninformative experiment $\delta_{\mu}$ is the unique best reply to $H$.

When the designer can pick a restriction, \Cref{assumption:informativeness} remains necessary for information transmission. Trivially, $\delta_{\mu}$ is admissible under all restrictions and imposing a restriction only shrinks the set of experiments from which the experimenter can choose. Thus, if the experimenter's best reply to the prior $H$ is unique and uninformative---i.e. $\delta_{\mu}=\BR(H)$---, then $\delta_{\mu}$ is also the unique best reply to all restrictions. For this reason, we maintain \Cref{assumption:informativeness}.

The optimal upper censorship experiment provides the relevant benchmark for the designer. 
The designer can always guarantee the payoff from the optimal censorship experiment by full delegation, not restricting the set of admissible experiments. 
However, under upper censorship a significant set of states is pooled into a point and, hence, substantial information is lost. We next study which alternative experiments the designer can induce by introducing non-trivial delegation sets.

\section{Maximal Incentive-Compatible Experiments} \label{sec:mic-characterization}

In this section, we first show that it suffices to consider a subset of incentive-compatible experiments that we call \emph{maximal}. Our first main result characterizes the entire set of maximal incentive-compatible experiments.

\subsection{Motivation and Definition}\label{sec:MIC:motivation_and_definition}
Incentive-compatible experiments represent all experiments that the designer can implement with some delegation set. However, some incentive-compatible experiments are clearly suboptimal. For example, the uninformative experiment $\delta_{\mu}$ is straightforwardly incentive-compatible: when the designer imposes $\delta_{\mu}$ as a restriction, the delegation set contains only the uninformative experiment $\delta_\mu$ and the experimenter can only choose $\delta_{\mu}$.
However, the designer never finds it optimal to implement $\delta_{\mu}$ since $\delta_{\mu}$ is strictly less informative than the full delegation outcome $F^{\ast}$, and since the designer has strictly convex payoffs.

This observation suggests to refine the candidate set of optimal experiments by considering only those incentive-compatible experiments that are not mean-preserving contractions of other incentive-compatible experiments.
We next formalize this refinement and show that it is indeed without loss for optimality.

\begin{definition}
An incentive-compatible experiment $F$ is \emph{maximal} (MIC) if there does not exist an incentive-compatible experiment $\tilde{F}$ of which $F$ is a proper mean-preserving contraction; that is, there does not exist an incentive-compatible experiment $\tilde{F}$ such that $F\in\MPC(\tilde{F})$ and $\tilde{F}\notin \MPC(F)$.
\end{definition}

Economically, the set of MIC experiments is the relevant boundary of what the designer can implement.
An MIC experiment represents a restriction that the experimenter has no incentive to garble, whereas the experimenter would garble every strictly more informative restriction.

The full delegation outcome $F^\ast$ is an example of an MIC experiment.
To see why, let $F$ be an incentive-compatible experiment of which $F^\ast$ is a mean-preserving contraction. On the one hand, the assumption that $F$ is incentive compatible and a mean-preserving spread of $F^{\ast}$ implies that the experimenter weakly prefers $F$ to $F^{\ast}$. On the other hand, \Cref{prop:unrestricted_persuasion} asserts that the experimenter strictly prefers $F^\ast$ over all other experiments. 
Hence, the experiments $F$ and $F^\ast$ must be identical, implying that $F^\ast$ is MIC.\footnote{In environments in which the experimenter's best-reply set under full delegation, $\BR(H)$, is multi-valued, all designer-preferred experiments in $\BR(H)$ must be MIC if the designer's posterior-mean payoffs are strictly convex.}

The following lemma justifies restricting attention to MIC experiments.
\begin{lemma}\label{lemma:MIC_wlog}
    For all incentive-compatible experiments $F$ there exists a maximal incentive-compatible experiment $\hat{F}$ that is a mean-preserving spread of $F$.
\end{lemma}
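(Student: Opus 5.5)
The plan is a Zorn's lemma argument on the set of incentive-compatible mean-preserving spreads of $F$, ordered by the MPS order. Fix an incentive-compatible $F$ and let
\[
\mathcal{S}_F = \{\tilde F \in \mathcal{F}^{\IC} : F \in \MPC(\tilde F)\}.
\]
This set is nonempty because it contains $F$. On $\mathcal{S}_F$ the relation $\tilde F \preceq \hat F$ iff $\tilde F\in\MPC(\hat F)$ is a partial order. In ICDF terms it reads $I_{\tilde F}\le I_{\hat F}$ pointwise, since all ICDFs in $\MPC(H)$ share the value $I_H(1)$ at $1$; antisymmetry holds because the ICDF determines the CDF. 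A maximal element of $(\mathcal{S}_F,\preceq)$ is exactly an MIC experiment that is a mean-preserving spread of $F$. Indeed, if $\hat F$ is maximal in $\mathcal S_F$ and $\tilde F$ is incentive compatible with $\hat F\in\MPC(\tilde F)$, then by transitivity $F\in\MPC(\tilde F)$, so $\tilde F\in\mathcal S_F$ and hence $\tilde F=\hat F$. It therefore suffices to show that every chain in $\mathcal S_F$ has an upper bound in $\mathcal S_F$.

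Take a chain $\mathcal C\subseteq\mathcal S_F$ and set $I^\dagger=\sup_{\tilde F\in\mathcal C} I_{\tilde F}$ pointwise. Each $I_{\tilde F}$ is convex, 1-Lipschitz, and lies between $I_{\delta_\mu}$ and $I_H$. The supremum of a pointwise-increasing directed family of such functions keeps all these properties, and the family converges to $I^\dagger$ uniformly on $[0,1]$ by Dini/Arzelà–Ascoli. So $I^\dagger=I_{F^\dagger}$ for some $F^\dagger\in\MPC(H)$. Because the chain is totally ordered, we can pick a sequence $F_n$ in $\mathcal C$, increasing in $\preceq$, with $I_{F_n}\to I_{F^\dagger}$ uniformly; this is possible since $[0,1]$ is separable and the ICDFs are equicontinuous. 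Then $F_n\to F^\dagger$ in $L^1$, equivalently weakly, because $L^1$ distance between CDFs controls and is controlled by ICDF differences here. Clearly $\tilde F\preceq F^\dagger$ for every $\tilde F\in\mathcal C$, and $F\preceq F^\dagger$.

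\textbf{Main step: show $F^\dagger$ is incentive compatible, i.e.\ $F^\dagger\in\BR(F^\dagger)$.} Let $K\in\MPC(F^\dagger)$ be arbitrary; we need $\int G\,\de K\le\int G\,\de F^\dagger$. The difficulty is that $K$ need not lie in $\MPC(F_n)$, so we approximate it from inside. The first attempt uses the closedness of the best-reply correspondence (Berge's maximum theorem). The map $\bar F\mapsto \MPC(\bar F)$ is continuous in the $L^1$ topology: upper hemicontinuity comes from closedness of the ICDF inequalities, and lower hemicontinuity needs a construction. Given $K$ with $I_K\le I_{F^\dagger}$, define $K_n$ through the ICDF $I_{K_n}=\max\{I_{\delta_\mu},\, \text{convex minorant of }\min(I_K,I_{F_n})\}$. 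This is a valid ICDF in $\MPC(F_n)$, and it converges uniformly to $I_K$ because $I_{F_n}\to I_{F^\dagger}\ge I_K$ uniformly. Since $F_n\in\BR(F_n)$ and $K_n\in\MPC(F_n)$, we have $\int G\,\de K_n\le\int G\,\de F_n$. Passing to the limit, using that $G$ is continuous and bounded and that convergence is weak, gives $\int G\,\de K\le \int G\,\de F^\dagger$. I expect the lower-hemicontinuity construction to be the only delicate point: one must check that the convex minorant stays above $I_{\delta_\mu}$ and attains the correct value at $1$. If that becomes messy, an alternative is a simpler scaling, taking $I_{K_n}=I_{\delta_\mu}+(1-\varepsilon_n)(I_K-I_{\delta_\mu})$ with $\varepsilon_n$ chosen so that this lies below $I_{F_n}$. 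That choice only works where $I_{F^\dagger}-I_{\delta_\mu}$ dominates, so the convex-minorant route is the main plan.

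Hence $F^\dagger\in\mathcal S_F$ is an upper bound for the chain $\mathcal C$. Zorn's lemma then yields a maximal element $\hat F$ of $\mathcal S_F$, which by the first paragraph is an MIC mean-preserving spread of $F$.
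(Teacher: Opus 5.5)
Your proof is correct, but it takes a different top-level route from the paper.

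\textbf{The paper's route.} It avoids Zorn's lemma. It relies on \Cref{lemma:optimal_existence}: the set of incentive-compatible experiments is compact in $L^{1}$, which the paper proves via Berge's theorem and lower hemicontinuity of $\bar F\mapsto\MPC(\bar F)$. Hence the incentive-compatible mean-preserving spreads of $F$ form a compact set. The paper then maximizes the strictly convex functional $\tilde F\mapsto\int\exp(m)\,\de\tilde F(m)$ over this set. Any maximizer is MIC: a proper incentive-compatible MPS of the maximizer would also be an MPS of $F$ and would strictly raise the functional.

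\textbf{Your route.} You verify the chain condition directly. You reduce an arbitrary chain to an increasing sequence, identify its limit through the pointwise supremum of the ICDFs, and show that this limit is incentive compatible.

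\textbf{Comparison.} That last step is exactly the closedness argument behind \Cref{lemma:optimal_existence}. Your lower-hemicontinuity construction, the convex minorant of $\min\{I_K,I_{F_n}\}$, is the same one the paper uses there. So the analytic core coincides, and the two proofs differ only in how a maximal element is extracted.
\begin{itemize}
\item The strictly convex selection functional acts as a strict monotone ``utility'' for the MPS order. Once compactness is available, it replaces both Zorn's lemma and your chain-to-sequence reduction in two lines. The paper needs that compactness anyway to show an optimal restriction exists.
\item Your argument formally needs only closure of the incentive-compatible set under $\preceq$-increasing limits. In practice, though, you prove it with the same general closedness argument, so little is gained.
\end{itemize}

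\textbf{The step you flagged as delicate.} It is immediate. The lower convex envelope of $\min\{I_K,I_{F_n}\}$ lies above the convex function $I_{\delta_\mu}$, because $I_{\delta_\mu}$ is a convex minorant of both $I_K$ and $I_{F_n}$. It also agrees with $\min\{I_K,I_{F_n}\}$ at the endpoints $0$ and $1$. So taking the maximum with $I_{\delta_\mu}$ is redundant, and the correct value at $1$ holds automatically.
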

\Cref{lemma:MIC_wlog} implies that it is without loss to focus on MIC experiments when searching for restrictions delivering optimal delegation sets. The MIC experiment $\hat{F}$ is incentive compatible. Hence, the experimenter will optimally choose the experiment $\hat{F}$ when the designer's restriction is $\hat{F}$. Further, since $\hat{F}$ is a mean-preserving spread of $F$, the designer and DM, who have convex posterior-mean payoffs, prefer $\hat{F}$ over $F$.

To gain intuition for the existence argument, suppose $F$ is not MIC. By definition, this means there is an incentive-compatible experiment that is a mean-preserving spread of $F$. If this experiment is also not MIC, there is a further incentive compatible mean-preserving spread, and so on. A continuity-compactness argument shows that this reasoning delivers an MIC experiment $\hat{F}$ that is a mean-preserving spread of $F$.

The notion and optimality of MIC experiments highlight how the designer may potentially benefit from imposing a restriction despite having a preference for information.
All MIC experiments are incomparable to one another, by definition of maximality.
Since MIC experiments are without loss, the designer thus trades-off informativeness about different regions of the state space when choosing among MIC experiments.
In particular, since the full delegation outcome $F^{\ast}$ is also an MIC experiment, the candidate set of experiments is incomparable to $F^{\ast}$.

\begin{remark}\label{remark:pareto_dominance}
    The experiment $\hat{F}$ in \Cref{lemma:MIC_wlog} Pareto dominates $F$. The DM and the designer with convex payoffs prefer $\hat{F}$ to $F$ since $F\in\MPC(\hat{F})$. Turning to the experimenter, incentive compatibility of $\hat{F}$ implies that the experimenter finds $\hat{F}$ optimal when $\hat{F}$ is given as a restriction. In particular, since $F\in\MPC(\hat{F})$, the experimenter prefers $\hat{F}$ to $F$.
\end{remark}

\subsection{Double Censorship}
In our main characterization result, we show that all MIC experiments take the form of \emph{double censorship} experiments. We introduce them formally in the definition below. 
Intuitively, a double-censorship experiment divides the state space into three intervals: a fully revealing interval, an intermediate pooling interval, and a high pooling interval. 
\Cref{fig:double_censorship_ICDF} depicts the ICDF of a double-censorship experiment.

\begin{definition}
An experiment $F$ is \emph{double censorship} with thresholds $(s, t)$ and atoms $(x, y)$ if $s \leq t$ and $x = \mathbb{E}_{H}[\bm{\omega}\mid \bm{\omega} \in [s, t]]$ and $y = \mathbb{E}_{H}[\bm{\omega}\mid \bm{\omega} \in [t, 1]]$ and
\begin{equation*}
    \forall m\in [0, 1],\quad
    F(m) = 
    \begin{cases}
        H(m)\quad&\mbox{if } m\in [0, s),\\
        H(s)\quad&\mbox{if } m\in [s, x),\\
        H(t)\quad&\mbox{if } m\in [x, y),\\
        1\quad&\mbox{if } m\in [y, 1].
    \end{cases}
\end{equation*}
\end{definition}
Each upper censorship experiment, including the full-delegation outcome $F^\ast$, is a special double censorship experiment with $s = x = t$, and such that all states below $x$ are fully revealed while all states above $x$ are pooled to $y=\mathbb{E}_H[\bm{\omega}\mid \bm{\omega} \in [x, 1]]$.

\begin{figure}[t!]
    \centering
    \includegraphics[width=0.75\linewidth]{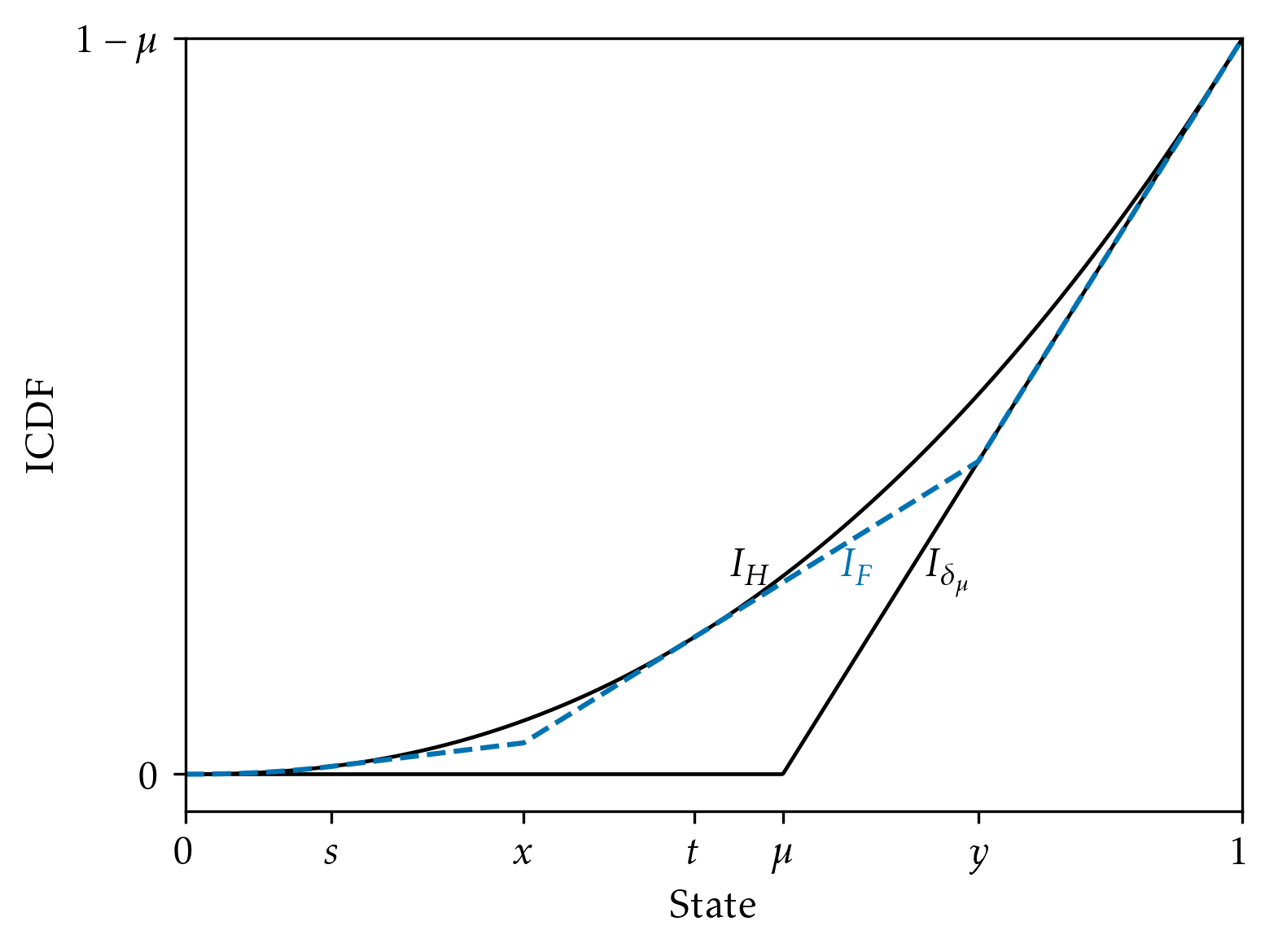}
    \caption{\emph{The ICDF of double censorship $F$ with thresholds $(s, t)$ and atoms $(x, y)$.}}
    \label{fig:double_censorship_ICDF}
\end{figure}

\subsection{Characterization}\label{sec:mic-characterization:characterization}

Our first main result shows that all MIC experiments take the form of double censorship. To do so, we introduce the set $P$, which relates the two atoms of a double censorship experiment to each other through a tangency condition similar to the one in \Cref{prop:unrestricted_persuasion} and depicted in \Cref{fig:price_FOC}. 
We will see that this tangency condition characterizes incentive-compatibility of double censorship experiments.
Formally, let
\begin{equation}\label{eq:P_def}
    P = \lbrace (x, y)\colon 0 \leq x \leq r_{0} \leq y\leq 1 \mbox{ and } G(x) + g(y)(y - x) = G(y)\rbrace.
\end{equation}

\begin{theorem}\label{thm:MIC_characterization}
    An experiment $F$ is MIC if and only if $F$ is a double censorship experiment with thresholds $(s, t)$ and atoms $(x, y)$ satisfying
    \begin{equation*}
        (x, y) \in P \quad\mbox{and}\quad  s \leq x^{\ast} \leq t 
        \quad\mbox{and}\quad 0 < x < y < 1
        ,
    \end{equation*}
    where $x^\ast$ denotes the threshold of the upper censorship experiment under full delegation.
\end{theorem}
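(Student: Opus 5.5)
The proof will run through a characterization of incentive compatibility via linear duality for persuasion with a fictitious prior. By the observation after the definition, $F$ is incentive compatible iff $F\in\BR(F)$, i.e.\ iff $F$ solves the persuasion problem with prior $F$. By the standard duality for posterior-mean persuasion (as in \citet{kolotilin2022censorship}), $F\in\BR(F)$ iff there is a convex function $p\ge G$ on $[0,1]$ such that $p=G$ on $\supp F$ and $p$ is affine on every interval where $I_F$ and $I_F$'s "contraction" bind. Equivalently, the concavification-type condition is that $\int G\,\de F=\int p\,\de F$ with $p$ convex, $p\geq G$.

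\textbf{Step 1 (IC double censorship).} For $S$-shaped $G$, a convex $p\ge G$ touching $G$ on a set must equal $G$ on a region $[0,a]$ inside the convex part, be affine beyond, and be tangent to $G$ at some $y\ge r_0$. Hence the support of an IC experiment can contain at most one point in the concave region, namely $y$, plus points in $[0,x]$ where $x$ is the point where the tangent at $y$ meets $G$ (the tangency condition defining $P$). I will check directly that a double censorship with $(x,y)\in P$ is IC: take $p=G$ on $[0,x]$ and $p$ equal to the tangent line at $y$ on $[x,1]$; this is convex and $\ge G$ because $G$ is convex below $r_0$ and concave above it, and it touches $G$ on $\supp F\subseteq[0,x]\cup\{y\}$. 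Then $\int G\,\de F=\int p\,\de F\ge\int p\,\de F'\ge\int G\,\de F'$ for every $F'\in\MPC(F)$.

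\textbf{Step 2 (necessity of the form).} Let $F$ be MIC. By Step 1's duality, $\supp F\subseteq[0,x]\cup\{y\}$ with $(x,y)\in P$ (if $F$ has no mass above $r_0$ the dual is $G$ itself on a convex region, and one shows $F$ can be spread further, contradicting maximality). Maximality then forces $F$ to be as informative as possible subject to this support restriction. I will argue, through ICDF manipulations, that on $[0,x]$ the experiment must reveal $H$ whenever possible. The maximal MPS of $F$ within $\MPC(H)$ supported on $[0,x]\cup\{y\}$ is obtained by revealing a lower interval $[0,s)$, pooling $[t,1]$ to $y$ (forced, since the only support point above $x$ is $y$ and mass at $y$ must come from a top interval by maximality), and pooling $[s,t]$ to the atom $x$, the endpoint of the support. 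Any remaining slack in the ICDF constraint $I_F\le I_H$ could be used to spread mass within $[0,x]$ while keeping IC, contradicting maximality. This yields double censorship with atoms $(x,y)$ and the mean conditions defining $s,t$.

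\textbf{Step 3 (the constraints $s\le x^\ast\le t$, $0<x<y<1$, and sufficiency of maximality).} Along $P$, $y$ decreases as $x$ increases. Given $(x,y)\in P$, $t$ is pinned by $y=\mathbb{E}_H[\bm\omega\mid\bm\omega\ge t]$ and $s$ by $x=\mathbb{E}_H[\bm\omega\mid\bm\omega\in[s,t]]$. Comparing with $(x^\ast,y^\ast)$: if $y\le y^\ast$ then $x\ge x^\ast$, $t\le x^\ast$... so I expect monotonicity to give $s\le x^\ast\le t$ exactly when the construction is feasible ($s\le x\le t$). Degenerate endpoints $x=0$ or $y=1$ give experiments that are MPCs of others on the family, which I will exclude. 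For sufficiency, suppose $\tilde F$ is IC and a proper MPS of a double censorship $F$ of this form. Then $\tilde F$ is itself double censorship with some $(\tilde x,\tilde y)\in P$. Comparing ICDFs shows the top pooling interval must weakly shrink and the middle one too, and the $P$ monotonicity makes these incompatible unless $\tilde F=F$. \emph{Main obstacle:} Step 2's maximality argument, i.e.\ showing that any IC experiment supported on $[0,x]\cup\{y\}$ is an MPC of the specific double censorship. That requires a careful ICDF construction, pasting $I_H$ on $[0,s]$ with the linear pieces, and verifying it dominates $I_F$ everywhere.
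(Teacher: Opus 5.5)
Your outline follows the paper's route: the price-function characterization of IC, the support restriction of \Cref{lemma:s_shaped_IC}, nesting of $P$, then maximality. The necessity direction, however, is missing its key step, which is to locate $(x,y)$ relative to $(x^{\ast},y^{\ast})$ by comparing $F$ with $F^{\ast}$. Step 2 assumes a double censorship with atoms $(x,y)$ exists, and it need not. The threshold $t$ is forced by $\mathbb{E}_{H}[\bm{\omega}\mid\bm{\omega}\geq t]=y$. If $y<y^{\ast}$, then $t<x^{\ast}<x$, so no interval $[s,t]$ has conditional mean $x$. In that case the maximal spread supported on $[0,x]\cup\{y\}$ is an upper censorship with threshold $t<x^{\ast}$, not what you describe.

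What rules this out is that $F^{\ast}$ is itself IC. By \Cref{lemma:p_order}, either $[x^{\ast},y^{\ast}]\subseteq[x,y]$ or $[x,y]\subseteq[x^{\ast},y^{\ast}]$. In the latter case $I_{F}\leq I_{F^{\ast}}$ on all of $[0,1]$:
on $[0,x^{\ast}]$ because $I_{F^{\ast}}=I_{H}$;
on $[y^{\ast},1]$ because both equal $I_{\delta_{\mu}}$;
on $[x^{\ast},y^{\ast}]$ because $I_{F^{\ast}}$ is affine there while $I_{F}$ is convex.
Maximality then forces $F=F^{\ast}$, which also disposes of the case $\max\supp F\leq r_{0}$. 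Only after this do you have $x\leq x^{\ast}\leq t$. That makes the construction feasible and immediately gives $s\leq x\leq x^{\ast}\leq t$. Your ``I expect monotonicity to give\ldots'' is exactly where this argument belongs.

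The domination step you flag as the main obstacle reduces to one inequality. $I_{F}$ is affine on $[x,y]$, agrees at $y$ with the tangent to $I_{H}$ at $t$, and lies weakly below that tangent at $t\in[x,y]$, so it lies below it at $x$. Hence $I_{F}(x)\leq\xi:=I_{H}(t)+H(t)(x-t)\leq I_{H}(x)$. Choosing $s\in[0,x]$ with $I_{H}(s)+H(s)(x-s)=\xi$ gives a double censorship $\Phi$ with $I_{\Phi}\geq I_{F}$, checked piecewise using convexity of $I_{F}$ on $[s,x]$. No ``remaining slack'' heuristic is needed.

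The sufficiency sketch also fails as stated. An IC proper spread $\tilde F$ of $F$ is only known to satisfy the support restriction of \Cref{lemma:s_shaped_IC}, not to be double censorship; you could pass to an MIC spread via \Cref{lemma:MIC_wlog}, but this is unnecessary. More importantly, ICDF comparison does not force the middle pooling interval to shrink. Any double censorship with thresholds $t<\hat s<\hat t$ refines the partition of $F$ and is therefore a spread of $F$, yet its middle interval is disjoint from $[s,t]$. Along $P$, a higher top atom in fact forces a lower middle atom.

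The argument that works runs as follows.
\begin{enumerate}
\item Let $\hat y:=\max\supp\tilde F$. Then $\hat y<y$ is impossible, since then $I_{\tilde F}(m)=m-\mu<I_{F}(m)$ on $(\hat y,y)$.
\item So $\hat y\geq y>r_{0}$, and nesting gives $(\hat x,\hat y)\in P$ with $\hat x\leq x$. Hence $I_{\tilde F}$ is affine on $[\hat x,\hat y]$, an interval containing $t$.
\item Since $I_{F}\leq I_{\tilde F}\leq I_{H}$ and $I_{F}(t)=I_{H}(t)$, this affine piece must be the tangent to $I_{H}$ at $t$. Its slope $H(t)<1$ also forces $\hat y=y$.
\item Hence $I_{\tilde F}=I_{F}$ on $[x,1]$, trivially on $[0,s]$, and on $[s,x]$ because $I_{F}$ is affine there while $I_{\tilde F}$ is convex with equal endpoint values.
\end{enumerate}
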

\Cref{thm:MIC_characterization} shows that double censorship experiments are the only candidates for MIC experiments. MIC experiments are characterized by interpretable properties of the thresholds $s$ and $t$, and the atoms $x$ and $y$.
The first property, $(x, y)\in P$, characterizes incentive-compatibility for double censorship experiments, as we explain in more detail in the discussion of the proof.
The second property, $s \leq x^{\ast} \leq t$, reflects maximality and highlights how a given MIC experiment $F$ compares to the full-delegation outcome $F^{\ast}$. 
On the one hand, the pooling region $[t, 1]$ is contained in the full-delegation pooling region $[x^{\ast}, 1]$.
On the other hand, there is a new pooling region $[s, t]$ vis-{\`a}-vis full revelation of states below $x^{\ast}$ under full delegation.
Finally, the inequalities $0 < x < y < 1$ assert that $x$ and $y$ are indeed distinct atoms and imply that $F$ is non-degenerate.

\Cref{thm:MIC_characterization} shows how each given MIC experiment compares to the full delegation outcome $F^{\ast}$ (itself an MIC experiment).
More generally, choosing among MIC experiments entails trading off the experiment's informational content about different regions of the state space. We can express this trade-off via a total order: 
\begin{corollary}\label{cor:MIC_total_order}
    For all two MIC double censorship experiments with respective parameters $(s, t, x, y)$ and $(s^{\prime}, t^{\prime}, x^{\prime}, y^{\prime})$, if $y^{\prime} < y$, then $[s^{\prime}, t^{\prime}] \subsetneq [s, t]$ and $[t, 1] \subsetneq [t^{\prime}, 1]$ and $x < x^{\prime}$.
\end{corollary}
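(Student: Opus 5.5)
We will derive the corollary from \Cref{thm:MIC_characterization} by studying the curve $P$ and the two moment conditions that tie $(s,t)$ to $(x,y)$. The plan has three steps.

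\textbf{Step 1: $P$ is the graph of a strictly decreasing function.} On $P$ we define $\phi(x,y) = G(y) - G(x) - g(y)(y-x)$. For $x\le r_0\le y$ we have
\[
\partial_x\phi = g(y)-g(x) > 0 \quad\text{whenever } x<r_0<y,
\]
because $g$ is strictly quasiconcave with peak at $r_0$ and $g(y)>g(x)$ on the relevant range. This inequality needs care. Since $(x,y)\in P$, the point $(x,G(x))$ lies on the tangent line at $y$. As $G$ is convex on $[0,r_0]$, the secant slope $(G(y)-G(x))/(y-x)=g(y)$ exceeds $g(x)$. We also have
\[
\partial_y\phi = -g'(y)(y-x) > 0,
\]
since $g'<0$ on $(r_0,1]$. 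The implicit function theorem then gives that $P$, restricted to $0<x<y<1$ (so that $x<r_0<y$ strictly), is locally a graph $x=\psi(y)$ with $\psi' = -\partial_y\phi/\partial_x\phi<0$. Geometrically, as the tangency point $y$ moves right on the concave part, the tangent line's intersection with $G$ on the convex part moves left. Combining the strict monotonicity of $\phi$ in each argument, any two points of $P$ with $y'<y$ satisfy $x'>x$. This already gives $x<x'$.

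\textbf{Step 2: the thresholds are determined by the atoms.} Given $(x,y)$ and the constraint $s\le x^\ast\le t$, double censorship requires
\[
\int_t^1 (\omega-y)\,\de H=0 \quad\text{and}\quad \int_s^t(\omega-x)\,\de H=0 .
\]
The first condition says $y=\mathbb{E}_H[\bm\omega\mid\bm\omega\ge t]$. This map $t\mapsto y$ is strictly increasing, because $h>0$. Hence $y'<y$ implies $t'<t$, so $[t,1]\subsetneq[t',1]$.

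\textbf{Step 3: nesting of the intermediate intervals.} For $x$ fixed, the map $s\mapsto$ (solution $t$ of $\mathbb{E}_H[\bm\omega\mid \bm\omega\in[s,t]]=x$) is strictly decreasing. Raising $s$ lowers the mass below $x$ that must be balanced, so $t$ must fall. We now compare the two experiments: $t'<t$, and we need to show $s'>s$. Suppose instead $s'\le s$. Then
\[
x'=\mathbb{E}_H[\bm\omega\mid[s',t']] < \mathbb{E}_H[\bm\omega\mid[s,t]] = x ,
\]
because the conditional mean over an interval is increasing in both endpoints, strictly in at least one here since $t'<t$. This contradicts $x<x'$ from Step 1. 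Hence $s<s'$, and together with $t'<t$ we get $[s',t']\subsetneq[s,t]$.

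The main obstacle is Step 1: verifying that $P$ restricted to the MIC range is a single strictly decreasing curve. This requires the sign argument $g(y)>g(x)$, which follows from the convexity of $G$ on $[0,r_0]$ together with the tangency condition. It also requires handling the boundary cases $x=r_0$ or $y=r_0$, which are excluded because $0<x<y<1$ and $(x,y)\in P$ force $x<r_0<y$: if $x=r_0$ then the tangency condition would force $y=r_0$.
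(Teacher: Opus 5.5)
Your proposal is correct and follows essentially the paper's route: $x<x'$ comes from the single-crossing structure of the tangency condition defining $P$ (the content of the paper's nestedness lemma for $P$), and $t'<t$ and $s<s'$ then follow from strict monotonicity of conditional expectations under a positive density. One wording fix: $\phi(\cdot,y)$ is not monotone on $[0,r_0]$ (it rises and then falls), so what your tangency argument actually establishes, and what you need, is single crossing, i.e.\ $\partial_x\phi>0$ at every zero, which suffices for the global comparison.
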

Thus, the designer can shrink the upper pooling region $[t, 1]$ (an informational gain) only by expanding the lower pooling region $[s, t]$ (an informational loss). 
This trade-off is driven by the need to maintain the constraint $(x, y)\in P$ that reflects incentive compatibility.
\Cref{cor:MIC_total_order} implies that MIC experiments are a one-parameter family indexed and ordered by the top atom $y$. Given $y$, we compute $(s,t,x)$ using the condition $(x, y)\in P$ from \Cref{thm:MIC_characterization} and the definition of $x$ and $y$ as conditional expectations for double censorship experiments. 

To gain intuition for the designer's trade-off, consider the \emph{experimenter's} trade-off when choosing an upper censorship threshold under the prior $H$.
The experimenter's choice of the upper censorship threshold $x^{\ast}$ trades-off the location of the atom $y^{\ast} = \mathbb{E}_{H}[\bm{\omega}\mid\bm{\omega}\in [x^{\ast}, 1]]$ versus the probability assigned to the atom, $1 - H(x^{\ast})$.
This trade-off is captured by the first-order condition \eqref{eq:unrestricted_persuasion_FOC} that defines the set $P$.
If the designer imposes some restriction $\bar{F}$, the designer effectively imposes $\bar{F}$ as the prior, thereby affecting the trade-off between the location of the atom and the probability assigned to it.
To be more concrete, suppose $\bar{F}$ obtains from $H$ by pooling a small interval $(s, t]$ around $x^{\ast}$ to $x$, where $x$ lies \emph{below} $x^{\ast}$, i.e. $s < x < x^{\ast} < t < y^{\ast}$.
Suppose the experimenter would continue to reveal all states weakly below $x^{\ast}$ while pooling all states strictly above $x^{\ast}$, and when states are drawn from the fictitious prior $\bar{F}$. Since $\bar{F}$ pools states in $[x^{\ast}, t]$ to a point $x$ below $x^{\ast}$, the new atom lies above $y^{\ast}$---i.e. $y = \mathbb{E}_{\bar{F}}[\bm{\omega}\mid \bm{\omega} \in [x^{\ast}, 1]] > y^{\ast}$---, and the probability assigned to $y$ is below the probability assigned to $y^{\ast}$---i.e. $1 - \bar{F}(x^{\ast}) = 1 - H(t) < 1 - H(x^{\ast})$.
Therefore, from the perspective of the experimenter's trade-off, the atom $y$ is now (weakly) too high and the probability assigned to $y$ is (weakly) too low.
Therefore, the experimenter's best reply is to (weakly) reduce the upper censorship threshold from $x^{\ast}$.
In fact, we can show that, if the points $x$ and $y$ are precisely such that $(x, y)\in P$, as in the experiments described by \Cref{thm:MIC_characterization}, then a new optimal upper censorship threshold is $x^{\ast}$ itself (and that the resulting experiment is optimal among all experiments that are feasible given $\bar{F}$).
Given the construction of $\bar{F}$, revealing all states weakly below $x^{\ast}$ while pooling all strictly states above $x^{\ast}$ effectively means that, when states are drawn according to $H$, the experimenter reveals all states weakly below $s$, while pooling all states in the respective intervals $(s, t]$ and $(t, 1]$.
We have obtained a double censorship experiment.

\subsection{Sketch of the Proof}
In the following, we outline the proof of \Cref{thm:MIC_characterization}, focusing on the characterization of an arbitrary MIC experiment as a double censorship experiment. In the appendix, we also prove the other direction that all double censorship experiments with the properties in \Cref{thm:MIC_characterization} are MIC experiments. 

First, we characterize incentive-compatible experiments in our environment using the price-theoretic approach of \citet{dworczak2019simple}.
\begin{lemma}\label{lemma:price_function_IC}
    An experiment $F$ is IC if and only if there exists a convex, continuous function $p\colon [0, 1]\to\mathbb{R}$ such that 
    $p \geq G$
    and $\supp F\subseteq\lbrace m\in [0, 1]\colon p(m) = G(m)\rbrace$.
\end{lemma}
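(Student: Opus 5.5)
The lemma is the price-function (duality) characterization of \citet{dworczak2019simple}, applied with the fictitious prior $F$. The paper already observed that $F$ is incentive compatible if and only if $F\in\BR(F)$. So the lemma amounts to: $F$ solves $\max_{\tilde F\in\MPC(F)}\int G\,\de\tilde F$ if and only if a convex continuous $p\ge G$ exists that touches $G$ on $\supp F$. I will prove the two directions separately, treating sufficiency as an elementary inequality and necessity as the substantive step.

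\emph{Sufficiency.} Take $p$ as in the statement and any $\tilde F\in\MPC(F)$. Since $p\ge G$, we have $\int G\,\de\tilde F\le\int p\,\de\tilde F$. Since $p$ is convex and continuous on $[0,1]$ and $F$ is a mean-preserving spread of $\tilde F$, we get $\int p\,\de\tilde F\le \int p\,\de F$; this is the integrated-CDF characterization of the convex order. Finally $p=G$ on $\supp F$, so $\int p\,\de F=\int G\,\de F$. Chaining these gives $F\in\BR(F)$, hence $F$ is IC.

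\emph{Necessity.} Suppose $F\in\BR(F)$. I would invoke Theorem 1 of \citet{dworczak2019simple}: it applies because $G$ is continuous (indeed $C^1$) on the compact interval $[0,1]$ and the prior $F$ is an arbitrary distribution on $[0,1]$. It yields a convex continuous $p\ge G$ with $\supp F^\dagger\subseteq\{p=G\}$ and $\int p\,\de F^\dagger=\int p\,\de F$ for some optimal $F^\dagger$. Optimality of $F$ then gives $\int G\,\de F=\int G\,\de F^\dagger=\int p\,\de F^\dagger=\int p\,\de F$. Hence $\int(p-G)\,\de F=0$ with $p-G\ge0$ continuous, so $p=G$ on $\supp F$. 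This is the required condition for $F$ itself, not just for some optimizer $F^\dagger$.

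\emph{Main obstacle.} If I need a self-contained argument for the existence of $p$, I would construct it directly. Let $V(\nu)$ be the value of the persuasion problem with prior $\nu$, and view it as a concave functional on the convex set of priors. By the splitting lemma it is also monotone in the convex order, and $V(\nu)\ge\int G\,\de\nu$. The aim is a supergradient at $F$ of the form $\nu\mapsto\int p\,\de\nu$ with $p$ convex; convexity of $p$ is what encodes monotonicity under mean-preserving spreads. The hard part is guaranteeing that this supergradient is a continuous function rather than a general measure-valued dual object. I would handle this with the finite-support approximation and limiting argument of Dworczak--Martini: solve the problem for discrete approximations of $F$, whose prices are convex and have slopes bounded by $\max g$ via the usual tangent-line construction of the convex envelope argument, and pass to a uniform limit by Arzel\`a--Ascoli. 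The uniform bound on slopes gives equicontinuity, and the limit preserves convexity, $p\ge G$, and the equality $\int p\,\de F=\int G\,\de F$.
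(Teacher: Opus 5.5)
Your proof is correct and takes essentially the paper's route. The paper reduces incentive compatibility to $F\in\BR(F)$ and cites Corollary~1 of Dworczak--Martini with fictitious prior $F$, which makes the condition $\int p\,\de F=\int p\,\de\bar F$ trivial; you instead prove that corollary's easy direction by hand, derive its hard direction from their Theorem~1 exactly as the corollary itself is derived, and the regularity condition this requires (Lipschitz continuity of $G$, not mere continuity) is covered by your $C^1$ remark, so the self-contained construction in your last paragraph is unnecessary.
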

\begin{proof}[Proof of \Cref{lemma:price_function_IC}]
\citet[Corollary 1]{dworczak2019simple} provide the following necessary and sufficient condition for an experiment $F$ to be optimal in an unconstrained persuasion problem when the prior is some distribution $\bar{F}$:
\emph{
Given a prior $\bar{F}$ and $F\in\MPC(\bar{F})$, it holds $F\in\BR(\bar{F})$ if and only if there exists a convex, continuous function $p\colon [0, 1]\to\mathbb{R}$ such that 
$p\geq G$
and $\supp F\subseteq\lbrace m\in [0, 1]\colon p(m) = G(m)\rbrace$ and $\int p(m)\de F(m) = \int p(m) \de \bar{F}(m)$.
}

Our lemma follows from this result and recalling that an experiment $F$ is incentive-compatible if and only if $F$ is a best reply to itself, that is, if $F\in \BR(F)$. Thus, $F$ would be optimal in the full-delegation problem if $F$ was the prior. Indeed, if $F = \bar{F}$, then the integral condition $\int p(m)\de F(m) = \int p(m) \de \bar{F}(m)$ trivially holds.\footnote{\Cref{lemma:price_function_IC} does not rely on $S$-shapedness of $G$. To apply Corollary 1 of \citet{dworczak2019simple}, the following suffices: $G$ is upper semicontinuous with at most finitely many discontinuities at interior points $y_{1}, \ldots, y_{k} \in (0, 1)$, and is Lipschitz continuous in each interval $(y_{i} , y_{i+1})$, with $y_{0} = 0$ and $y_{k+1} = 1$; this is part (i) of the regularity condition of \citet{dworczak2019simple}.}
\end{proof}

To further characterize incentive-compatible experiments, we combine \Cref{lemma:price_function_IC} with $S$-shapedness of the experimenter's payoff.

\begin{lemma}\label{lemma:s_shaped_IC}
    A non-degenerate experiment $F$ is incentive-compatible if and only if there exist $(x, y)\in P$ such that $F$ assigns no mass to the set $(x, y)\cup (y, 1]$.
\end{lemma}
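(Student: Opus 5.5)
We prove both directions using \Cref{lemma:price_function_IC}: $F$ is IC if and only if there is a convex continuous $p\geq G$ whose contact set $\{p=G\}$ contains $\supp F$. The whole argument is a description of which convex functions can majorize the $S$-shaped $G$, and where they can touch it.

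For sufficiency, take $(x,y)\in P$ and $F$ with no mass on $(x,y)\cup(y,1]$. Define $p(m)=G(m)$ for $m\leq x$ and $p(m)=G(y)+g(y)(m-y)$ for $m\geq x$.
\begin{enumerate}
\item Continuity at $x$ is exactly the defining equation of $P$: the tangent line to $G$ at $y$ passes through $(x,G(x))$.
\item Convexity of $p$ holds because $G$ is convex on $[0,x]\subseteq[0,r_0]$. The slope of $G$ on that piece is at most $g(x)$, and $g(x)\leq g(y)$. To see the last inequality, note that the chord of $G$ from $x$ to $y$ has slope $g(y)$, equal to the average of $g$ over $[x,y]$; since $g$ is quasiconcave, $g$ at the endpoint $x$ lies below that average. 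The degenerate case $x=y=r_0$ is handled trivially.
\item To show $p\geq G$, note that on $[y,1]$ the tangent line lies above $G$ by concavity. On $[x,y]$, we need the line through $(x,G(x))$ and $(y,G(y))$ with slope $g(y)$ to lie above $G$. Write $G(m)-p(m)=\int$ of $(g-g(y))$. Quasiconcavity makes $g-g(y)$ first negative, then positive, then negative on $[x,y]$, and the integral over all of $[x,y]$ is zero. A single-crossing argument then gives $G\leq p$ on $[x,y]$.
\end{enumerate}
The contact set contains $[0,x]\cup\{y\}$, so $F$ is IC.

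For necessity, let $F$ be non-degenerate and IC with price function $p$. Let $\bar y=\max\supp F$ and $\underline m=\min\supp F$; these satisfy $\underline m<\bar y$. We then argue in four steps.
\begin{enumerate}
\item A convex $p\geq G$ touching $G$ at two points must be affine between them only where $G$ equals that chord, or must lie above. Since $G$ is strictly concave on $[r_0,1]$, the contact set can intersect $[r_0,1]$ in at most one point. Two contacts $a<b$ in the concave region would force $p$ to be at most the chord between them, while $G$ lies strictly above that chord in between, a contradiction.
\item At an interior contact point $y$ where $G$ is differentiable, $p$ must be differentiable with $p'(y)=g(y)$, since a convex function touching a smooth function from above has a unique supporting slope there. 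So $p$ is at least the tangent line $\ell$ at $y$, and if $y$ is the largest contact point, $p=\ell$ on $[y,1]$ is the natural choice. This also shows there is no mass on $(y,1]$, because $\ell>G$ there.
\item If $\bar y\geq r_0$, set $y=\bar y$. Going leftwards, the tangent $\ell$ crosses $G$ at a unique point $x\leq r_0$, because $G$ is $S$-shaped and the tangent from the concave part meets the convex part once. Thus $(x,y)\in P$. Any contact point $m\in(x,y)$ would have $G(m)<\ell(m)\leq p(m)$, which is impossible since $p\geq\ell$ on $[x,y]$ by convexity and $p'(y)=g(y)$.
\item If all support lies in $[0,r_0)$, possibly with $\bar y<r_0$, then $p$ touches only points of the convex region. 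We need to produce some $(x,y)\in P$ with $x\geq\bar y$. Map $y\mapsto x(y)$ along $P$; at $y=r_0$ we get $x=r_0$. Choosing $y=r_0$ gives $x=r_0>\bar y$, so the support avoids $(r_0,r_0)\cup(r_0,1]$ trivially. This case is also covered if $\bar y=r_0$ exactly.
\end{enumerate}

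The main obstacle is the tangency and crossing geometry, namely that the tangent at $y\geq r_0$ meets $G$ at exactly one $x\in[0,r_0]$ and lies above $G$ on $[x,y]$. One subtlety is existence of $x\geq 0$: when the tangent at $y$ stays above $G(0)$, the crossing point may fail to exist. In that case we use the fact that a convex $p$ touching at $y$ must still lie above $G$ near $0$, and we must show that $p$ has a contact point below $y$ when $F$ is non-degenerate. That contact point is $\underline m$, and it has to satisfy $p(\underline m)=G(\underline m)\leq\ell(\underline m)\leq p(\underline m)$. This pins $\underline m$ to be a crossing point, which proves the crossing exists. I would carry out both the crossing argument and this existence check with the one-crossing property of $g-g(y)$ derived from quasiconcavity.
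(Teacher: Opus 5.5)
Your proof is correct and follows essentially the paper's route. Sufficiency uses the same price function ($G$ on $[0,x]$, the tangent line at $y$ on $[x,1]$). Necessity uses the same four facts: $p$ can touch the strictly concave part of $G$ at most once; $g(y)$ is a subgradient of $p$ at $y=\max\supp F$, so $p$ lies above the tangent; non-degeneracy forces this tangent to meet $G$ at some $x<r_0$; and $(r_0,r_0)\in P$ covers the case $\max\supp F\le r_0$.

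Two statements should be corrected, though neither breaks the argument. First, $p$ need not be differentiable at $y$ (it may have a kink there); only $g(y)\in\partial p(y)$ holds, and that is all you use. Second, on $[x,y]$ the function $g-g(y)$ is negative and then nonnegative, not negative--positive--negative as you wrote: it is $\geq 0$ on $[r_0,y]$ because $g$ is decreasing there. This one-crossing pattern is exactly what keeps the cumulative integral $G-p$ nonpositive between its zeros at $x$ and $y$; the pattern you stated would not give that conclusion.
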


To gain intuition, suppose $F$ is an incentive-compatible experiment.
Then, $F$ cannot assign mass to multiple states in the concave part $[r_{0}, 1]$ of the experimenter's payoff $G$ since the experimenter would strictly prefer to pool such states.
Thus, there is at most one point $y$ in support of $F$ above $r_{0}$.
In fact, as discussed in the context of full delegation in \Cref{sec:unrestricted_persuasion}, the experimenter would also pool some states just below the inflection point $r_{0}$ to shift more probability mass to the pooling outcome.
The point $x < r_{0}$ represents the cutoff value where the experimenter finds it optimal to pool no further, as reflected in the condition $G(x) + g(y)(y - x) = G(y)$ that defines the set $P$ (recall  \eqref{eq:P_def}).
We conclude that $F$ cannot assign mass to $(x, y)\cup (y, 1]$.
Note, however, that we cannot yet conclude that, as in double censorship, $y$ is obtained by pooling the prior above some threshold, or that $x$ or $y$ are even in the support of $F$.

\Cref{lemma:s_shaped_IC} shows that incentive-compatibility is fully characterized by regions of the state space on which an experiment does not assign any mass. Thus, any MIC experiment should reveal as much information as possible while respecting these forbidden regions. We next show how double censorship emerges from combining incentive-compatibility with maximality considerations. To that end, we first prove that the set $P$, which characterizes the forbidden regions, is ordered by set inclusion.

\begin{lemma}\label{lemma:p_order}
        If $(x, y)\in P$ and $(x^{\prime}, y^{\prime})\in P$, then $[x, y] \subseteq [x^{\prime}, y^{\prime}]$ or $[x^{\prime}, y^{\prime}] \subseteq [x, y]$.
\end{lemma}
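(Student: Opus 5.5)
The plan is to show that $P$ is the graph of a function $y\mapsto x(y)$ that is strictly decreasing in $y$ on the relevant range. Nestedness of the intervals then follows at once: if $y' < y$, then $x(y') > x(y)$, hence $[x', y'] \subseteq [x, y]$.

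\emph{Step 1: reduction to one function.} Fix $y\in[r_0,1]$ and define
\[
\phi_y(z) = G(y) - G(z) - g(y)(y-z), \qquad z\in[0,r_0].
\]
Then $(x,y)\in P$ if and only if $x\in[0,r_0]$ and $\phi_y(x)=0$. Differentiating gives $\phi_y'(z) = g(y) - g(z)$. Since $g$ is strictly quasiconcave with peak $r_0$, it is strictly increasing on $[0,r_0]$ and strictly decreasing on $[r_0,1]$. Hence $\phi_y$ is strictly monotone on the regions where $g(z)$ lies on one side of $g(y)$. Concretely, $\phi_y$ is strictly increasing on $[0,z_y)$ and strictly decreasing on $(z_y,r_0]$, where $z_y\le r_0$ is the point with $g(z_y)=g(y)$ (or $z_y=0$ if $g(0)\ge g(y)$).

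At $z=r_0$ we have $\phi_y(r_0)=G(y)-G(r_0)-g(y)(y-r_0)\ge 0$. This holds by concavity of $G$ on $[r_0,1]$, since the tangent line at $y$ lies above $G$ at $r_0$, and the inequality is strict if $y>r_0$. On $[z_y,r_0]$ the function $\phi_y$ decreases down to this nonnegative value, so it is positive on $(z_y,r_0)$ whenever $y>r_0$. Any zero must therefore lie in $[0,z_y]$, where $\phi_y$ is strictly increasing, so the zero is unique when it exists. For $y=r_0$, the point $x=r_0$ is a zero. I will treat this degenerate point separately: it gives the interval $[r_0,r_0]$, which is contained in every other interval provided every other pair satisfies $x\le r_0\le y$, and this holds by definition. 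So for each $y$ there is at most one $x$ with $(x,y)\in P$. Along with this I should check that for fixed $x$ there is at most one $y$, or simply deduce it from the monotonicity in Step 2.

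\emph{Step 2: monotonicity.} Take $(x,y),(x',y')\in P$ with $r_0<y'<y$, and suppose for contradiction that $x'\le x$. I expect this to be the main obstacle. My first attempt is the geometric argument. The condition $(x,y)\in P$ says the tangent line $\ell_y$ to $G$ at $y$ passes through $(x,G(x))$. Because $G$ is convex on $[0,r_0]$ and $\ell_y$ is not steeper than $g$ on $(x, r_0]$, the line $\ell_y$ supports $G$ from above on $[x,1]$. In other words, $\ell_y$ is the concave-envelope chord from $x$.

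Now use $\partial_y\phi_y(z) = g(y)-g(y) - g'(y)(y-z) = -g'(y)(y-z) > 0$ for $z<y$, since $g'<0$ on $(r_0,1]$. Then
\[
\phi_{y}(x') > \phi_{y'}(x') = 0 .
\]
Since $\phi_y$ is strictly increasing on $[0,z_y]$ with $\phi_y(x)=0$, it follows that $x'>x$ provided $x'\le z_y$. To get this I will check that $z_y\le z_{y'}$ (because $g(y)<g(y')$), and that in the region $(z_y,r_0]$ the function $\phi_y$ is positive. Either way $\phi_y(x')>0$ forces $x'\notin[0,x]$. This contradicts $x'\le x$. Hence $x<x'$, so $[x',y']\subseteq[x,y]$.

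The delicate points are the boundary cases $x'=z_y$ and $z_y=0$, and the requirement $x\le r_0$, which is guaranteed by the definition of $P$. I will handle these with the weak versions of the inequalities.
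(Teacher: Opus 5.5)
Your argument is correct and is essentially the paper's: the paper uses $\rho(x,y)=G(x)+g(y)(y-x)-G(y)=-\phi_y(x)$ and $\partial\rho/\partial y=g'(y)(y-x)\le 0$, just as you use $\partial_y\phi_y>0$; it gets single crossing in $x$ by showing $g(x)<g(y)$ at any zero via quasiconcavity, a compact version of your unimodality analysis of $\phi_y$ on $[0,r_0]$. The geometric aside at the start of Step 2 is not needed, and its justification has the slope comparison backwards ($\ell_y$ is steeper than $G$ just to the right of $x$, since $g(x)<g(y)$), so you can simply drop it.
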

This lemma directly follows from the $S$-shapedness of $G$, as illustrated in \Cref{fig:p_order_illustration}. Whenever there is a feasible $(x,y)$-combination, that is, one that satisfies \eqref{eq:P_def}, the tangent of the experimenter's value at $y$ has to intersect the experimenter's value at $x$. As $y\geq r_{0}$, the point $y$ lies in the concave part of the experimenter's value. Now, consider a marginal increase in $y$. By concavity, the tangent of the experimenter's value becomes flatter. Hence, the intersection with the experimenter's value must occur at a lower value. It follows that the intervals $[x,y]$ defined by the elements of the set $P$ are nested.

\begin{figure}[t!]
    \centering
    \includegraphics[width=0.75\linewidth]{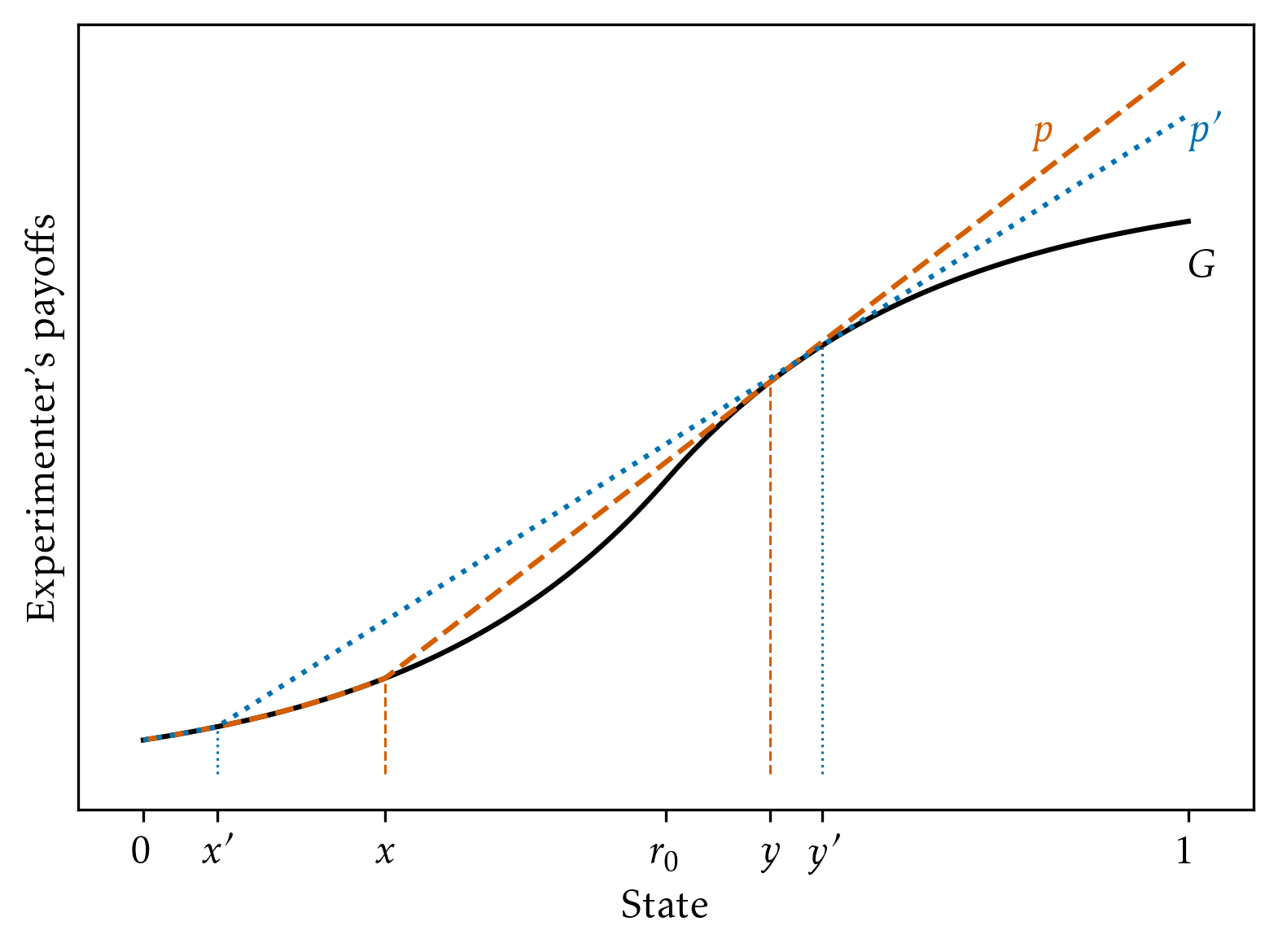}
    \caption{\emph{The order of the set $P$. For a higher value of $y$, the slope of $G$ at $y$ is lower, meaning the tangent to $G$ at $y$ intersects $G$ below $r_{0}$ at a lower point $x$.}}
    \label{fig:p_order_illustration}
\end{figure}

Next, we derive the remaining properties of the characterization of MIC experiments that result from maximality. Consider an arbitrary MIC experiment $F$. Incentive-compatibility implies, by \Cref{lemma:s_shaped_IC}, that $F$ does not allocate any mass to $(x,y)\cup(y,1]$, which implies that the ICDF $I_F$ of the experiment $F$ is affine on $[x,y]$ and $[y,1]$. Maximality implies then that $F$ is as informative as possible subject to not assigning any mass to the forbidden regions. 
Moreover, due to \Cref{lemma:p_order}, we only have to distinguish two cases for how any MIC experiment $F$ relates to the full-delegation outcome $F^\ast$: the experiment $F$ is such that $[x,y]\subseteq [x^\ast,y^\ast]$ or it is such that $[x^\ast,y^\ast]\subseteq[x,y]$.

First, suppose $[x, y] \subseteq [x^{\ast}, y^{\ast}]$. In this case, we show that $F$ must be an MPC of the full-delegation outcome $F^{\ast}$ and therefore not maximal unless $F = F^{\ast}$. 
We illustrate graphically in \Cref{fig:MIC_characterization_dominated_case}. As the experiments $F$ and $F^{\ast}$ assign no mass to $[y,1]$ and $[y^\ast,1]$, respectively, their ICDFs both hit the ICDF of the uninformative experiment at $y$ and $y^\ast$, respectively. In particular, since $y^\ast \geq y$, we find $I_F(y^\ast)=I_{F^\ast}(y^\ast)$. In this sense, the experiment $F$ pools weakly more states at the top than $F^\ast$. Moreover, $I_F$ lies weakly below $I_{F^\ast}$ on $[0,x^\ast]$ since the full-delegation $F^{\ast}$ outcome fully reveals all states in this region. Therefore, $F^\ast$ is weakly more informative at the bottom as well. Finally, note that $I_{F^\ast}$ is affine on $[x^\ast,y^\ast]$, while $I_F$ is convex. Since $I_F(x^\ast)\leq I_{F^\ast}(x^\ast)$ and $I_F(y^\ast)=I_{F^\ast}(y^\ast)$, we obtain $I_F(m)\leq I_{F^\ast}(m)$ for all $m \in [x^{\ast}, y^{\ast}]$.
Summarizing, we find that $I_{F}$ lies below $I_{F^{\ast}}$ on all of $[0, 1]$, implying that $F$ is an MPC of $F^\ast$.
The experiment $F^{\ast}$ is incentive-compatible.
Thus, for $F$ to be maximal, we must have $F=F^\ast$, implying that $F$ is upper censorship, which is a special case of double censorship.

\begin{figure}[t!]
     \centering
     \begin{subfigure}[t]{0.49\linewidth}
         \centering
         \includegraphics[width=\linewidth]{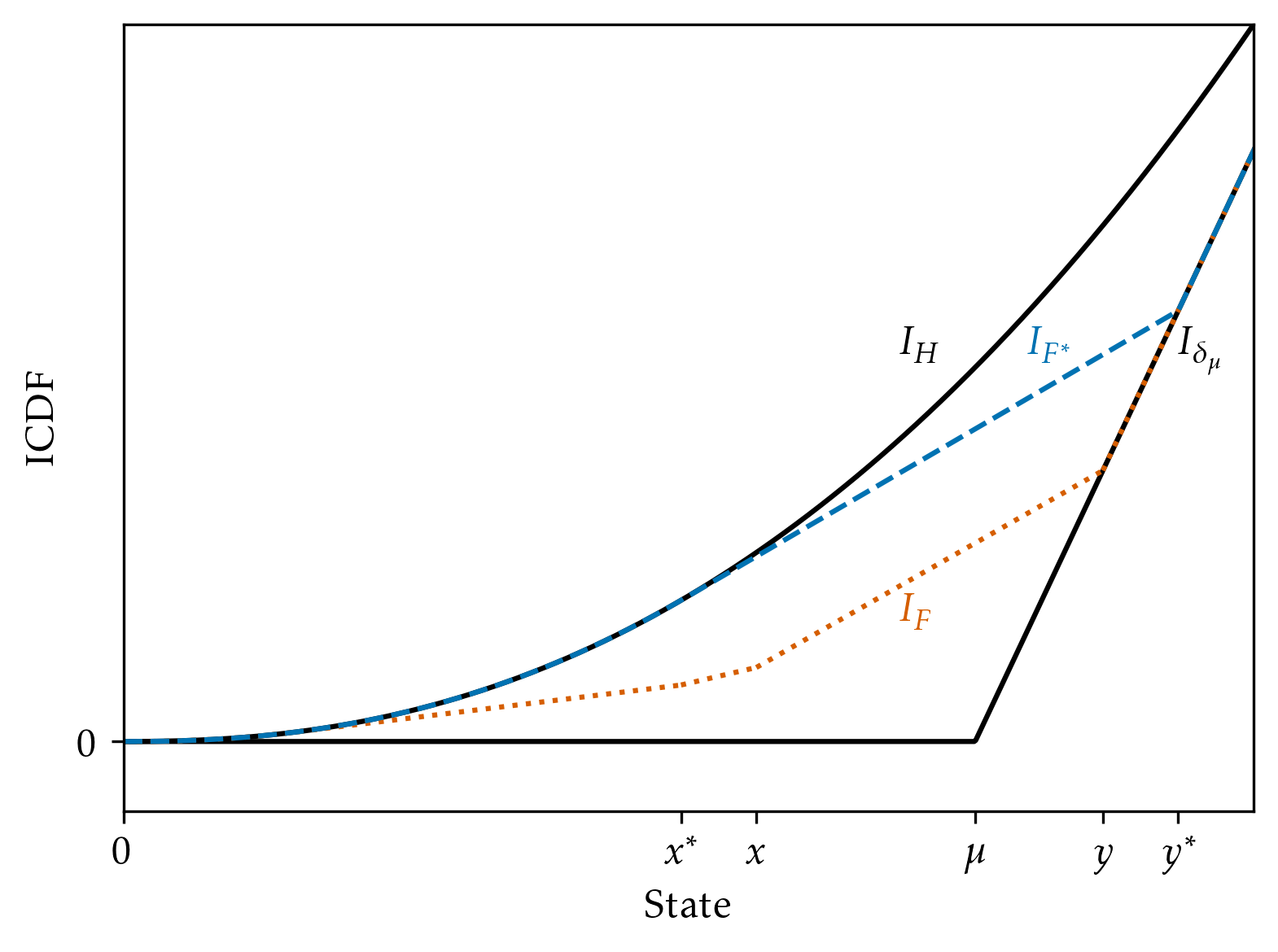}
         \caption{\emph{An incentive-compatible experiment $F$ that is an MPC of $F^{\ast}$.}}
         \label{fig:MIC_characterization_dominated_case}
     \end{subfigure}
     \hfill
     \begin{subfigure}[t]{0.49\linewidth}
         \centering
         \includegraphics[width=\linewidth]{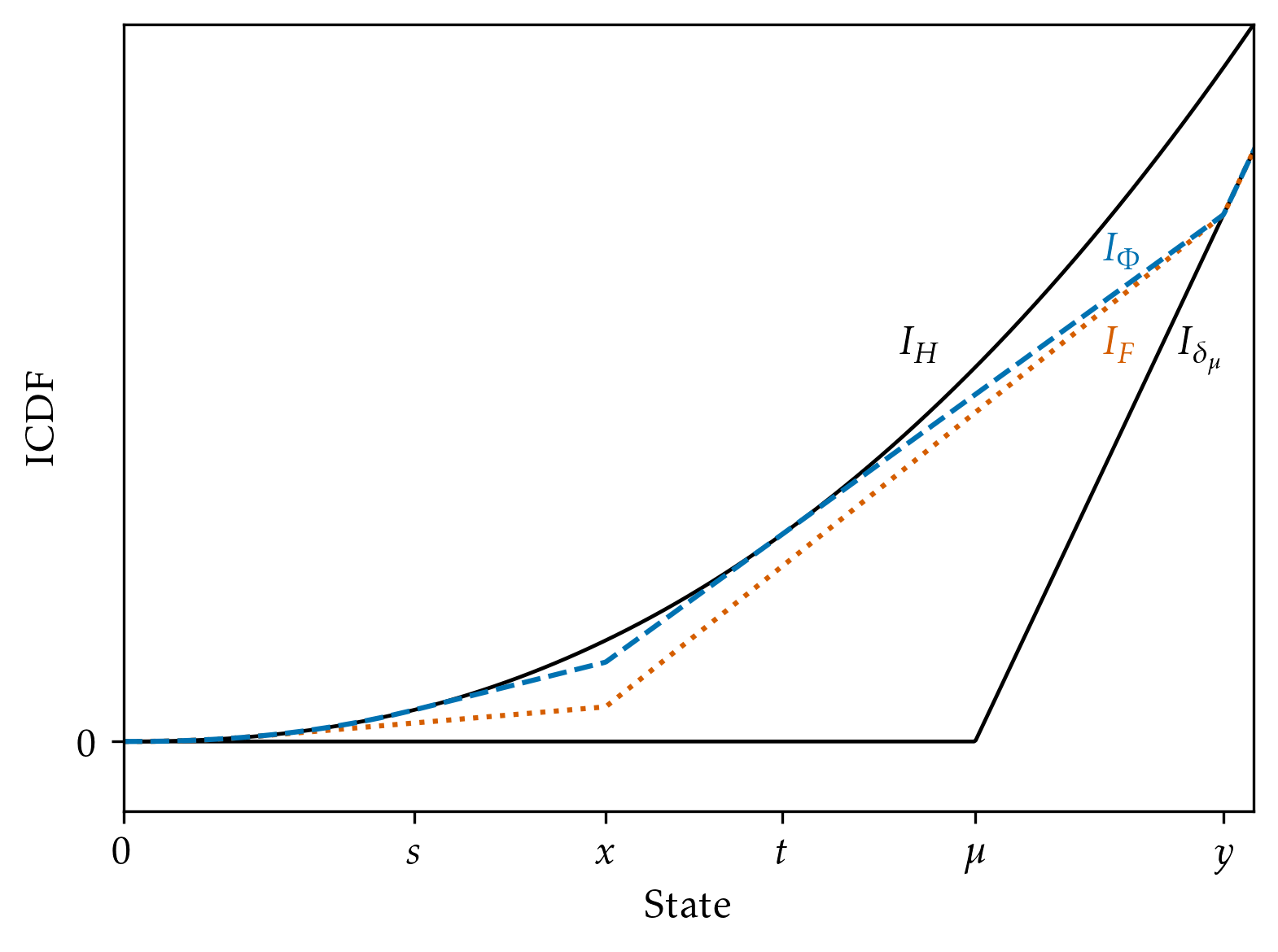}
         \caption{\emph{An incentive-compatible experiment $F$ and a double censorship experiment $\Phi$ that is an MPS of $F$.}}
         \label{fig:MIC_characterization_dominating_double_censorship}
     \end{subfigure}
     
     \caption{\emph{Characterization of MIC experiments. For readability, the horizontal axes are truncated at the top.}}
     \label{fig:combined_MIC_characterization}
\end{figure}

Second, suppose $[x^{\ast}, y^{\ast}] \subseteq [x, y]$. 
We illustrate this case graphically in \Cref{fig:MIC_characterization_dominating_double_censorship}.
In this case, $I_F$ hits the ICDF of the uninformative experiment at $y\geq y^\ast$ and, hence, after $I_{F^\ast}$. It follows from incentive-compatibility that $I_F$ is affine on $[x,y]$ and coincides with $I_H$ at most once on this interval. For the purpose of this sketch, suppose $I_F$ lies strictly below $I_H$ on $[x,y]$. In this case, we can obtain a new experiment $\Phi$ by rotating the affine segment of $I_F$ on $[x,y]$ clockwise, while anchoring it $I_{F}(y)$, until it is tangent to $I_H$ at some point $t$. Then, we can find a point $s$ such that the tangent to $I_{H}$ at $s$ intersects the new affine segment at $x$. Finally, we let $\Phi$ reveal all states below $s$. In the proof, we show that $[x^{\ast}, y^{\ast}] \subseteq [x, y]$ guarantees the existence of such points $s$ and $t$. The constructed $\Phi$ is incentive-compatible, a mean-preserving spread of $F$, and a double-censorship experiment. Incentive-compatibility follows from \Cref{lemma:s_shaped_IC} as $\Phi$ does not allocate mass to the set $(x,y)\cup(y,1]$, while $\Phi\in\MPS(F)$ follows since $I_\Phi  \geq I_F$. Hence, for $F$ to be maximal, we must have $F=\Phi$, implying that $F$ is double censorship.

\section{Optimal Delegation} \label{sec:optimal-restriction}

\subsection{Full delegation is not optimal}
We now revisit the designer's problem that is at the core of our environment. Anticipating the persuasion motive of the experimenter, can the designer gain from restricting the set of experiments available to the experimenter even though the designer has a preference for informative experiments? 
To answer this, \Cref{lemma:MIC_wlog} implies that it suffices to consider restrictions that are MIC experiments, i.e. the experimenter has no incentive to garble the restriction but would garble any more informative restriction.
We now apply our characterization of MIC experiments as double censorship experiments.
The following theorem is our second main result. 

\begin{theorem}\label{thm:profitable_restriction}
        The full delegation outcome $F^{\ast}$ is not optimal.
\end{theorem}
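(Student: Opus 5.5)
We will show that $F^{\ast}$ is not optimal by constructing a nearby MIC experiment that the designer strictly prefers. By \Cref{thm:MIC_characterization} and \Cref{cor:MIC_total_order}, MIC experiments form a one-parameter family indexed by the top atom $y$. Full delegation corresponds to $y=y^{\ast}$ with $s=x=t=x^{\ast}$. We will move along this family in the direction $y<y^{\ast}$, and show that the designer's payoff strictly increases for $y$ slightly below $y^{\ast}$.

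\textbf{Step 1: parametrize the family near $F^{\ast}$.} For $y$ slightly below $y^{\ast}$, the condition $(x,y)\in P$ determines $x=x(y)$ through $G(x)+g(y)(y-x)=G(y)$. Since $G$ is strictly convex below $r_0$, we have $g(x^{\ast})<g(y^{\ast})$, so the implicit function theorem applies. By \Cref{lemma:p_order}, $x(y)>x^{\ast}$ for $y<y^{\ast}$. The thresholds are then pinned down by
\[
\mathbb{E}_H[\bm\omega\mid\bm\omega\in[t,1]]=y, \qquad \mathbb{E}_H[\bm\omega\mid\bm\omega\in[s,t]]=x.
\]
The map $y\mapsto t$ is smooth and increasing, with $t(y^{\ast})=x^{\ast}$, so $t>x^{\ast}$ when $y<y^{\ast}$. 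The threshold $s<x^{\ast}$ then solves the conditional-mean equation on $[s,t]$. The intermediate interval $[s,t]$ has length of order $O(y^{\ast}-y)$, since $x(y)-x^{\ast}$ and $t-x^{\ast}$ are both first order in $y^{\ast}-y$.

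\textbf{Step 2: expand the designer's payoff.} Define $W(y)=\int u_D\,\de F_y$, and compare the two experiments region by region.
\begin{itemize}
\item On $[x^{\ast},1]$, $F^{\ast}$ pools everything to $y^{\ast}$. The new experiment splits this mass: $[x^{\ast},t]$ goes into the middle atom $x$, and $[t,1]$ goes to $y$.
\item On $[s,x^{\ast}]$, states that $F^{\ast}$ reveals are now pooled into $x$.
\end{itemize}
The loss from pooling $[s,x^{\ast}]$ is of order $(x^{\ast}-s)^2\cdot(x^{\ast}-s)$, because the interval is small and $u_D$ is smooth convex: pooling an interval of length $\ell$ costs $O(\ell^3)$ when $h$ is continuous. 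This is higher than second order and certainly $o(y^{\ast}-y)$.

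The gain comes from splitting the top atom. The mass $H(t)-H(x^{\ast})$, which is of order $y^{\ast}-y$, moves from $y^{\ast}$ down to roughly $x^{\ast}$, while the rest moves slightly up. This is a mean-preserving spread of a point mass at $y^{\ast}$ that carries first-order mass a non-vanishing distance $y^{\ast}-x^{\ast}>0$. By strict convexity, the gain is at least $c\,(H(t)-H(x^{\ast}))$ for some $c>0$ depending on the strict-convexity gap of $u_D$ between $x^{\ast}$ and $y^{\ast}$, up to $o(y^{\ast}-y)$ terms.

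To handle the two effects cleanly, I will decompose the comparison. First compare $F^{\ast}$ with the intermediate experiment that reveals $[0,x^{\ast}]$, pools $[x^{\ast},t]$ to its mean, and pools $[t,1]$ to $y$. This is an MPS of $F^{\ast}$, and the gain is computable as a Jensen gap, of order $(y^{\ast}-y)$. Then compare this intermediate experiment with $F_y$. The only difference is whether the small interval $[s,t]$ is revealed or pooled, and that loss is $o(y^{\ast}-y)$.

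\textbf{Step 3: conclude.} A strictly positive first-order gain minus a higher-order loss gives $W(y)>W(y^{\ast})$ for $y$ close to $y^{\ast}$. Since $F_y$ is MIC, and hence incentive compatible, $F^{\ast}$ is not optimal.

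\textbf{Main obstacle.} The hardest part is Step 1: showing that $y\mapsto(s,t,x)$ is well defined, with each of $t-x^{\ast}$, $x^{\ast}-s$ and $x-x^{\ast}$ of exact order $y^{\ast}-y$, and in particular that $t-x^{\ast}$ is bounded below by a positive multiple of $y^{\ast}-y$. This lower bound is needed so that the first-order gain does not vanish. I would get it from $\partial t/\partial y=h(t)(y-t)/(1-H(t))$ evaluated at $t=x^{\ast}$, which is strictly positive because $y^{\ast}>x^{\ast}$ and $h>0$. For the bound on $s$, I would use the mean condition on $[s,t]$ together with the continuity of $h$, which gives $x\approx(s+t)/2$ to first order.
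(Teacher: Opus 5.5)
\textbf{The perturbation goes in the wrong direction, and as a result Step 1 contradicts itself.} By \Cref{thm:MIC_characterization}, every MIC double censorship has $s\le x^{\ast}\le t$. Since $\mathbb{E}_H[\bm\omega\mid\bm\omega\in[t,1]]$ is increasing in $t$, this forces $y\ge y^{\ast}$, and \Cref{lemma:p_order} then gives $x\le x^{\ast}$. Your own claims expose the problem. You correctly note that $t$ is increasing in $y$ with $t(y^{\ast})=x^{\ast}$, but that gives $t<x^{\ast}$ when $y<y^{\ast}$, not $t>x^{\ast}$. You also correctly get $x(y)>x^{\ast}$ from \Cref{lemma:p_order}. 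Together these give $x>x^{\ast}>t$, so $x$ cannot be the conditional mean of $H$ on any interval $[s,t]$, and the experiment you want to construct does not exist.

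This is not a fixable technicality. The necessity argument in the proof of \Cref{thm:MIC_characterization} uses only incentive compatibility, not maximality. It shows that any IC experiment whose pair $(x,y)\in P$ satisfies $[x,y]\subseteq[x^{\ast},y^{\ast}]$ is an MPC of $F^{\ast}$. Hence no IC experiment with top atom below $y^{\ast}$ can ever make the designer better off. The profitable direction is $y\searrow y^{\ast}$, with $\hat s(y)<\hat x(y)<x^{\ast}<\hat t(y)$: the new intermediate pooling interval straddles $x^{\ast}$, and the top pooling region $[\hat t,1]$ shrinks.

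\textbf{Once you flip the direction, your argument is essentially the paper's.} Your Step 2 already describes the $y>y^{\ast}$ picture: mass $H(t)-H(x^{\ast})$ leaves the top atom for a point near $x^{\ast}$, and the remaining mass moves up. The paper writes $U_D(y)-U_D(y^{\ast})$ as the two terms in \eqref{eq:perturbation:1}, whose normalized limit is $\bigl(u_D'(y^{\ast})-\frac{u_D(y^{\ast})-u_D(x^{\ast})}{y^{\ast}-x^{\ast}}\bigr)(1-H(x^{\ast}))>0$. It adds the pooling loss \eqref{eq:perturbation:2}, which it shows is $o(y-y^{\ast})$. Your intermediate-experiment decomposition is a clean repackaging of the same split.

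Three smaller points remain.
\begin{enumerate}
\item Only a bounded derivative of $u_D$ is assumed, so your $O(\ell^{3})$ cost claim, which needs $u_D''$, is unjustified. The Lipschitz bound gives $O(\ell^{2})$, which still suffices.
\item Your formula $h(t)(y-t)/(1-H(t))$ is $\partial y/\partial t$. The derivative you want is its reciprocal, $(1-H(t))/\bigl((y-t)h(t)\bigr)$. Its sign is unaffected.
\item The paper does not need a rate for $x^{\ast}-\hat s$; the implicit function theorem does not deliver differentiability of $\hat s$. Instead it uses the mean condition on $[\hat s,\hat t]$ to bound $\int_{[\hat s,x^{\ast}]}(x^{\ast}-\omega)\,\de H(\omega)$ by terms in $x^{\ast}-\hat x$ and $\hat t-x^{\ast}$ only. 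Your midpoint approximation can also work, but it needs an explicit argument that $h$ is nearly constant on the shrinking interval.
\end{enumerate}
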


We prove \Cref{thm:profitable_restriction} using a local perturbation argument at the full-delegation solution. Recall that the family of MIC experiments is a one-dimensional family parametrized by the atom $y$ in the top pooling interval (\Cref{cor:MIC_total_order}). Moreover, full delegation leads to the upper censorship experiment with $y=y^\ast$ and $s=t=x^\ast$. Denote by $U_D(y)$ the designer's value from choosing the double censorship experiment with $y$ as restriction. Due to \Cref{cor:MIC_total_order}, $y\geq y^\ast$. Hence, we consider the right-derivative of $U_D(y)$ and evaluate it at $y=y^\ast$. After simplifications, we obtain 
\begin{equation}\label{eq:perturbation_derivative_text}
    \left(u_{D}^{\prime}(y^{\ast}) - \frac{u_{D}(y^{\ast}) - u_{D}(x^{\ast})}{y^{\ast} - x^{\ast}}\right)(1 - H(x^{\ast}))>0,
\end{equation}
which is strictly positive since $u_{D}$ is strictly convex and $x^{\ast} < y^{\ast}$.

To interpret this derivative, consider first the following hypothetical scenario: the designer is presented with a binary experiment with support $\lbrace x^{\ast}, y^{\ast}\rbrace$ and mean $\mu$, and can costlessly perturb the support subject to the mean constraint. 
The derivative \eqref{eq:perturbation_derivative_text} equals (up to a multiplicative factor) the change in the designer's payoff when increasing $y^{\ast}$ marginally while holding $x^{\ast}$ fixed and adjusting the probability of the realization $y^{\ast}$ to satisfy the mean constraint. 
The term $u_{D}^{\prime}(y^{\ast})$ reflects the marginal increase in the value of the atom $y^{\ast}$, and the term $\frac{u_{D}(y^{\ast}) - u_{D}(x^{\ast})}{y^{\ast} - x^{\ast}}$ reflects the mass that is shifted from $y^{\ast}$ to $x^{\ast}$ to maintain the mean.
The designer gains from this perturbation because an increase in $y^{\ast}$ while holding $x^{\ast}$ fixed makes the binary experiment more informative.

In our actual model, the designer cannot freely pick the experiment but must additionally satisfy incentive compatibility. To achieve this, the designer introduces the double censorship thresholds $s$ and $t$. 
Since the states in the interval $[s, t]$ are now pooled to $x$, there is a cost to the perturbation.
Importantly though, we show that this cost is of second order. Intuitively, the thresholds $s$ and $t$ remain close to $x^{\ast}$ for any small perturbation, and hence, are an order of magnitude smaller than $y^{\ast} - x^{\ast}$.
Formally, this cost is
\begin{equation}\label{eq:perturbation_info_loss_text}
     \int_{[s, t]}(u_{D}(x) - u_{D}(m))\de H(m),
\end{equation}
reflecting the informational loss from pooling states to $x$ instead of revealing them.
For $s$ and $t$ close to $x^{\ast}$, this integral is an order of magnitude smaller than the gain reflected in \eqref{eq:perturbation_derivative_text}.

\begin{remark}
    There exists a solution to the designer's problem, and it is attained by an incentive-compatible double censorship experiment. Viewed as a best reply to itself, this experiment is a fictitious prior with two atoms. \Cref{thm:profitable_restriction} does not apply to this fictitious prior since the theorem assumes an absolutely continuous prior. 
\end{remark}

\subsection{A Parametric Example}\label{ssec:parametric-optimal}
To illustrate our results, we consider a parametrized example. Suppose that the state is uniformly distributed on the unit interval, that is, $H(\omega)=\omega$, and that the outside option distribution is a $Beta(2,2)$ distribution, that is, $G(r)=3r^2-2r^3$ and $g(r)=6r(1-r)$ with $r_0=1/2$.\footnote{To ensure that $g(0)>0$ and $g(1)>0$ to satisfy \Cref{assumption:S-shape}, we could mix the $Beta(2,2)$ with a uniform distribution and make the uniform component vanishingly small. This does not affect the results and for this reason, we suppress this technical detail.} The designer's and the DM's payoffs coincide.

\paragraph{Full-delegation benchmark.} First, we consider the full-delegation benchmark, that is, the persuasion solution. By \Cref{prop:unrestricted_persuasion}, the optimal experiment will be upper censorship (with cutoff $x^\ast$ and top atom $y^\ast=\mathbb{E}[\omega\mid \omega \geq x^\ast]$) and determined through the tangency condition. Due to the uniform assumption, we obtain for the top atom $y^\ast=(x^\ast+1)/2$ (and equivalently $x^\ast=2y^\ast -1$). Applying the tangency condition \eqref{eq:unrestricted_persuasion_FOC} yields the optimal upper censorship experiment:
\begin{equation*}
 x^\ast =\frac{1}{4}, \quad y^\ast = \frac{5}{8}.
\end{equation*}

\paragraph{MIC experiments.} By \Cref{thm:MIC_characterization}, we can restrict attention to double censorship experiments. With our parametric assumptions, we obtain that for a given top atom $y$, the double-censorship experiment is characterized by 
\begin{align*}
    x(y)= \frac{3}{2}-2y, \quad 
    t(y)= 2y-1, \quad 
    s(y) = 4-6y
\end{align*}
as long as the solution is feasible (i.e., $x(y)\leq x^\ast$, $0<x<y<1$, and $s\geq 0$). Thus, we can characterize the set of implementable top atoms as the constraint
\begin{align*}
    y \in \left[\frac{5}{8},\frac{2}{3}\right].
\end{align*}

\paragraph{Optimal delegation.} For any feasible double-censorship experiment with top atom $y$, the designer's (and DM's) ex-ante payoff is
\begin{align*}
    U_D(y)=&\int_0^{s(y)}\int_{0}^{\omega}(\omega-r)dG(r) \de H(\omega) \\ 
    &+ \left(H(t(y))-H(s(y))\right)\int_0^{x(y)}(x(y)-r) \de G(r)\\
    &+(1-H(t(y)))\int_0^y(y-r) \de G(r).
\end{align*}
Applying the expressions for $s(y),x(y)$ and $t(y)$, we obtain that $U_D(y)$ is strictly increasing on the feasible range of $y$. Hence, the optimal restriction is $y^{\opt}=2/3$ and the corresponding optimal double-censorship experiment is characterized by
\begin{align*}
    s^{\opt}=0,\quad x^{\opt}=\frac{1}{6},\quad t^{\opt}=\frac{1}{3},\quad y^{\opt}=\frac{2}{3}.
\end{align*}
The optimal delegation of information provision induces simple information provision endogenously: a binary signal whether the state is above or below a threshold of $t^{\opt} = 1/3$ is the optimal experiment.

\section{Application: Endogenous Privacy Constraints}\label{sec:privacy_application}

In this section, we show how our framework gives rise to endogenous privacy constraints in the context of recommender systems.
Algorithms are becoming increasingly important in making recommendations to decision makers; for example, in suggesting products to consumers.
Recommendation algorithms often take as input user-specific data, such as their location, age, gender, or browsing history.
Policy discussions frequently raise privacy concerns regarding the use of such data, suggesting a tension between information transmission and privacy protection. Surprisingly, our results imply the opposite: a user who has no intrinsic preference for privacy may be strictly better off under optimally designed privacy constraints than without privacy constraints. 
Even without intrinsic preferences for privacy, such privacy constraints may be optimal to protect consumers against persuasion without giving up on valuable information provision.\footnote{\citet{liang2026algorithm} consider the distinct but related issue of regulating algorithmic inputs to manage a trade-off between fairness and accuracy.}

In the following, we first introduce two prominent notions of privacy constraints. Second, we link these notions to restrictions in our model.
Third, we provide a stylized model of a recommender system persuading a consumer to purchase a product; we show how the consumer can benefit from well-designed privacy constraints, even without intrinsic preferences for privacy. Fourth, we illustrate how privacy constraints can endogenously arise as a form of self-regulation of a hierarchical organization with misaligned preferences over information provision and persuasion.

\subsection{Privacy Constraints}\label{ssec:privacy-constraints}
We consider two notions of privacy constraints: differential privacy (\citet{dwork2006calibrating}) and privacy-preserving signals (\citet{strack2024privacy}).
To that end, we slightly extend our model.
There is a random variable $\bm\theta$ taking values in a measurable space $\Theta$, which is correlated with the payoff-relevant state $\bm\omega$. In particular, we assume that $\omega=v(\theta)$ with $v$ a strictly increasing function. We give an interpretation of this condition in \Cref{ssec:recommender}, in which we apply our discussion of privacy constraints to recommender systems. 
The variable $\bm\theta$ represents privacy-protected characteristics but is itself payoff-irrelevant.
A signal is a pair $(\mathcal{S}, \pi)$, where $\mathcal{S}$ is a measurable space and $\pi\colon [0, 1]\times\Theta\times [0, 1]\to \mathcal{S}$ maps the realizations of the state $\omega$, the privacy-protected characteristic $\theta$, and a randomization device $z$ to signals.
As before, an experiment refers to a distribution of posterior means about the state. We next state the formalizations of the two privacy notions that we study.

\paragraph*{Differential privacy.}
Given a function $\delta\colon \Theta^{2}\to \mathbb{R}$, a signal $(\mathcal{S}, \pi)$ satisfies \emph{differential privacy} if for all $\theta, \theta^{\prime} \in \Theta$ and measurable $S\subseteq \mathcal{S} $ it holds $\Pr(\bm{\pi} \in S\mid \theta) \leq \Pr(\bm{\pi} \in S\mid \theta^{\prime}) \delta(\theta, \theta^{\prime})$.\footnote{To be precise, this is a notion of \emph{metric} differential privacy (\citet{xie2025decademetricdifferentialprivacy}), which generalizes differential privacy (\citet{dwork2006calibrating}) by allowing $\delta$ to vary across characteristics.}
Intuitively, the likelihood ratio of $\theta$ and $\theta^{\prime}$ is bounded by $\delta(\theta, \theta^{\prime})$, bounding the inference one can draw from observing the event $S$.
In practice, $\delta$ often reflects physical distances between characteristics.

\paragraph*{Privacy-preserving signals.} 
Let $\mathcal{P}$ be a family of events $P$ in $\Theta$ that is closed with respect to finite intersections, called privacy sets. 
A signal $(\mathcal{S}, \pi)$ is \emph{privacy preserving} if for all $P\in\mathcal{P}$ it holds $\Pr(\bm{\theta} \in P\mid \pi) = \Pr(\bm{\theta} \in P)$ almost surely; that is, the signal cannot reveal information about any privacy set. Moreover, given some event $P_{0}$, a signal $(\mathcal{S}, \pi)$ is \emph{conditionally privacy-preserving} signal, if for all $P\in\mathcal{P}$ it holds $\Pr(\bm{\theta} \in P\mid \pi, P_0) = \Pr(\bm{\theta} \in P\mid P_0)$ almost surely; i.e. the privacy-constraint is required only conditional on the event $P_0$.

\subsection{Delegation Sets as Privacy Constraints} \label{ssec:delegation-privacy}
In the following, we show that the delegation sets we study are closely linked to privacy constraints as introduced above. To do so, we relate the objects defining the privacy constraints, $\delta$ and $\mathcal{P}$, to restrictions on the set of admissible experiments. 

First, suppose that the experimenter can pick any signal satisfying differential privacy, for some choice of $\delta$. Importantly, note that if a signal $(\mathcal{S}, \pi)$ satisfies $\delta$-differential privacy, then all Blackwell-garblings of this signal satisfy $\delta$-differential privacy as well. Therefore, it follows that if the experimenter optimally picks a signal $(S, \pi)$ that induces an experiment $F$, then this experiment $F$ is a best response to the restriction $F$, $F\in \BR(F)$. To see why, note that the experimenter could have chosen any other signal inducing experiment $\hat{F}\in MPC(F)$, but preferred $F$. Hence, in our model, $F$ is incentive compatible and the designer can implement the $F$ by imposing $F$ itself as a restriction. The same argument applies for privacy-preserving signals given some choice of privacy set $\mathcal{P}$, since the constraint set that $\mathcal{P}$ induces is also closed with respect to garblings.\footnote{In fact, \citet{strack2024privacy} show that the experiments, that can arise under some privacy set, correspond exactly to the MPCs of a certain experiment.}

Second, we argue that optimal restrictions can be implemented via privacy constraints under appropriate conditions on the relation between the payoff-relevant state $\bm{\omega}$ and the random variable $\bm{\theta}$ on which the privacy constraints are imposed. Since the restrictions in our model concern the payoff-relevant state $\bm{\omega}$, the implementation is easiest when $\bm{\omega}$ is strongly correlated with $\bm{\theta}$. Given the strictly monotonic mapping between the two random variables $\bm{\theta}$ and $\bm{\omega}$ that is known to the experimenter, imposing a privacy constraint on the individual-specific data $\theta$ is formally equivalent to imposing a privacy constraint on the estimated match payoff-relevant state $\omega$.

The next corollary shows that any MIC experiment $F$ can be implemented as a best reply to a simple restriction $\bar{F}$, which we shall use to relate restrictions to privacy constraints.

\begin{corollary}\label{cor:double_censorship_implementation}
    If $F$ is MIC double censorship with parameters $(s, t, x, y)$, then $F$ is a best reply to the restriction $\bar{F}$ that is obtained by pooling $H$ on $[s, t]$ to $x$ and otherwise revealing the state.
\end{corollary}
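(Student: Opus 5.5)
The plan is to verify the two conditions for $F\in\BR(\bar F)$: feasibility, meaning $F\in\MPC(\bar F)$, and optimality, meaning a price function in the sense of \citet[Corollary 1]{dworczak2019simple}. Let $\bar F$ coincide with $H$ on $[0,s)$ and on $[t,1]$, and put an atom of mass $H(t)-H(s)$ at $x=\mathbb{E}_H[\bm\omega\mid\bm\omega\in[s,t]]$.

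\emph{Feasibility.} The experiment $F$ is obtained from $\bar F$ by keeping everything below $s$ and the atom at $x$ intact, and pooling the remaining mass of $\bar F$ on $[t,1]$ to its conditional mean. Since $\bar F$ agrees with $H$ there, this conditional mean is $y$. Pooling a set of states to its conditional mean is a garbling, so $F\in\MPC(\bar F)$. In ICDF terms, $I_F$ and $I_{\bar F}$ coincide on $[0,t]$, and on $[t,1]$ the function $I_F$ is the chord that is affine on $[x,y]$ and meets $I_{\delta_\mu}$ from $y$ on. Hence $I_F\le I_{\bar F}$, with equal endpoint values.

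\emph{Optimality.} Since $F$ is MIC, \Cref{thm:MIC_characterization} gives $(x,y)\in P$. I take the price function from \Cref{lemma:price_function_IC}/\Cref{lemma:s_shaped_IC}:
\[
p(m)=G(m)\ \text{for } m\le x,\qquad p(m)=G(y)+g(y)(m-y)\ \text{for } m\ge x.
\]
Continuity at $x$ is exactly the condition $G(x)+g(y)(y-x)=G(y)$. Convexity and $p\ge G$ follow from $S$-shapedness. On $[0,x]\subseteq[0,r_0]$ the function $G$ is convex, and the tangent line at $y$ cuts $G$ at $x$ from above. This requires $g(x)\le g(y)$, which I would check from the tangency geometry: the line lies above $G$ on $[x,1]$ and meets it at $x$ from above. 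On $[x,1]$ the line dominates $G$, and I would argue this as in the proof of \Cref{lemma:s_shaped_IC}.

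The remaining condition is $\int p\,\de F=\int p\,\de\bar F$, and here $\bar F$ is not $F$. On $[0,x]$ the two measures agree except for mass of $\bar F$ on $[0,s)\cup\{x\}$, where $F$ carries the same mass. On $[x,1]$ the function $p$ is affine, and $F$ only moves $\bar F$'s mass on $[t,1]$ to its mean $y\ge x$. So the integrals agree provided $t\ge x$, which holds because $x\in[s,t]$. I also need $\supp F\subseteq\{p=G\}$. The support of $F$ lies in $[0,s]\cup\{x,y\}$, and $s\le x$, so every support point satisfies $p=G$.

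\emph{Main obstacle.} The step I expect to be delicate is confirming that the price function's kink sits at $x$ and not above some mass of $\bar F$ that $F$ fails to pool. In particular, I must check that $\bar F$ has no mass in $(x,t)$. This holds because $\bar F$ places the entire mass of $[s,t]$ at $x$. Given that, the conditions of Dworczak–Martini are met, and $F\in\BR(\bar F)$.
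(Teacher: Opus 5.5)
Your proposal is correct and follows essentially the same route as the paper. It uses the same price function ($p=G$ on $[0,x]$ and the tangent to $G$ at $y$ beyond $x$, valid because $(x,y)\in P$ by \Cref{thm:MIC_characterization}), the same support check, and the same verification of $\int p\,\de F=\int p\,\de\bar F$ from $F$ and $\bar F$ agreeing up to $t$ while $p$ is affine on $[t,1]\subseteq[x,1]$, before invoking Corollary 1 of Dworczak and Martini. The only addition is your explicit check that $F\in\MPC(\bar F)$, which the paper leaves implicit (on $[t,y]$ the ICDF $I_F$ is the tangent to $I_H$ at $t$ rather than a chord, but the inequality $I_F\le I_{\bar F}$ holds either way).
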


Using \Cref{cor:double_censorship_implementation}, we can implement $F$ via differential privacy as follows: Choose $\delta(\omega, \omega^{\prime}) = 1$ if $\omega$ and $\omega^{\prime}$ are both in $[s, t]$, and otherwise put $\delta(\omega, \omega^{\prime}) = \infty$. The differential-privacy constraint reduces to requiring that the signal realization be constant in the payoff-relevant state $\omega$ whenever the state is in $[s, t]$. Thus, every differentially private signal induces an experiment (a distribution of posterior means for the state) that is a mean-preserving contraction of the experiment that reveals the state whenever the state is outside $[s, t]$, and otherwise only reveals that the state is in $[s, t]$. \Cref{cor:double_censorship_implementation} then implies that imposing this experiment as a restriction yields the given MIC experiment $F$ as a best reply. Hence, the chosen differential privacy constraint implements $F$.

Next, we describe how to implement $F$ via privacy sets and conditionally privacy-preserving signals. 
First, we choose as conditioning event $P_0$ the event that the payoff-relevant state $\omega$ falls into the interval $[s,t]$ and let $\mathcal{P}$ be the family of all Borel subsets of $[s, t]$. Conditional on $\omega\notin [s, t]$, the privacy constraint holds trivially, since for all subsets $P$ of $[s, t]$ the probability of the event $\omega \in P$ equals $0$. Conversely, conditional on $\omega\in [s, t]$, the privacy constraint requires any signal to reveal no additional information beyond the state's falling into $[s,t]$. Therefore, every conditionally privacy-preserving signal induces an experiment that is a mean-preserving contraction of the experiment that reveals the state whenever the state is outside $[s, t]$, and otherwise only reveals that the state is in $[s, t]$. As before, it follows that the experimenter chooses the desired MIC experiment $F$ when constrained to use signals that are conditionally privacy-preserving for $\mathcal{P}$.

\subsection{Recommender Systems and Privacy Constraints}\label{ssec:recommender}
We now illustrate how consumers can strictly benefit from privacy constraints in the context of recommender systems, even in the absence of intrinsic preferences for privacy. A platform receives a commission for each sale and wishes to persuade consumers to purchase a product. As products are priced by sellers, the platform regards the price $p \in [0,1]$ as exogenous. The platform has access to consumer-specific data that is generated, for example, from the consumers demographic characteristics and their browsing and purchase history. We assume that the consumer-specific information is the realization of a random variable $\bm{\theta}$ which is distributed according to a full-support distribution $T$ on $[0,1]$. For a consumer described by $\theta$, the platform can estimate her match value $\omega(\theta)$ from purchasing the product $\omega=v(\theta)$, where $v(\cdot)$ is a strictly increasing function.\footnote{Note that this assumes that the platform has sufficient information to predict the consumer's match value perfectly. It is straightforward to relax this assumption by assuming that $\omega=v(\theta)+\eta$ where $\eta$ is unobservable mean-zero noise orthogonal to the consumer type $\theta$.} The consumer has an idiosyncratic preference component $\varepsilon$ distributed according to a full-support distribution $E$, which she observes privately prior to her purchase decision. We assume that her value from not purchasing the object is zero.  Hence, the consumer purchases the product given a belief $m$ about her match value and price $p$ if
\begin{equation}
    m + \varepsilon -p \geq 0 \quad \Longleftrightarrow \quad m \geq p-\varepsilon.
\end{equation}
We can map this environment back into our framework by regarding $p-\varepsilon$ as the outside option. Then, the random outside option $r\equiv p-\varepsilon$ is distributed according $G$, where $G$ is supported on $[0,1]$ and satisfies \Cref{assumption:S-shape} whenever $E$ has full support on $[p-1, p]$ and is $S$-shaped on that interval. The distribution $T$ of the random variable $\bm{\theta}$ describing the consumer data gives rise to the distribution $H$ of estimated match values when setting $T(t)=H(v(t))$ for $t \in [v^{-1}(0),v^{-1}(1)]$. 

\paragraph{Optimal recommendations.} An unregulated platform that maximizes its commissions will use the consumer data to persuade consumers to buy. Thus, it will implement an information policy as in \Cref{prop:unrestricted_persuasion}: send a simple \emph{buy} recommendation for the highest consumer types, and give a more nuanced recommendation to the lower types revealing the estimated match value for the consumer.

\paragraph{Optimal privacy constraints.} A regulator who may be concerned with consumers being persuaded too much by the platform's recommendation system, may want to impose a privacy constraint, limiting how the platform can use consumer-specific data. \Cref{thm:profitable_restriction} shows that a designer aligned with consumers always imposes a strict restriction to increase the consumer's payoff. One way to implement this restriction is to introduce a privacy constraint corresponding to the optimal restriction, as derived above. Hence, our framework reveals that such privacy constraints may arise endogenously not due to intrinsic privacy constraints but as a means to balance the tension between information provision and persuasion.

\subsection{Privacy Constraints as Self-Regulation} 
In the preceding discussion, we assumed that the platform chooses its recommendations to maximize sales. This may be a good approximation of the objective under which engineers design the algorithm, yet, the platform may also be concerned with the consumer's payoffs. To ensure a sufficient customer base while retaining high profits, we assume that the platform maximizes a weighted sum of commissions and the consumer's payoffs. This corresponds to a designer maximizing a weighted sum of the experimenter's and DM's payoffs, and may lead to non-convex preferences of the designer. In the following, we show that even under such payoffs, the platform (designer) will optimally choose to restrict the algorithm (experimenter). Economically, we assume that the platform's leadership can constrain the data use of its recommendation algorithm.

Formally, given a posterior mean $m$ about the match value, the consumer's interim expected payoff equals $\int_{0}^{1} \max\lbrace m - r, 0\rbrace \de G(r)$ for all $m\in [0, 1]$. Integrating by parts, we obtain for the consumer's payoff $I_{G}(m) = \int_{0}^{m} G(r) \de r$, and we observe for later reference that this payoff is strictly convex in $m$. The platform's payoff from commissions remains $G(m)$. Thus, the platform's total payoff is $u_{D}(m) = \lambda I_{G}(m) + (1 - \lambda) G(m)$, for some weight $\lambda \in [0, 1]$ on the consumer's payoffs.

Under this objective, focusing on MIC experiments remains without loss for the platform. Indeed, in \Cref{remark:pareto_dominance} (\Cref{sec:MIC:motivation_and_definition}) we noted that for every incentive-compatible experiment there exists a MIC and preferred to the given experiment by both the DM and experimenter. The characterization of MIC experiments (as double censorship) remains unchanged, as it is entirely independent of the platform's preferences. 

Finally, the platform finds full delegation to the algorithm suboptimal provided that it assigns a non-zero weight $\lambda$ to the consumer's payoff. To see this, recall the derivative \eqref{eq:perturbation_derivative_text} that we derived via a perturbation argument:
\begin{equation*}
    \left(u_{D}^{\prime}(y^{\ast}) - \frac{u_{D}(y^{\ast}) - u_{D}(x^{\ast})}{y^{\ast} - x^{\ast}}\right) \left(1 - H(x^{\ast})\right).
\end{equation*}
From the characterization of the full delegation outcome in \Cref{prop:unrestricted_persuasion}, we know $g(y^{\ast}) = \frac{G(y^{\ast}) - G(x^{\ast})}{y^{\ast} - x^{\ast}}$, reflecting the optimal size of the full delegation pooling region from the algorithm's (experimenter's) perspective.
Thus, the derivative from the perturbation simplifies to:
\begin{equation*}
    \lambda\left(G(y^{\ast}) - \frac{I_{G}(y^{\ast}) - I_{G}(x^{\ast})}{y^{\ast} - x^{\ast}}\right) \left(1 - H(x^{\ast})\right).
\end{equation*}
This derivative is strictly positive since the consumer's payoff $I_{G}$ is strictly convex. Thus, the platform wants to constrain the algorithm's use of information. 

Hence, in a hierarchical organization, privacy constraints may arise endogenously as a form of self-regulation when the preferences regarding information provision and persuasion across the organization are misaligned.

\section{Beyond $S$-shaped Persuasion}\label{sec:discussion}
This section first discusses how our analysis of MIC experiments and optimal restrictions extends beyond our $S$-shaped benchmark. Second, we also provide a connection between our problem and the extreme-point approach in the information-design literature.

\subsection{Profitable Restrictions Beyond \textit{S}-shaped Persuasion}\label{ssec:beyond-s-shape}
In this subsection, we investigate whether the designer can benefit from imposing a restriction when the experimenter's payoff $G$ (i.e. the outside option distribution) is not necessarily $S$-shaped.
Equivalently, under which conditions is full delegation suboptimal?
We argue that two conditions are important for the profitability of restrictions and we formalize the discussion in \Cref{app:general}. 
First, the experimenter has \emph{locally strict curvature preferences}: $G$ is locally strictly convex or strictly concave, and satisfies a technical smoothness condition.
Second, the full delegation outcome admits \emph{incomplete full revelation}: under full delegation, the experimenter fully reveals a non-trivial interval of states, but does not reveal all states.

We begin with two examples to demonstrate the importance of both conditions. 
We then describe how to generalize the insights from these examples.  
In this discussion, the definitions of incentive compatibility and optimality are as in the baseline model.

\paragraph{Uninformed DM.}
Suppose the DM has no private information: the outside option distribution $G$ is a step function with step at $r_{0}$.
In particular, the experimenter does not have locally strict preferences, being indifferent between revealing and obfuscating the states within the subintervals below and above $r_{0}$, respectively.

In this example, the experimenter maximizes the probability of drawing a posterior mean above $r_{0}$.
If $\mu > r_{0}$, the experimenter can induce the DM to take action with probability one regardless of the restriction by choosing the experiment $\delta_{\mu}$; it is then easy to see that the designer cannot gain by imposing a restriction.
If $\mu \leq r_{0}$, then, for an arbitrary restriction $\bar{F}$ and best reply $F$, the largest point in the support of $F$ is at most $r_{0}$.
This implies that $F$ is an MPC of the experiment that reveals all states up to a threshold $s$ and pools the rest to $r_{0}$ (i.e. $r_{0} = \mathbb{E}_{H}[\bm{\omega}\mid \bm{\omega} \in [s, 1]]$, which is feasible since $\mu\leq r_{0}$).
This experiment, in turn, is a best reply to the prior $H$.
Thus, full delegation is optimal.
\citet[Corollary 1]{ichihashi2019limiting} makes the same observation.

\paragraph{$M$-shaped experimenter preferences.}
Suppose the experimenter's payoff $G$ is $M$-shaped as in \Cref{fig:M_shaped}:\footnote{The depicted $G$ is not a CDF. However, transformations of the form $m \mapsto a G(m) + bm$ with $a > 0$ do not alter the experimenter's preferences over experiments since all experiments have mean $\mu$.}
concave on an interval of the lowest states, then convex, and finally concave on an interval of the highest states.
In this example, $G$ has locally strict preferences.
We now discuss incomplete full revelation, which also involves the prior $H$ of the state.

\Cref{fig:M_shaped_no_restriction} depicts a situation in which the binary experiment $F^{\ast}$ that is supported on the peaks of the $M$ is a valid MPC of $H$.
Then, $F^{\ast}$ is also the experimenter's unique best reply under full delegation (apply Corollary 2 of \citet{dworczak2019simple} with the price function in \Cref{fig:M_shaped_no_restriction}). 
The full delegation outcome $F^{\ast}$ does not admit incomplete full revelation since it does not fully reveal the state on any non-trivial interval.
Let us argue why no profitable restriction exists.
\Cref{lemma:price_function_IC} implies that an arbitrary incentive-compatible experiment $F$ must be supported between the two peaks, i.e. $\supp F\subseteq [y_{1}, y_{2}]$.\footnote{Indeed, since $F^{\ast}$ is feasible, the prior mean $\mu$ lies between the two peaks. Thus, the support of $F$ cannot be entirely to the left of the left peak or to the right of the right peak. Since all price functions are convex, and since $F$ is supported on a set where a price function touches $G$, the support of $F$ lies between the peaks.} 
Among all such experiments, the binary experiment $F^{\ast}$ with support $\{y_1,y_2\}$ (and mean $\mu$) is the most informative.
Thus, full delegation is optimal for the designer.

Conversely, suppose the prior $H$ is such that the full delegation outcome $F^{\ast}$ reveals states in an interval $[x_{1}, x_{2}]$ in the valley between the peaks, and pools the remaining states to two points, $y_{1}$ and $y_{2}$, on the left and right slopes, as in \Cref{fig:M_shaped_prof_restriction}.
In particular, $F^{\ast}$ admits incomplete full revelation, and $F^{\ast}$ roughly looks like a lower censorship experiment and an upper censorship experiment patched together.
Following our perturbation argument for upper censorship experiments from the $S$-shaped case, the designer can strictly improve on full delegation by pooling a small interval of states around each of $x_{1}$ and $x_{2}$.

\begin{figure}[t!]
    \centering
    \begin{subfigure}{0.48\textwidth}
        \includegraphics[width=\linewidth]{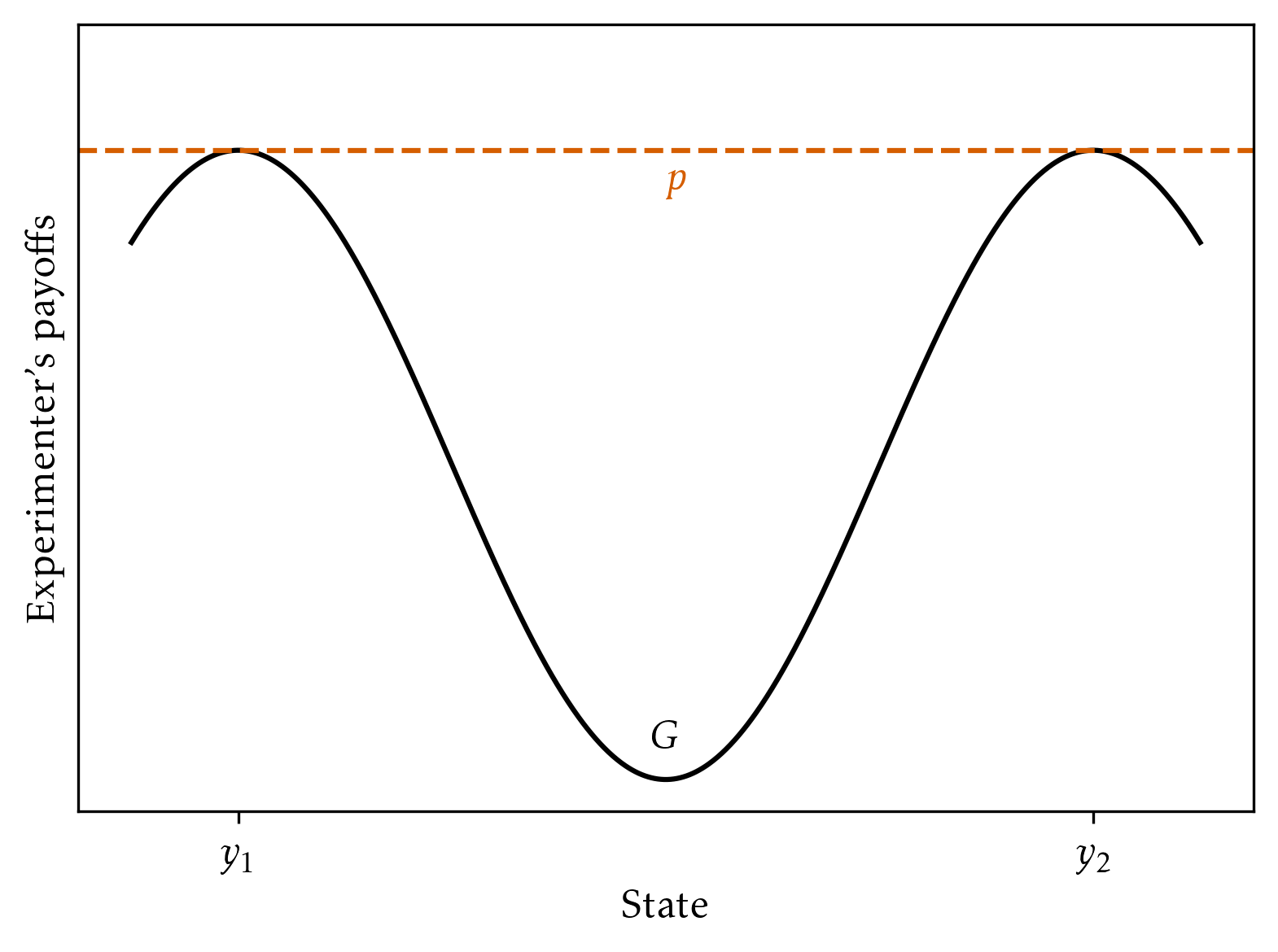}
        \caption{Full delegation is optimal.}
        \label{fig:M_shaped_no_restriction}
    \end{subfigure}
    \hfill
    \begin{subfigure}{0.48\textwidth}
        \includegraphics[width=\linewidth]{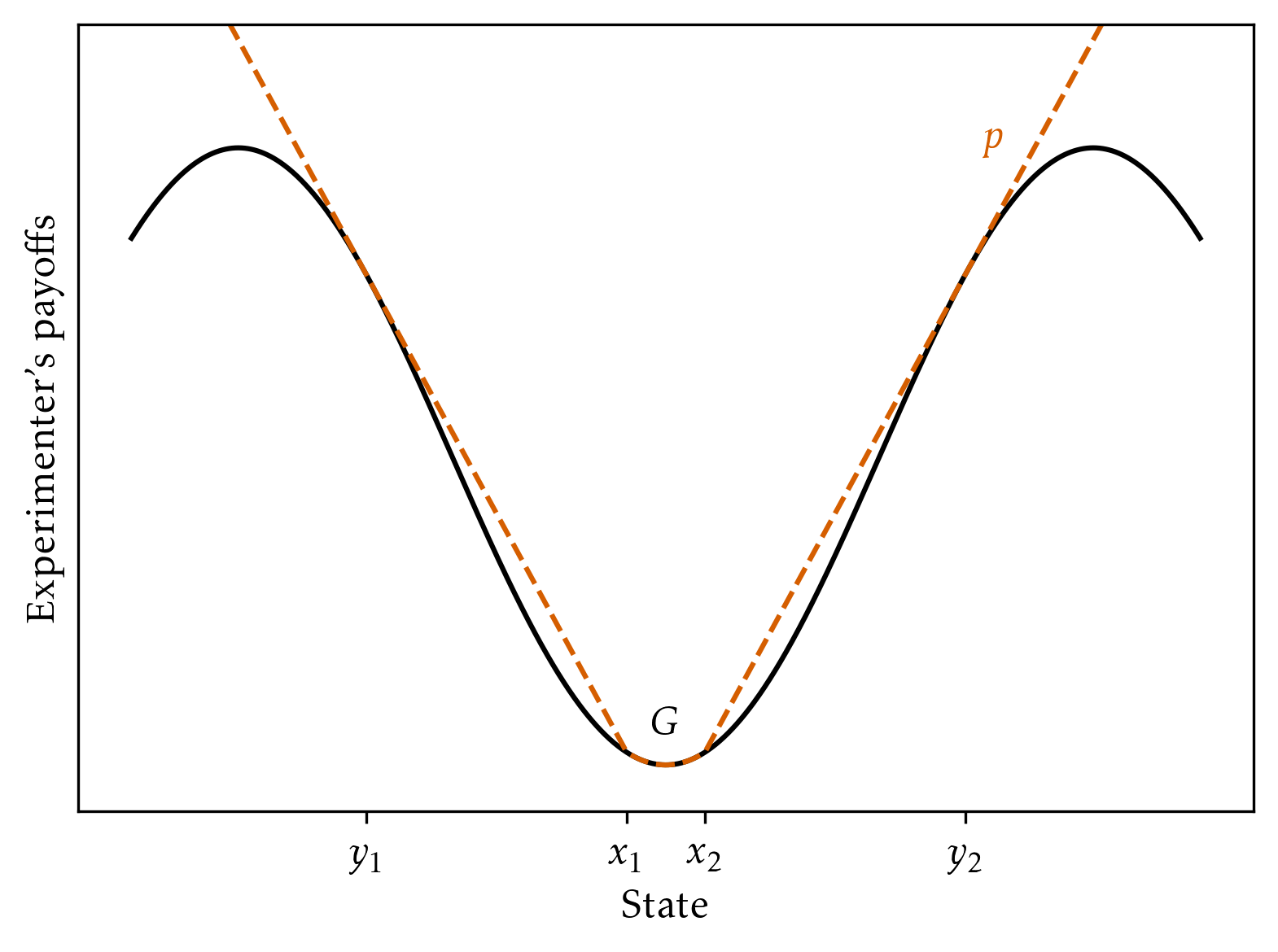}
        \caption{Full delegation is suboptimal.}
        \label{fig:M_shaped_prof_restriction}
    \end{subfigure}
    \caption{\emph{Examples of $M$-shaped delegation of information provision.}}
    \label{fig:M_shaped}
\end{figure}

\paragraph{General experimenter preferences.}
The preceding examples suggest that profitable restrictions exist when the experimenter has locally strict preferences, and the full delegation outcome reveals some but not all states. 
We formalize these notions and the claim in \Cref{app:general}.
Here, we give an informal description.

First, we formalize \emph{locally strict preferences} as follows.
There is a partition of the state space $[0,1]$ into finitely many intervals on each of which $G$ is either strictly convex or strictly concave, and if $G$ is tangential to a price function at specific points, then $G$ is also tangential to a (possibly distinct) price function at nearby \emph{distinct} points. 
With price functions characterizing incentive-compatibility, the key implication is that relevant incentive-compatible experiments can be approximated by distinct incentive-compatible experiments.
This condition is violated, for example, if the DM is uninformed---$G$ is a step function---since then the largest point in the support of any incentive-compatible experiment is at the step of $G$.

Second, we conceptualize \emph{incomplete full revelation} under full delegation by considering the extreme points of $\MPC(H)$. To that end, we first note that there exists an extreme point of $\MPC(H)$ that is designer-preferred among the experimenter's best replies under full delegation. We refer to such an extreme point as a \emph{DEF} (\textbf{d}esigner-optimal \textbf{e}xtreme point under \textbf{f}ull delegation). 

Let us recall the extreme-point characterization as \emph{bi-pooling} experiments; see \citet{kleiner2021extreme,arieli2023optimal}.
For each extreme point $F$, there are countably many, disjoint \emph{bi-pooling intervals} $[x_{1}, x_{2}]$, such that $F$ pools $[x_{1}, x_{2}]$ to at most two atoms and $I_{F}$ lies strictly below $I_{H}$ on the interior $(x_{1}, x_{2})$.
Outside the union of the bi-pooling intervals, $I_{F}$ coincides with $I_{H}$.
We say a non-degenerate interval $[a, b]$ is \emph{fully revealing} if $I_{F}$ coincides with $I_{H}$ on $[a, b]$.
Note, an extreme point may not admit any fully revealing intervals, such as the binary experiment depicted in \Cref{fig:M_shaped_no_restriction}.

We say that an extreme point $F$ of $\MPC(H)$ admits \emph{incomplete full revelation} if $F$ fully reveals the state on non-degenerate intervals to the left and right of some bi-pooling interval, or only to one side of it if the bi-pooling interval admits a single atom and lies at the boundary of the state space.\footnote{These cases do not describe all constellations for how a bi-pooling interval may neighbor a full revelation interval. We do not handle bi-pooling intervals with two atoms where one neighboring interval is the boundary of the state space. We also do not handle two consecutive bi-pooling intervals, each with a single atom, that are preceded and succeeded by full revelation intervals.} 
The experiment depicted in \Cref{fig:M_shaped_no_restriction} does not admit incomplete full revelation since it admits no fully revealing intervals. The experiment depicted in \Cref{fig:M_shaped_prof_restriction} and all non-degenerate upper censorship experiments from the $S$-shaped case admit incomplete full revelation.

Our claim is thus that full delegation is suboptimal if the experimenter has locally strict preferences and there exists a DEF that admits incomplete full revelation.
The formal details of the definitions and the proof are in \Cref{app:general}.

\subsection{MIC Experiments as Extreme Points}\label{sec:discussion:extreme_points}
We establish a link between extreme points of $\MPC(H)$ and MIC experiments for general experimenter payoffs $G$.
The literature has demonstrated how unrestricted persuasion problems are solved via extreme points of $\MPC(H)$.
The contribution of this section is to show that extreme points of $\MPC(H)$ likewise solve the problem of designing restrictions subject to incentive constraints.

Let $G$ be upper semicontinuous with at most finitely many discontinuities at interior points $y_{1}, \ldots, y_{k} \in (0, 1)$, and Lipschitz continuous in each interval $(y_{i} , y_{i+1})$, with $y_{0} = 0$ and $y_{k+1} = 1$; this is part (i) of the regularity condition of \citet{dworczak2019simple}.

The definitions of incentive-compatibility and maximality are as in the baseline model, and MIC experiments are without loss (see \Cref{appendix:mic_wlog}).
Our price-theoretic characterization of incentive-compatibility (\Cref{lemma:price_function_IC}) remains applicable given the regularity assumption on $G$.
In contrast, our characterization of MIC experiments as double censorship makes explicit use of $S$-shapedness of $G$ and, hence, does not apply here.

At first sight, the extreme points of $\MPC(H)$ may seem unrelated to MIC experiments. 
The extreme point characterization reflects the feasibility constraints of being a mean-preserving contraction of $H$; in particular, the characterization is independent of preferences. 
In contrast, MIC experiments are precisely characterized by the experimenter's incentives to garble information.
Nevertheless:

\begin{theorem}\label{thm:general_extreme_points}
    Every MIC experiment is an extreme point of $\MPC(H)$.
\end{theorem}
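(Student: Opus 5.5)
The plan is to combine the price-theoretic characterization of incentive compatibility (\Cref{lemma:price_function_IC}) with the known characterization of extreme points of $\MPC(H)$ as bi-pooling experiments \citep{kleiner2021extreme,arieli2023optimal}. The key observation is that a single price function certifies incentive compatibility for a whole family of experiments. Let $F$ be MIC. By \Cref{lemma:price_function_IC}, fix a convex, continuous $p\geq G$ with $\supp F\subseteq A:=\lbrace m\colon p(m)=G(m)\rbrace$. Since $p$ is continuous and $G$ is upper semicontinuous, $A=\lbrace p\leq G\rbrace$ is closed. Again by \Cref{lemma:price_function_IC} (sufficiency, with the same $p$), \emph{every} experiment $F'\in\MPC(H)$ with $\supp F'\subseteq A$ is incentive compatible. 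Hence maximality of $F$ implies that $F$ admits no proper mean-preserving spread within $\MPC(H)$ that is supported in $A$.

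Next I will use the extreme-point characterization in the form needed. If $F$ is not an extreme point of $\MPC(H)$, then there is a connected component $(a,b)$ of the open set $\lbrace m\colon I_F(m)<I_H(m)\rbrace$ on which $F$ places mass on at least three distinct points. Equivalently, $\supp F\cap(a,b)$ contains three points $m_1<m_2<m_3$; if the mass there is non-atomic, I pick three points of the support instead. This is the contrapositive of the statement that extreme points pool each such component to at most two atoms. I will take it from the cited literature rather than reprove it.

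Given such $m_1<m_2<m_3$, all in $\supp F\subseteq A$, I construct a local spread. I take a small amount of mass $\varepsilon$ from near $m_2$ and move it to $m_1$ and $m_3$ with weights chosen to preserve the mean. If $m_2$ is not an atom, I instead take mass from a small neighborhood of $m_2$ within $\supp F$ and send it to points of $\supp F$ near $m_1$ and $m_3$, which are atoms of $F$ or approximations thereof. Either way the new $F_\varepsilon$ is still supported in $A$, because $A$ is closed and all mass lands on points of $\supp F$. It is a proper mean-preserving spread of $F$. Its ICDF differs from $I_F$ only on $[m_1,m_3]$, where it is larger by at most $O(\varepsilon)$.

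The main obstacle is to verify that $F_\varepsilon\in\MPC(H)$, i.e. $I_{F_\varepsilon}\leq I_H$, and to handle the non-atomic case cleanly. For the first point, $[m_1,m_3]$ is a compact subset of $(a,b)$, so the gap $I_H-I_F$ is continuous and strictly positive there and hence bounded below by some $\eta>0$. Choosing $\varepsilon$ small enough keeps $I_{F_\varepsilon}\leq I_H$, and outside $[m_1,m_3]$ nothing changes. The endpoint condition $I_{F_\varepsilon}(1)=I_H(1)$ holds by mean preservation. Then $F_\varepsilon$ is incentive compatible and a proper mean-preserving spread of $F$, contradicting maximality. Therefore every MIC experiment is an extreme point of $\MPC(H)$.
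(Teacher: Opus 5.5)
Your proof is correct, and it takes a more direct route than the paper's. Both arguments use the same two ingredients. First, the single price function $p$ from \Cref{lemma:price_function_IC} certifies incentive compatibility of every experiment supported in $\supp F$. Second, the bi-pooling characterization of \citet{kleiner2021extreme} reduces extremality to showing that $\supp F$ meets each component $(a,b)$ of $\lbrace I_F<I_H\rbrace$ in at most two points.

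The paper then proceeds in three steps:
\begin{enumerate}
\item It shows that every support point in $(a,b)$ is isolated. It does this by replacing $I_F$ with a chord on $[m_n,m]$ for a monotone sequence $m_n\to m$; the chord stays below $I_H$ only because $m_n$ is close to $m$.
\item It deduces by compactness that $\supp F$ is finite on compact subintervals of $(a,b)$.
\item Only then does it perturb the kink of the now piecewise-affine $I_F$ at the middle of three \emph{consecutive}, hence atomic, support points.
\end{enumerate}

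You skip the isolation and finiteness steps. You spread only an $\varepsilon$-fraction of the $F$-mass near $m_2$ onto $m_1$ and $m_3$, so the increase in the ICDF is a nonnegative bump on $[m_1,m_3]$ of height $O(\varepsilon)$. The uniform gap $I_H-I_F\geq\eta>0$ on $[m_1,m_3]$ then handles, in one stroke, support points that are neither nearby nor consecutive, as well as a middle point that is not an atom. The paper's longer route buys only the intermediate fact that the support is locally finite inside each component, and your argument shows this is not needed for the theorem.

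Two small cleanups:
\begin{itemize}
\item You do not need $m_1,m_3$ to be atoms or to approximate anything. Adding mass at any point of $\supp F$ keeps $\supp F_\varepsilon\subseteq\supp F$, which also makes the closedness of $A$ unnecessary.
\item In the non-atomic case, the split weights must depend on the location $m$ of each removed unit of mass, namely $\frac{m_3-m}{m_3-m_1}$ and $\frac{m-m_1}{m_3-m_1}$. Equivalently, replace $\varepsilon F|_N$ by the two-point law on $\lbrace m_1,m_3\rbrace$ with the same mass and mean. This is a strict mean-preserving spread of $\varepsilon F|_N$ because $N\subset(m_1,m_3)$.
\end{itemize}
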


Since \Cref{thm:general_extreme_points} shows that MIC experiments are extreme points of the \emph{entire} set $\MPC(H)$, the literature's characterization of the extreme points as bi-pooling experiments applies under the novel MIC constraint. 

Our proof leverages the characterization of extreme points as bi-pooling experiments.
For a high-level intuition, note that any experiment $F\in\MPC(H)$ is weakly increasing, and that the ICDF of $F$ lies weakly below the ICDF of the prior $H$. Roughly, the extreme point characterization states that at least one of these two constraints must bind locally. When the monotonicity constraint binds on an interval, $F$ assigns no mass to this interval; when the ICDF constraint binds, $F$ is maximally informative on this interval. Likewise, incentive-compatibility and maximality are characterized by two constraints on the respective experiment. Incentive-compatibility requires that $F$ assigns no mass to certain intervals (\Cref{lemma:price_function_IC}); this is akin to the binding monotonicity constraint. Maximality requires that $F$ is maximally informative otherwise; this is akin to the binding informativeness constraint.

\Cref{thm:general_extreme_points} showcases
how the literature's extreme-point and linear-duality approaches can be combined to solve information design problems under incentive-compatibility and maximality constraints.
In \Cref{appendix:extreme_point_representation}, we show that incentive-compatible bi-pooling experiments are without loss more generally: each incentive-compatible experiment can be represented as a mixture of incentive-compatible bi-pooling experiments. This representation is useful when the designer's preferences do not justify MIC experiments.

\section{Conclusions} \label{sec:conclusion}

We analyze delegated information provision when the party producing information, the experimenter, has persuasion incentives and can always garble any admissible experiment before it reaches a decision maker. The designer cannot impose experiments upon the experimenter, but exerts control by imposing a Blackwell-upper bound on informativeness.

The main conceptual contribution is to reduce optimal restriction design to the study of maximal incentive-compatible (MIC) experiments. An experiment is incentive compatible if it is self-enforcing: when the designer sets it as the upper bound, the experimenter optimally implements it rather than garbling. It is maximal if no strictly more informative experiment is also incentive compatible. Because the designer values information, we show that it is without loss to compare only MIC experiments.

In the baseline case of $S$-shaped experimenter preferences, MIC experiments are fully characterized as double censorship experiments. These reveal low states, pool an intermediate region to an interior atom, and pool high states to a top atom. Crucially, full delegation produces upper censorship (pooling all sufficiently high states), which is itself an extreme case of double censorship. Yet, we show that the designer always strictly prefers a nontrivial double-censorship restriction to full delegation.

The key takeaway is that the designer can protect the decision maker from some unwanted persuasion, while retaining the benefits of information provision. Optimal restrictions reallocate informativeness across states to discipline persuasion: sacrificing some precision in intermediate states relaxes the experimenter's incentives to censor high states, yielding a strict gain for the designer. 

We show how our results can be applied in the context of recommender systems and rationalize privacy constraints to steer the tension between information provision and persuasion by a biased experimenter.

\appendix
\section{Proofs}\label{appendix}

\subsection{Existence of an Optimal Restriction}\label{appendix:optimal_existence}

Equip $\MPC(H)$ with the $L^{1}$-norm, making it compact.
\begin{lemma}\label{lemma:optimal_existence}
    The set of IC experiments is a compact subset of $\MPC(H)$.
    There exists an optimal IC experiment.
\end{lemma}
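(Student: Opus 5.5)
The argument is a closed-graph argument: the incentive-compatible experiments are those $F$ with $F\in\BR(F)$, and this condition survives limits. Since $\MPC(H)$ is compact in the $L^{1}$-norm, it suffices to show that $\mathcal{F}^{\IC}$ is closed. Existence of an optimal IC experiment then follows because $F\mapsto\int u_{D}\de F$ is continuous on $\MPC(H)$ and $\mathcal{F}^{\IC}$ is nonempty (it contains $\delta_{\mu}$). For this continuity I use that $u_D$ is differentiable with bounded derivative, hence Lipschitz, and integrate by parts: $\int u_D\de F = u_D(1)-\int_0^1 u_D'(m)F(m)\de m$. This expression is $L^1$-continuous in $F$. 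The same integration by parts applied to $G$, which is $C^1$, shows that $F\mapsto\int G\de F$ is $L^1$-continuous.

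For closedness, take IC experiments $F_n\to F$ in $L^{1}$. I want $F\in\BR(F)$, that is, $\int G\de F\ge\int G\de K$ for every $K\in\MPC(F)$. Fix such a $K$. The key step is to approximate $K$ by experiments $K_n\in\MPC(F_n)$ with $K_n\to K$. Once this is done, $\int G\de F_n\ge\int G\de K_n$ holds for every $n$ because $F_n\in\BR(F_n)$. Passing to the limit with the continuity above gives $\int G\de F\ge \int G\de K$, as required.

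The construction of $K_n$ is the main obstacle, i.e.\ lower hemicontinuity of $\bar F\mapsto\MPC(\bar F)$. I would work with ICDFs. Convergence $F_n\to F$ in $L^1$ implies $I_{F_n}\to I_F$ uniformly, since $|I_{F_n}-I_F|\le\|F_n-F\|_{1}$. All means equal $\mu$, so $I_{F_n}(1)=I_F(1)=1-\mu$. My first attempt is the convex combination $I_{K_n}=\lambda_n I_K+(1-\lambda_n)I_{\delta_\mu}$ with $\lambda_n\uparrow 1$. This is an ICDF of an MPC of $K$, and I would need $I_{K_n}\le I_{F_n}$. The difficulty is that the slack $I_F-I_{\delta_\mu}$ vanishes near $0$, near $1$, and possibly at $\mu$, so this mixture need not lie below $I_{F_n}$ there.

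The cleaner fallback is $I_{K_n}=\min\{I_K,I_{F_n}\}$ convexified, i.e.\ its greatest convex minorant. This function is convex. It lies below $I_{F_n}$. It lies above $I_{\delta_\mu}$, because $I_{\delta_\mu}$ is convex and lies below both $I_K$ and $I_{F_n}$. At $0$ it equals $0$, and at $1$ it equals $1-\mu$, so the endpoint conditions hold. Its slope lies in $[0,1]$, since it is sandwiched between $I_{\delta_\mu}$ and a function with slope at most $1$ and the endpoint values match. Hence it is the ICDF of some $K_n\in\MPC(F_n)$. Finally, since $\|\min\{I_K,I_{F_n}\}-I_K\|_\infty\le\|I_{F_n}-I_F\|_\infty\to0$, and taking the convex minorant is $1$-Lipschitz in sup norm, $I_{K_n}\to I_K$ uniformly. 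Uniform convergence of ICDFs of distributions on $[0,1]$ yields convergence of the CDFs at continuity points. Hence $K_n\to K$ in $L^1$ by bounded convergence, completing the argument.
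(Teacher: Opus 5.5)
Your proof is correct and follows essentially the paper's route: you get closedness of the IC set from lower hemicontinuity of $\bar{F}\mapsto\MPC(\bar{F})$, building the approximating experiments exactly as the paper does from the lower convex hull of $\min\{I_{K}, I_{F_{n}}\}$, and then use compactness of $\MPC(H)$ and continuity of the designer's payoff. The differences are minor: you verify the closed-graph property by hand instead of citing Berge's theorem and Aliprantis--Border; your nonexpansiveness bound for the convex minorant gives convergence of the whole sequence, so you avoid the paper's Arzel\`a--Ascoli and Helly subsequence extractions; and you use continuity of $G$, which holds under the baseline assumptions, whereas the paper's remark claims upper semicontinuity suffices.
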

Existence can be shown by adapting Lemma 1 of \citet{lipnowski2020attention}, who consider settings with an abstract state.
For the sake of completeness, we give a simple argument for our setting with posterior-mean payoffs.
The proof does not use $S$-shapedness, but only upper semicontinuity of the designer's and the experimenter's payoffs.
\begin{proof}[Proof of \Cref{lemma:optimal_existence}]
    Existence of an optimal IC experiment follows from Berge's Maximum Theorem and upper semicontinuity of the designer's payoffs if we can show that the set of IC experiments is compact.
    Compactness follows if we can show that $\BR$ is an upper hemicontinuous, nonempty- and compact-valued correspondence from $\MPC(H)$ to itself.
    These properties of $\BR$ follow from Lemma 17.30 of \citet{aliprantis2006infinite} and upper semicontinuity of the experimenter's payoffs if we can show that $\MPC$ is a continuous, compact-valued correspondence from $\MPC(H)$ to itself.
    Compact-valuedness and upper hemicontinuity are straightforward.

    As for lower hemicontinuity, let $(\bar{F}_{n})_{n\in\mathbb{N}}$ be a sequence in $\MPC(H)$ converging to $\bar{F}\in\MPC(H)$, and let $\Phi\in\MPC(\bar{F})$.
    We have to find a subsequence of $(\Phi_{n})_{n\in\mathbb{N}}$ converging to $\Phi$ such that $\Phi_{n}\in\MPC(\bar{F}_{n})$ for all $n$ in the subsequence.
    Note, $L^{1}$-convergence of $(\bar{F}_{n})_{n\in\mathbb{N}}$ to $\bar{F}$ implies pointwise convergence of $(I_{\bar{F}_{n}})_{n\in\mathbb{N}}$ to $I_{\bar{F}}$.
    For all $n$ and $m$, let $\tilde{J}_{n}(m) = \min\lbrace I_{\Phi}(m), I_{\bar{F}_{n}}(m)\rbrace$.
    
    For all $n$, since ICDFs are $1$-Lipschitz and $\tilde{J}_{n}$ is the pointwise minimum of ICDFs, also $\tilde{J}_{n}$ is $1$-Lipschitz.
    Thus, the Arzel{\`a}--Ascoli Theorem delivers a uniformly convergent subsequence of $(\tilde{J}_{n})_{n}$. 
    By possibly relabeling, let this be the entire sequence.
    The limit of the sequence $(\tilde{J}_{n})_{n}$ equals $I_{\Phi}$ since $I_{\bar{F}_{n}}\to I_{\bar{F}}$ pointwise and since $I_{\Phi} \leq I_{\bar{F}}$.
    Now, for all $n$, let $J_{n}$ be the lower convex hull of $\tilde{J}_{n}$.
    Let $\varepsilon_{n} = \Vert \tilde{J}_{n} - I_{\Phi}\Vert_{\infty}$, so that $I_{\Phi} - \varepsilon \leq \tilde{J}_{n} \leq I_{\Phi}$.
    Since $I_{\Phi} - \varepsilon_{n}$ is convex, and since $J_{n}$ is the lower convex hull of $\tilde{J}_{n}$, we have $I_{\Phi} - \varepsilon_{n} \leq J_{n} \leq I_{\Phi}$.
    In particular, $(J_{n})_{n}$ converges uniformly to $I_{\Phi}$.
    Since, for all $n$, the function $J_{n}$ is convex and weakly increasing, it admits a weakly positive, weakly increasing derivative $\Phi_{n}$; i.e. $J_{n} = I_{\Phi_{n}}$.
    Since $\tilde{J}_{n}$ is $1$-Lipschitz and $J_{n}$ is its lower convex hull, also $J_{n}$ is $1$-Lipschitz.
    In particular, $\Phi_{n}(1) \leq 1$ so that $\Phi_{n}$ is a CDF and, hence, in $\MPC(H)$.
    In fact, since $I_{\Phi_{n}} \leq J_{n} \leq \tilde{J}_{n} \leq I_{\bar{F}_{n}}$, we have $\Phi_{n} \in \MPC(\bar{F}_{n})$.
    Finally, by possibly passing to a further subsequence and invoking Helly's Selection Theorem, let $(\Phi_{n})_{n}$ converge pointwise to, say, $\tilde{\Phi}$.
    By Dominated Convergence, the sequence $(I_{\Phi_{n}})_{n}$ (i.e. the sequence $(J_{n})_{n}$) converges pointwise to $I_{\tilde{\Phi}}$. We have already argued that the sequence $(I_{\Phi_{n}})_{n}$ (i.e. the sequence $(J_{n})_{n}$) converges pointwise to $I_{\Phi}$.
    Thus, $I_{\Phi} = I_{\tilde{\Phi}}$, implying $\Phi = \tilde{\Phi}$ almost everywhere.
    Thus, $(\Phi_{n})_{n}$ converges pointwise a.e. to $\Phi$ and, hence, also with respect to the $L^{1}$-norm.
\end{proof}

\subsection{Proof of \headercref{Lemma}{{lemma:MIC_wlog}}}\label{appendix:mic_wlog}
Consider the auxiliary problem of maximizing $\int \exp(m) \de\tilde{F}(m)$ across incentive-compatible experiments $\tilde{F}$ in $\MPC(H)\cap\MPS(F)$. Incentive-compatible experiments form a compact subset of $\MPC(H)$ when $G$ is upper semicontinuous (\Cref{lemma:optimal_existence}). In particular, the constraint set of the auxiliary problem is also compact. Hence, a solution exists. By construction, the solution is incentive-compatible and a mean-preserving spread of $F$. The solution is maximal since $\exp$ is strictly convex, and since any experiment that is a mean-preserving spread of $\tilde{F}$ is also a mean-preserving spread of $F$. \qed

\begin{remark}
Our proof of \Cref{lemma:MIC_wlog} only uses upper semicontinuity of $G$.
\end{remark}

\subsection{Proof of \headercref{Lemma}{lemma:s_shaped_IC}}
    To prove one part of the equivalence, let $F\in\MPC(H)$ be IC and non-degenerate. 
    Let $p$ be as in \Cref{lemma:price_function_IC}.
    If $\max\supp F \leq r_{0}$, then, definitionally, $F$ assigns no mass to $(r_{0}, 1]$.
    The proof is complete for $x = y = r_{0}$, since $(r_{0}, r_{0}) \in P$.
    In what follows, let $\max\supp F > r_{0}$.
    Let $y = \max\supp F$, so that $p(y) = G(y)$.
    We find a point $x < r_{0}$ such that $(x, y)\in P$ and such that $F$ assigns no mass to $(x, y)\cup (y, 1]$.
    
    Recall that $G$ is strictly concave on $[r_{0}, 1]$.
    Since $p$ is convex and $p\geq G$ holds, there exists at most one point in $[r_{0}, 1]$ at which $p$ and $G$ coincide.
    Since $y > r_{0}$, this point is $y$.
    Thus, $\supp F\cap (r_{0}, 1]$ contains only $y$.
    Since $F$ is non-degenerate, the intersection $\supp F\cap [0, r_{0}]$ is non-empty.
    Let $\tilde{x} = \max(\supp F \cap [0, r_{0}])$.
    In particular, $p(\tilde{x}) = G(\tilde{x})$ and $\tilde{x} \leq r_{0}$.
   
    Now, recall $p\geq G$ and $p(y) = G(y)$, so that $p - G$ is minimized at $y$.
    Since $p$ is convex while $G$ is concave on a neighborhood of $y$, the subdifferential of $p$ at $y$ contains $g(y)$.
    Since also $p(\tilde{x}) = G(\tilde{x})$, we infer $G(\tilde{x}) + g(y)(y - \tilde{x}) = p(\tilde{x}) + g(y)(y - \tilde{x}) \geq p(y) = G(y)$, where the last inequality follows by definition of the subdifferential.
    We also know $G(r_{0}) + g(y)(y - r_{0}) < G(y)$ since $G$ is strictly concave on $[r_{0}, 1]$.
    Thus, $\tilde{x} < r_{0}$ and, by continuity, there exists $x\in [\tilde{x}, r_{0})$ such that $G(x) + g(y)(y - x) = G(y)$.
    That is, $(x, y) \in P$. 
    Since $\tilde{x} = \max(\supp F \cap [0, r_{0}]) \leq x$ and $\supp F\cap (r_{0}, 1] = \lbrace y\rbrace$, we conclude that $F$ assigns no mass to $(x, y)\cup (y, 1]$.

    To prove the other part of the equivalence, let $(x, y)\in P$ and $F \in \MPC(H)$ be such that $F$ is non-degenerate and assigns no mass to $(x, y)\cup (y, 1]$.
    In view of \Cref{lemma:price_function_IC}, we can show that $F$ is IC by finding a convex function $p$ such that $p\geq G$ and such that $p = G$ holds on $\supp F$.
    Let $p(m) = G(m)$ for all $m\in [0, x]$, and $p(m) = G(x) + g(y)(m - x)$ for all $m \in [x, 1]$.
    Note $p(y) = G(y)$ since $(x, y)\in P$.
    Clearly, $p = G$ holds on $\supp F$ since $F$ assigns no mass to  $(x, y)\cup (y, 1]$.
    We next argue $p\geq G$.
    On $[r_{0}, 1]$, the function $G$ is concave whereas $p$ is affine.
    Since $p(y) = G(y)$, it holds $p \geq G$ on $[r_{0}, 1]$.
    On $[x, r_{0}]$ (recall $x \leq r_{0}$ from the definition of $P$), the function $G$ is convex whereas $p$ is affine.
    Since $p(x) = G(x)$ but $p(r_{0}) \geq G(r_{0})$ (as already argued), we find $p\geq G$ on $[x, r_{0}]$.
    It remains to argue that $p$ is convex.
    Clearly, $p$ is convex on $[0, x]$.
    Since $p$ is affine on $[x, 1]$ with slope $g(y)$, global convexity follows if we can show $g(y) \geq g(x)$.
    This inequality, in turn, holds since $p\geq G$ and $p(x) = G(x)$ both hold.
    \qed

\subsection{Proof of \headercref{Lemma}{{lemma:p_order}}}
    For all $(x, y)$ such that $0\leq x \leq r_{0} \leq y\leq 1$, let $\rho(x, y) = G(x) + g(y)(y - x) - G(y)$.
    By definition of $P$, if $(x, y)\in P$, then $\rho(x, y) = 0$.
    We show that $\rho$ is strictly single-crossing from above in each argument (holding the other fixed).
    Therefore, if $(x, y)$ and $(x^{\prime}, y^{\prime})$ are both in $P$, then $[x, y]$ and $ [x^{\prime}, y^{\prime}]$ are ordered by set inclusion.

    For all $(x, y)$ such that $0\leq x \leq r_{0} \leq y\leq 1$, it holds $\partial \rho(x, y) / \partial y = g^{\prime}(y) (y - x) \leq 0$, where the inequality holds since $x \leq r_{0} \leq y$ and since $G$ is concave on $[r_{0}, 1]$; further, $g^{\prime}(y) < 0$ if $r_{0} < y$.
    Thus, $\tilde{y}\mapsto \rho(x, \tilde{y})$ (defined on $[r_{0}, 1]$) is strictly single-crossing from above.

    For all $(x, y)$ such that $0\leq x \leq r_{0} \leq y\leq 1$, it holds $\partial \rho(x, y) / \partial x = g(x) - g(y)$.
    We claim $g(x) < g(y)$ if $\rho(x, y) = 0$ and $x < y$, which suffices to establish that $\tilde{x}\mapsto \rho(\tilde{x}, y)$ (defined on $[0, r_{0}]$) is strictly single-crossing from above.
    Towards a contradiction, let $g(x) \geq g(y)$. Since $g$ is strictly quasiconcave, all $r\in(x, y)$ satisfy $g(r) > g(y)$, implying $G(y) - G(x) > g(y)(y - x)$, which contradicts $\rho(x, y) = 0$.
    \qed

\subsection{Proof of \headercref{Theorem}{{thm:MIC_characterization}}}
    For reference, we record the ICDF of a double censorship experiment $F$ with thresholds $(s, t)$ and atoms $(x, y)$:
    \begin{equation*}
        \forall m\in[0, 1],\quad
        I_{F}(m)
        =
        \begin{cases}
            I_{H}(m)\quad&\mbox{if }m\in [0, s],\\
            I_{H}(s) + H(s)(m - s)\quad&\mbox{if }m\in (s, x],\\
            I_{H}(t) + H(t)(m - t)\quad&\mbox{if }m\in (x, y],\\
            m - \mu,\quad&\mbox{if }m\in (y, 1].
        \end{cases}
    \end{equation*}

    \Cref{lemma:MIC_characterization:necessity,lemma:MIC_characterization:sufficiency} below together prove \Cref{thm:MIC_characterization}.
    
    \begin{lemma}\label{lemma:MIC_characterization:necessity}
        If $F$ is an MIC experiment, then $F$ is non-degenerate double censorship for a pair of thresholds $(s, t)$ and atoms $(x, y)$ such that $(x, y)\in P$ and
        $s \leq x^{\ast} \leq t$ and $0 < x < y < 1$.
    \end{lemma}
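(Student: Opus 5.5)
Let $F$ be MIC. First, $F$ is non-degenerate. Indeed, $\delta_{\mu}$ is a proper MPC of the incentive-compatible experiment $F^{\ast}$, so it cannot be maximal. Then \Cref{lemma:s_shaped_IC} yields $(x,y)\in P$ such that $F$ assigns no mass to $(x,y)\cup(y,1]$. Consequently $I_F$ is affine on $[x,y]$, and $I_F(m)=m-\mu$ on $[y,1]$. By \Cref{lemma:p_order}, either $[x,y]\subseteq[x^\ast,y^\ast]$ or $[x^\ast,y^\ast]\subseteq[x,y]$, and I treat these two cases in turn, as in the sketch.

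\textbf{Case 1: $[x,y]\subseteq[x^\ast,y^\ast]$.} I show $I_F\le I_{F^\ast}$ on $[0,1]$.
\begin{itemize}
\item On $[0,x^\ast]$ this holds because $I_{F^\ast}=I_H\ge I_F$ there.
\item On $[y^\ast,1]$ this holds because both ICDFs equal $m-\mu$; for $F$ this uses $y\le y^\ast$.
\item On $[x^\ast,y^\ast]$, $I_{F^\ast}$ is affine and $I_F$ is convex, with $I_F\le I_{F^\ast}$ at both endpoints, so the inequality holds in between.
\end{itemize}
Hence $F\in\MPC(F^\ast)$. Since $F^\ast$ is incentive compatible, maximality forces $F=F^\ast$. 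This is double censorship with $s=t=x^\ast$ and atoms $(x^\ast,y^\ast)\in P$ by \eqref{eq:unrestricted_persuasion_FOC}, with $0<x^\ast<y^\ast<1$ by \Cref{prop:unrestricted_persuasion}.

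\textbf{Case 2: $[x^\ast,y^\ast]\subseteq[x,y]$ (strictly).} I construct a double censorship $\Phi$ with atoms $(x,y)$ and $I_\Phi\ge I_F$.

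The first observation is that $y$ must be an atom of $F$ if $y>r_0$. If $y=r_0$, then $x=r_0$, and I would rule this out by comparison with $F^\ast$, since $x\le x^\ast<r_0$ in this case. If $y$ carried no mass, then $F$ would assign no mass to $(x,1]$, and the mean could not be matched against $H$ given $x<\mu$. More simply, I will not need this; I only use that $I_F(m)=m-\mu$ on $[y,1]$ and that $I_F$ is affine on $[x,y]$.

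Construction of $t$. Let $\ell$ be the line through $(y,y-\mu)$ that is tangent to the convex function $I_H$ at some point $t\le y$, taking the tangent from that point from below. Such a $t$ exists because $I_H(y)\ge y-\mu$ and $I_H$ approaches $m-\mu$ only at $m=1$. Then $y=\mathbb{E}_H[\bm\omega\mid\bm\omega\ge t]$, since the tangent line at $t$ is $I_H(t)+H(t)(m-t)$ and it hits $m-\mu$ exactly at the conditional mean. Because $I_F$ is affine on $[x,y]$, lies below $I_H$, and passes through $(y,y-\mu)$, its slope is at least that of $\ell$ (a clockwise rotation), so $\ell\ge I_F$ on $[x,y]$.

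Construction of $s$. Next take $s\le x$ so that the tangent to $I_H$ at $s$ meets $\ell$ exactly at $m=x$, which makes $x=\mathbb{E}_H[\bm\omega\mid\bm\omega\in[s,t]]$. Let $\Phi$ have ICDF $\max\{I_H\text{ tangent pieces}\}$, as in the displayed formula. Then $\Phi$ puts no mass on $(x,y)\cup(y,1]$, so it is incentive compatible by \Cref{lemma:s_shaped_IC}, and $I_\Phi\ge I_F$: on $[0,x]$ because $I_F$ is convex and lies below $I_H$, with $I_F(x)\le\ell(x)$, and on $[x,1]$ by the rotation argument. Maximality gives $F=\Phi$.

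Remaining inequalities. The bounds $s\le x^\ast\le t$ and $0<x<y<1$ come out of the construction once existence is settled. Nestedness gives $y\ge y^\ast$, hence $t\le x^\ast$ would force $y\le y^\ast$; this yields $t\ge x^\ast$, and symmetrically $s\le x^\ast$ from $x\le x^\ast$. Non-degeneracy together with $x<r_0<y$ gives the strict bounds; $y<1$ holds since otherwise $t=1$ and $\Phi$ assigns mass zero to the atom $y$, contradicting non-degeneracy via $x<\mu$.

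\textbf{Main obstacle.} The hard part is guaranteeing the existence of $s\ge0$ with the tangent intersection landing exactly at $x$, i.e. that $x$ lies between $t$'s tangent-line geometry and $0$. For this I would use $x\le x^\ast$ and $y\ge y^\ast$: the tangent at $s=t$ meets $\ell$ at $t\ge x$, while the tangent at $s=0$ meets $\ell$ at a point $\le x$. The latter is where the assumption $I_F\le I_H$ and the fact that $I_F$ touches neither tangent must be exploited, and continuity in $s$ then finishes.
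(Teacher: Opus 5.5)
Your route is the same as the paper's. You use non-degeneracy, then \Cref{lemma:s_shaped_IC}, then the case split from \Cref{lemma:p_order}, obtaining $F=F^{\ast}$ in the first case. In the second case you build a double-censorship MPS $\Phi$ with atoms $(x,y)$ that maximality forces to equal $F$. However, two steps do not go through as written, and one of them is wrong.

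\emph{The argument for $y<1$ fails.} If $y=1$, then $t=1$ and $x=\mathbb{E}_{H}[\bm{\omega}\mid\bm{\omega}\in[s,1]]\ge\mu$, so the inequality $x<\mu$ you invoke is false in exactly this case. More generally, nothing in the model delivers $x<\mu$; even $x^{\ast}<\mu$ can fail when the prior is concentrated near $0$. Non-degeneracy does not help either: $F$ would simply be upper censorship with threshold $s$ and atom $x$, which is non-degenerate whenever $s>0$. The contradiction has to come from maximality, as in the paper. Since $y=1>y^{\ast}$, the strict single crossing in the proof of \Cref{lemma:p_order} gives $x<x^{\ast}$. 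So $F$ is upper censorship with threshold $s\le x<x^{\ast}$, hence a proper MPC of $F^{\ast}$, which contradicts maximality.

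\emph{The existence of $s$ is left open.} You flag it as the main obstacle, and the facts you say must be exploited ($I_F\le I_H$, ``$I_F$ touches neither tangent'') are not the relevant ones. Since $I_H(0)=H(0)=0$, the tangent to $I_H$ at $0$ is the zero line, so your lower-end condition is exactly $\ell(x)\ge 0$. This is immediate from $\ell(x)\ge I_F(x)\ge 0$, which is your own rotation step. The paper closes the step this way: set $\xi=\ell(x)$. Then $0\le I_F(x)\le \xi\le I_H(x)$, where the upper bound holds because $\ell$ is tangent to the convex $I_H$. Apply the intermediate value theorem to $s\mapsto I_H(s)+H(s)(x-s)$, which runs continuously from $0$ at $s=0$ to $I_H(x)$ at $s=x$; this yields $s\in[0,x]$. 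This parametrization also avoids your degenerate endpoint $s=t$, where the two tangents coincide. Note too that the rotation step needs $t\in[x,y]$, which you only obtain afterwards from $t\ge x^{\ast}\ge x$; that inequality should be established first.
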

    
    \begin{proof}[Proof of \Cref{lemma:MIC_characterization:necessity}]
        The experiment must be non-degenerate since, otherwise, it is a strict MPC of $F^{\ast}$, contradicting maximality.
        Hence, we can appeal to \Cref{lemma:s_shaped_IC} to find $(x, y)\in P$ such that $F$ assigns no mass to $(x, y)\cup (y, 1]$.

        We next claim $[x^{\ast}, y^{\ast}] \subseteq [x, y]$.
        By \Cref{lemma:p_order}, it holds $[x^{\ast}, y^{\ast}] \subseteq [x, y]$ or $[x, y] \subseteq [x^{\ast}, y^{\ast}]$.
        In the first case, we are done, so let $[x, y] \leq [x^{\ast}, y^{\ast}]$.
        We show that $F$ is an MPC of $F^{\ast}$; by maximality, it follows $F = F^{\ast}$ and, hence, $[x, y] = [x^{\ast}, y^{\ast}]$.
        To show $F$ is an MPC of $F^{\ast}$, we show $I_{F}\leq I_{F^{\ast}}$ on $[0, 1]$.
        Clearly, $I_{F} \leq I_{H} = I_{F^{\ast}}$ on $[0, x^{\ast}]$.
        Next, we know $I_{F}$ coincides with $I_{\delta_{\mu}}$ on $[y, 1]$, whereas $I_{F^{\ast}}$ coincides with $I_{\delta_{\mu}}$ on $[y^{\ast}, 1]$.
        Since $y \leq y^{\ast}$, we infer $I_{F} = I_{F^{\ast}}$ on $[y^{\ast}, 1]$.
        In summary, $I_{F}\leq I_{F^{\ast}}$ on $[0, x^{\ast}]\cup [y^{\ast}, 1]$.
        Finally, since $F^{\ast}$ assigns no mass to $(x^{\ast}, y^{\ast})$, the integrated CDF $I_{F^{\ast}}$ is affine on $[x^{\ast}, y^{\ast}]$.
        Since $I_{F}$ is convex and lies below $I_{F^{\ast}}$ at the endpoints of this interval (as already argued), it follows $I_{F}\leq I_{F^{\ast}}$ also on $[x^{\ast}, y^{\ast}]$.
        Thus, $I_{F}\leq I_{F^{\ast}}$ on $[0, 1]$.
        
        We next show that $F$ is double censorship.
        The idea is to construct an MPS $\tilde{F}$ of $F$ that is double censorship and IC.
        Since $F$ is MIC, it will follow $\tilde{F} = F$.

        Our candidate double censorship will use $(x, y)$ as atoms.
        We thus show next that $(x, y)$ can arise as atoms for some choice of thresholds $(s, t)$.
        First, since $y^{\ast} = \mathbb{E}_{H}[\bm{\omega}\mid \bm{\omega} \in [x^{\ast}, 1]]$ and $y^{\ast} \leq y \leq 1$, there exists $t \in [x^{\ast}, y]$ such that $y = \mathbb{E}_{H}[\bm{\omega}\mid \bm{\omega} \in [t, 1]]$.
        Note, $t \in [x, y]$ since $x\leq x^{\ast}$.
        Second, we argue there exists $s \in [0, x^{\ast}]$ such that $x = \mathbb{E}_{H}[\bm{\omega}\mid \bm{\omega} \in [s, t]]$.
        By construction, it holds $y = \mathbb{E}_{H}[\bm{\omega}\mid \bm{\omega} \in [t, 1]]$, which rearranges to $I_{F}(y) = I_{H}(t) + H(t)(y - t)$.
        The affine map $m\mapsto I_{H}(t) + H(t)(m - t)$ lies above $I_{F}$ at $m = t$ (since $I_{F}(t) \leq I_{H}(t)$) and coincides with $I_{F}$ at $m = y$ (by construction of $t$).
        Since also $I_{F}$ is affine on $[x, y]$ (since $F$ assigns no mass to $(x, y)$), and since two affine maps on $\mathbb{R}$ either coincide or intersect at most once, we conclude $I_{F}(x) \leq I_{H}(t) + H(t)(x - t)$.
        Let $\xi = I_{H}(t) + H(t)(x - t)$.
        In particular, $\xi \geq I_{F}(x) \geq 0$.
        We also have $\xi \leq I_{H}(x)$ since the affine map $\tilde{t}\mapsto I_{H}(t) + H(t)(\tilde{t} - t)$ is tangent to the convex map $I_{H}$ at $t$.
        Finally, find $s \in [0, x]$ such that $I_{H}(s) + H(s)(x - s) = \xi$; this is possible since $\xi \in [0, I_{H}(x)]$.
        Thus, $I_{H}(s) + H(s)(x - s) = I_{H}(t) + H(t)(x - t)$, which rearranges to $x = \mathbb{E}_{H}[\bm{\omega}\mid \bm{\omega} \in [s, t]]$.

        Now let $\Phi$ be double censorship with thresholds $(s, t)$ and atoms $(x, y)$.
        Notice that $\Phi$ is IC; indeed, $\Phi$ assigns no mass to the set $(x, y)\cup (y, 1]$ (by definition of double censorship), and it holds $(x, y)\in P$ (as argued at the very beginning of the proof); now invoke \Cref{lemma:s_shaped_IC}.

        We now argue $\Phi$ is an MPS of $F$.
        Clearly, $I_{\Phi} = I_{H} \geq I_{F}$ on $[0, s]$, and $I_{\Phi} = I_{\delta_{\mu}} = I_{F}$ on $[y, 1]$.
        For $\tilde{t}\in [x, y]$ it holds $I_{\Phi}(\tilde{t}) = I_{H}(t) + H(t)(\tilde{t} - t) \geq I_{F}(\tilde{t})$, where the inequality follows from the argument that constructed $s$.
        Finally, it holds $I_{\Phi} \geq I_{F}$ on $[s, x]$ since $I_{\Phi}$ is affine on this interval, $I_{F}$ is convex, and since $I_{\Phi} \geq I_{F}$ holds at the endpoints of this interval (as already argued).
        Thus, $\Phi$ is an MPS of $F$.

        It remains to show $s \leq x^{\ast} \leq t$ and $0 < x < y < 1$.
        The inequalities $s \leq x^{\ast} \leq t$ follow from $[x^{\ast}, y^{\ast}]\subseteq [x, y]$ and the expressions for $x$, $y$, and $y^{\ast}$ as conditional expectations. From $[x^{\ast}, y^{\ast}]\subseteq [x, y]$ and $x^{\ast} < r_{0} < y^{\ast}$, we also obtain $x < y$.
        
        Next, towards a contradiction, let $y = 1$.
        Then also $t = 1$ (since $y = \mathbb{E}_{H}[\bm{\omega}\mid\bm{\omega}\in[t, 1]]$), meaning $x$ is the largest point in the support of $F$.
        From $y = 1 > y^{\ast}$ we also get $x < x^{\ast}$ (\Cref{lemma:p_order}).
        It now follows easily that $F$ is a proper MPC of $F^{\ast}$; indeed, $I_{F} = I_{\delta_{\mu}} \leq I_{F}^{\ast}$ on $[x, 1]$; this inequality is strict on $(x, x^{\ast})$ since $I_{F}$ is affine on this interval whereas $I_{F^{\ast}} $ is strictly convex (it coincides with $I_{H}$); finally, $I_{F} \leq I_{F^{\ast}} \leq I_{H}$ on $[0, x]$.
        Thus, $F$ is a proper MPC of $F^{\ast}$, contradicting maximality of $F$.
        Thus, $y < 1$.

        Finally, towards a contradiction, let $x = 0$. 
        This requires $s = t = 0$ since $x = \mathbb{E}_{H}[\bm{\omega}\mid \bm{\omega} \in [s, t]]$.
        It follows that $F$ is degenerate on $y$; contradiction.
    \end{proof}
    
    \begin{lemma}\label{lemma:MIC_characterization:sufficiency}
        If $F$ is double censorship for a pair of thresholds $(s, t)$ and atoms $(x, y)$ such that $(x, y)\in P$ and 
        $s \leq x^{\ast} \leq t$ and $0 < x < y < 1$,
        then $F$ is MIC.
    \end{lemma}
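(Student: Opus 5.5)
We prove the two requirements separately. Incentive compatibility is immediate from \Cref{lemma:s_shaped_IC}: by definition, a double censorship experiment $F$ with atoms $(x,y)$ puts no mass on $(x,y)\cup(y,1]$, $(x,y)\in P$ by hypothesis, and $0<x<y<1$ makes $F$ non-degenerate.

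For maximality, we take an IC experiment $\tilde F$ with $F\in\MPC(\tilde F)$ and aim to show $\tilde F=F$. First, $\tilde F$ is non-degenerate because it is an MPS of the non-degenerate $F$. So \Cref{lemma:s_shaped_IC} supplies $(\tilde x,\tilde y)\in P$ such that $\tilde F$ puts no mass on $(\tilde x,\tilde y)\cup(\tilde y,1]$.

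The key step is to show $[\tilde x,\tilde y]=[x,y]$. By \Cref{lemma:p_order}, the two intervals are nested, so we treat two cases.
\begin{itemize}
\item Suppose $[\tilde x,\tilde y]\subsetneq[x,y]$. If $\tilde y<y$, then $I_{\tilde F}=I_{\delta_\mu}$ on $[\tilde y,1]$. This forces $I_F\le I_{\tilde F}=I_{\delta_\mu}$ at $\tilde y$, so $F$ has no mass above $\tilde y<y$, contradicting $y\in\supp F$. The atom $y$ is genuinely in the support because $t<1$, which follows from $y<1$.

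If instead $\tilde y=y$, strict monotonicity in the proof of \Cref{lemma:p_order} gives $\tilde x=x$.
\item Suppose $[x,y]\subsetneq[\tilde x,\tilde y]$, so $\tilde y\geq y$ and $\tilde x\le x$. The idea is that $\tilde F$ cannot put mass strictly between $\tilde x$ and $\tilde y$, yet it must spread $F$, which has an atom at $x$ and an atom at $y$.

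If $\tilde y>y$, we compare ICDFs at $y$. Since $F$ has its top atom at $y$, $I_F(y)=y-\mu$, and $I_{\tilde F}\ge I_F$. Next, $I_{\tilde F}$ is affine on $[\tilde x,\tilde y]\ni y$ and equals $\tilde y-\mu$ at $\tilde y$ with slope at most $1$. Moreover $I_{\tilde F}\le I_H$.

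We use that $I_F$ is tangent to $I_H$ at $t$ with slope $H(t)$ on $(x,y]$. The affine piece of $I_{\tilde F}$ must lie above the line $I_H(t)+H(t)(m-t)$ at both $x$ and $y$, and below the convex $I_H$. Since $t\in(x,y)$ (which uses $s\le x^\ast\le t$, $x<t$), the piece is forced to coincide with that tangent line. It then meets $I_{\delta_\mu}$ exactly at $y$, contradicting $\tilde y>y$.

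So $\tilde y=y$, and then $\tilde x=x$ as before.
\end{itemize}

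With $(\tilde x,\tilde y)=(x,y)$, the remaining task is to show $I_{\tilde F}\le I_F$ everywhere.
\begin{itemize}
\item On $[y,1]$, both ICDFs equal $I_{\delta_\mu}$.
\item On $[x,y]$, $I_{\tilde F}$ is affine, agrees with $I_F$ at $y$, and is bounded by $I_H$. It lies weakly above the line through $(y,I_F(y))$ tangent to $I_H$ at $t$. An affine function through that point staying below $I_H$ at $t$ must have slope at least $H(t)$, and so, going leftward, lies weakly below $I_F$. Combined with $I_{\tilde F}\ge I_F$, we get equality on $[x,y]$.
\item On $[0,s]$, $I_F=I_H$ is already maximal.
\item On $[s,x]$, $I_F$ is the chord-like affine piece tangent to $I_H$ at $s$ and joining $I_F(x)$. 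Here we need $I_{\tilde F}(x)=I_F(x)$ together with the fact that $\tilde F$ cannot exceed $I_H$. Any convex $I_{\tilde F}\ge I_F$ with $I_{\tilde F}\le I_H$ and matching value and left slope at $x$ must equal $I_F$. The left slope at $x$ is at most $H(t)$ from the previous step, which should pin it down as well.
\end{itemize}
Hence $\tilde F=F$, and $F$ is MIC.

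I expect the main obstacle to be the second case and this last interval $[s,x]$. Both need careful tangency bookkeeping with $I_H$ and convexity, in particular showing that no affine ICDF segment can dominate the double censorship ICDF, since it is pinned by the two tangencies at $s$ and $t$.
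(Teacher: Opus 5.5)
Your route is the paper's. You get IC from \Cref{lemma:s_shaped_IC}. For maximality you take an IC spread $\tilde F$ and rule out a top point below $y$. You then use the tangency of $I_F$ to $I_H$ at $t$ to force the affine piece of $I_{\tilde F}$ onto $L(m)=I_H(t)+H(t)(m-t)$, which rules out a top point above $y$. Finally you match the ICDFs piece by piece.

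\textbf{Gap on $[s,x]$.} The last step is not actually proved. You invoke a ``matching left slope at $x$'' that you never establish. The only slope information you supply is that the left slope of $I_{\tilde F}$ at $x$ is at most $H(t)$, and this is useless: the slope of $I_F$ on $(s,x)$ is $H(s)\le H(t)$. The missing observation is that $I_{\tilde F}$ is also pinned at the left endpoint, since $I_F(s)=I_H(s)\ge I_{\tilde F}(s)\ge I_F(s)$. With $I_{\tilde F}=I_F$ at both $s$ and $x$, convexity of $I_{\tilde F}$ puts it below the chord between these points. That chord is exactly the affine piece of $I_F$ on $[s,x]$. Together with $I_{\tilde F}\ge I_F$ this gives equality. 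This chord argument is how the paper closes the proof.

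\textbf{A false side claim.} You say $x<t$ follows from $s\le x^\ast\le t$. This fails when $s=t$: then $x=t=x^\ast$ and $F=F^\ast$. The pinning in your second case survives, but needs a one-sided argument instead of ``$t\in(x,y)$''. From $I_F\le I_{\tilde F}\le I_H$ and $I_F(t)=L(t)=I_H(t)$ you get $I_{\tilde F}(t)=I_H(t)$. Then $I_{\tilde F}\le I_H$ just to the right of $t$ bounds the affine slope by $H(t)$. Also $I_{\tilde F}(y)\ge L(y)$ with $y>t$ bounds it below by $H(t)$. This is essentially the paper's sandwich at $t\in[x,y]$.

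\textbf{Harmless differences.} You rule out $\tilde y<y$ via the positive mass at $y$, where the paper compares slopes using $H(t)<1$. You split cases by nesting of $[\tilde x,\tilde y]$ and $[x,y]$, where the paper compares $\max\supp\tilde F$ with $y$. Neither change affects validity.
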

    \begin{proof}[Proof of \Cref{lemma:MIC_characterization:sufficiency}]
        \Cref{lemma:s_shaped_IC} implies that $F$ is IC.
        To prove that $F$ is maximal, let $\Phi$ be IC and such that $I_{F} \leq I_{\Phi}$.
        We show $I_{F} = I_{\Phi}$.
        Let $\hat{y} = \max\supp\Phi$.
    
        First, we show $\hat{y} < y$ cannot hold.
        Indeed, the function $I_{\Phi}$ coincides with $m\mapsto m - \mu$ on $[\hat{y}, 1]$.
        Meanwhile, $I_{F}$ coincides with $m\mapsto m - \mu$ on $[y, 1]$ but has slope of at most $H(t)$ on $[0, y)$.
        Since $t\leq y < 1$, we have $H(t) < 1$.
        Thus, if $\hat{y} < y$, then $I_{F}$ lies strictly above $I_{\Phi}$ on $(\hat{y}, y)$, contradicting $I_{F}\leq I_{\Phi}$.
        
        Thus, let $\hat{y}\geq y$.
        Since $\hat{y} = \max\supp\Phi$ and $y > r_{0}$, the IC characterization (\Cref{lemma:s_shaped_IC}) implies there exists $\hat{x} \leq r_{0}$ such that $(\hat{x}, \hat{y}) \in P$ and such that $\Phi$ assigns no mass to $(\hat{x}, \hat{y})\cup (\hat{y}, 1]$.
        Since $\hat{y}\geq y$, \Cref{lemma:p_order} implies $\hat{x} \leq x$.
        
        Since $\Phi$ assigns no mass to $(\hat{x}, \hat{y})$, and since $[x, y]\subseteq [\hat{x}, \hat{y}]$ and $t\in [x, y]$, we conclude that $I_{\Phi}$ is affine on $[\hat{x}, \hat{y}]$ and this interval contains $t$.
        Now, since $F$ is double censorship, $I_{F}$ is tangential to $I_{H}$ at $t$.
        Thus, also $I_{\Phi}(t) = I_{H}(t)$ since $I_{F}\leq I_{\Phi} \leq I_{H}$.
        Thus, $I_{\Phi}$ is affine with slope $H(t)$ on $[\hat{x}, \hat{y}]$.
        Since $I_{F}$ is affine $[x, y]$ (a subset of $[\hat{x}, \hat{y}]$) and tangential to $I_{H}$ at $t$, it follows that $I_{\Phi}$ and $I_{F}$ coincide on $[x, y]$.
        From here and the inclusion $[x, y]\subseteq [\hat{x}, \hat{y}]$, it readily follows $I_{F} = I_{\Phi} = I_{\delta_{\mu}}$ on $[y, 1]$.

        We also know that $I_{F}$ and $I_{\Phi}$ coincide on $[0, s]$ since on this interval $I_{F} = I_{H} \geq I_{\Phi} \geq I_{F}$.
        It now follows $I_{F} = I_{\Phi}$ on $[s, x]$ since $I_{F}$ is affine on this interval, $I_{\Phi}$ is convex, and since the two coincide at the endpoints (as already argued).
        Thus, $I_{F} = I_{\Phi}$.
    \end{proof}

\subsection{Proof of \headercref{Corollary}{{cor:MIC_total_order}}}
    Let $(s, t, x, y)$ and $(s^{\prime}, t^{\prime}, x^{\prime}, y^{\prime})$ be the parameters of two MIC experiments such that $y^{\prime} < y$.
    Since $(x, y)\in P$ and $(x^{\prime}, y^{\prime}) \in P$ (\Cref{thm:MIC_characterization}), \Cref{lemma:p_order} implies $x \leq x^{\prime}$; the proof of this lemma delivers $x < x^{\prime}$.
    Using the expressions for $x$, $x^{\prime}$, $y$, and $y^{\prime}$ as conditional expectations, it also follows $[s^{\prime}, t^{\prime}] \subsetneq [s, t]$ and $[t, 1] \subsetneq [t^{\prime}, 1]$ and $x < x^{\prime}$.
    \qed

\subsection{Proof of \headercref{Corollary}{{cor:double_censorship_implementation}}}
    We apply Corollary 1 of \citet{dworczak2019simple}.
    Let $p$ be the function that coincides with $G$ on $[0, x]$, and coincides with the affine map $m\mapsto G(x) + g(y)(m - x)$ for $m\in (x, 1]$.
    Then, $p$ is convex, weakly above $G$, and coincides with $G$ on $[0, x]$ and $y$; see the proof of \Cref{lemma:s_shaped_IC} for this argument.
    Moreover, $\supp F \subseteq \lbrace m\in [0, 1]\colon p(m) = G(m)\rbrace$ since $F$ is double censorship with parameters $(s, t, x, y)$.
    Finally, it holds $\int p \de F = \int p \de \bar{F}$.
    Indeed, we have $F = \bar{F} = H$ on $[0, s]$.
    Conditioned on the interval $[s, t]$, both $F$ and $\bar{F}$ are degenerate on $x$.
    Finally, on the interval $[t, 1]$, the function $p$ is affine and $F$ is obtained by pooling $\bar{F}$.
    Thus, $\int p \de F = \int p \de \bar{F}$.
    By Corollary 1 of \citet{dworczak2019simple}, the experiment $F$ is a best reply to $\bar{F}$.
    \qed
    
\subsection{Proof of \headercref{Theorem}{{thm:profitable_restriction}}}
    The idea is to consider MIC double censorship, parametrized by $(s, t, x, y)$, such that $y$ is close to the full delegation atom $y^{\ast}$; in this case, also $s$, $t$, and $x$ are all close to $x^{\ast}$.
    We evaluate the marginal gain from this perturbation.

    \Cref{lemma:s_shaped_IC} implies that a tuple $(s, t, x, y)$ with $0\leq s \leq x \leq t\leq y < 1$ and $x \leq r_{0} \leq y$ represents an IC double censorship if
    \begin{subequations}
    \begin{align}
        G(x) + g(y)(y - x) - G(y) &= 0,
        \label{eq:MIC_parametrization:1}
        \\ \int_{[t, 1]} (\omega - y) \de H(\omega) &= 0,
        \label{eq:MIC_parametrization:2}
        \\ \int_{[s, t]} (\omega - x) \de H(\omega) &= 0,
        \label{eq:MIC_parametrization:3}
    \end{align}
    \end{subequations}
    all hold.
    (In fact, in this case the experiment is MIC.)
    We know that one such tuple is $(s, t, x, y) = (x^{\ast}, x^{\ast}, x^{\ast}, y^{\ast})$, where $x^{\ast}$ is the threshold and $y^{\ast}$ is the atom under full delegation.
    We next argue that for all $y$ sufficiently close to $y^{\ast}$ there exist $(s, t, x)$ satisfying the above.
    
    The proof of \Cref{lemma:p_order} shows $g(x^{\ast}) < g(y^{\ast})$, which implies that the derivative of $x\mapsto G(x) + g(y^{\ast})(y^{\ast} - x) - G(y^{\ast})$ is strictly negative at $x^{\ast}$.
    By continuous differentiability of $g$, moreover, the function $(x, y)\mapsto G(x) + g(y^{\ast})(y^{\ast} - x) - G(y^{\ast})$ is continuously differentiable.
    Thus, by the Implicit Function Theorem, for all $y > y^{\ast}$ sufficiently close to $y^{\ast}$ there exists $\hat{x}(y)$ such that $G(\hat{x}(y)) + g(y)(y - \hat{x}(y)) = G(y)$, i.e. \eqref{eq:MIC_parametrization:1} holds.
    Moreover, $\hat{x}(y) \to x^{\ast}$ as $y\to y^{\ast}$ and $y\mapsto\hat{x}(y)$ is differentiable at $y^{\ast}$.
    Finally, $\hat{x}(y) \leq x^{\ast}$ for all $y$ (\Cref{lemma:p_order}).
    
    Next, using $y^{\ast} < 1$, it is easy to see (using, e.g., the Intermediate Value Theorem) that for all $y$ sufficiently close to $y^{\ast}$ there exists $\hat{t}(y) \in [0, y)$ solving \eqref{eq:MIC_parametrization:2}. Using the Implicit Function Theorem, on a neighborhood of $y^{\ast}$ the function $\hat{t}$ has the derivative
    \begin{equation*}
        \frac{\partial \hat{t}(y)}{\partial y} = \frac{1 - H(\hat{t}(y))}{(y - \hat{t}(y)) h(\hat{t}(y))} > 0.
    \end{equation*}
    For later reference, for $y > y^{\ast}$, equation \eqref{eq:MIC_parametrization:2} requires $\hat{t}(y) > x^{\ast}$.
    
    Finally, using also $0 < x^{\ast} < y^{\ast} < 1$, it is easy to see that for all $y$ sufficiently close to $y^{\ast}$ there exists $\hat{s}(y) \in (0, \hat{x}(y))$ such that $(\hat{s}(y), \hat{t}(y), \hat{x}(y), y)$ satisfies \eqref{eq:MIC_parametrization:3}.
    (We cannot use the Implication Function Theorem to conclude that $\hat{s}$ is differentiable, which complicates bounding the integral \eqref{eq:perturbation:2} further ahead.)

    Thus, for some $\varepsilon > 0$ and all $y \in (y^{\ast}, y^{\ast} + \varepsilon]$, the tuple $(\hat{s}(y), \hat{t}(y), \hat{x}(y), y)$ defines an MIC double censorship experiment.
    We show that for $y$ sufficiently close to $y^{\ast}$ the designer is strictly better off than under unrestricted persuasion.
    Note, for $y\searrow y^{\ast}$, we have $(\hat{s}(y), \hat{t}(y), \hat{x}(y)) \to (x^{\ast}, x^{\ast}, x^{\ast})$ since, following the above arguments, there is a unique solution for \cref{eq:MIC_parametrization:1,eq:MIC_parametrization:2,eq:MIC_parametrization:3} that involves $y^{\ast}$, and $(x^{\ast}, x^{\ast}, x^{\ast}, y^{\ast})$ is a solution.

    The designer's utility from $(\hat{s}(y), \hat{t}(y), \hat{x}(y), y)$, denoted $U_{D}(y)$, is given by
    \begin{equation*}
        U_{D}(y) =
        \int_{[0, \hat{s}(y)]} u_{D}(\omega)\de H(\omega) + u_{D}(\hat{x}(y))(H(\hat{t}(y)) - H(\hat{s}(y))) + u_{D}(y)(1 - H(\hat{t}(y))).
    \end{equation*}
    The expected utility from $(x^{\ast}, x^{\ast}, x^{\ast}, y^{\ast})$ is given by
    \begin{equation*}
        U_{D}(y^{\ast}) =
        \int_{[0, x^{\ast}]} u_{D}(\omega) \de H(\omega) + u_{D}(y^{\ast})(1 - H(x^{\ast}))
        ,
    \end{equation*}
    which also obtains as the limit as $y\to y^{\ast}$ from above.
    Thus, 
    \begin{align}
        U_{D}(y) - U_{D}(y^{\ast})
        =
        &(u_{D}(y) - u_{D}(y^{\ast}))(1 - H(x^{\ast}))
        +
        (u_{D}(y) - u_{D}(\hat{x}(y)))(H(x^{\ast}) - H(\hat{t}(y)))
        \label{eq:perturbation:1}
        \\
        &+
        \int_{[\hat{s}(y), x^{\ast}]}(u_{D}(\hat{x}(y)) - u_{D}(\omega))\de H(\omega)
        .
        \label{eq:perturbation:2}
    \end{align}
    Dividing by $y - y^{\ast}$ and taking $y\to y^{\ast}$, the sum of the two terms in \eqref{eq:perturbation:1} converges to
    \begin{align*}
        & u_{D}^{\prime}(y^{\ast}) (1 - H(x^{\ast})) - (u_{D}(y^{\ast}) - u_{D}(x^{\ast})) h(x^{\ast}) \left.\frac{\partial \hat{t}(y)}{\partial y}\right\vert_{y = y^{\ast}}
        \\
        =
        & \left(u_{D}^{\prime}(y^{\ast}) - \frac{u_{D}(y^{\ast}) - u_{D}(x^{\ast})}{y^{\ast} - x^{\ast}}\right)(1 - H(x^{\ast})) > 0,
    \end{align*}
    where the strict inequality follows from strict convexity of $u_{D}$ and since $y^{\ast} > x^{\ast}$ holds.
    Thus, to show $U_{D}(y) > U_{D}(y^{\ast})$ for $y$ sufficiently close to $y^{\ast}$, it suffices to show that
    the integral in \eqref{eq:perturbation:2} vanishes when normalized by $y - y^{\ast}$.
    We have
    \begin{align*}
        &\frac{1}{y - y^{\ast}}
        \left\vert\int_{[\hat{s}(y), x^{\ast}]}(u_{D}(\hat{x}(y)) - u_{D}(\omega))\de H(\omega)
        \right\vert
        \\
        \leq
        &
        \frac{1}{y - y^{\ast}}
        \left\vert
        (u_{D}(\hat{x}(y)) - u_{D}(x^{\ast})) (H(x^{\ast}) - H(\hat{s}(y)))
        \right\vert
        +
        \frac{1}{y - y^{\ast}}
        \left\vert
        \int_{[\hat{s}(y), x^{\ast}]}(u_{D}(x^{\ast}) - u_{D}(\omega))\de H(\omega)
        \right\vert
    \end{align*}
    Here, $\frac{1}{y - y^{\ast}}
        \left\vert
        (u_{D}(\hat{x}(y)) - u_{D}(x^{\ast})) (H(x^{\ast}) - H(\hat{s}(y)))
        \right\vert$ vanishes since $\hat{x}$ is differentiable at $y^{\ast}$, and since $\hat{s}(y) \to x^{\ast}$ as $y\to y^{\ast}$.
    Consider the integral. Find $L > 0$ such that $\vert u_{D}^{\prime}\vert \leq L$.
    Recall that $\hat{s}(y)$ is chosen such that
        $\int_{[\hat{s}(y), x^{\ast}]} (\hat{x}(y) - \omega)\de H(\omega) = \int_{[x^{\ast}, \hat{t}(y)]} (\omega - \hat{x}(y))\de H(\omega)$.
    We then have
    \begin{align*}
        &\frac{1}{y - y^{\ast}}
        \left\vert
        \int_{[\hat{s}(y), x^{\ast}]}(u_{D}(x^{\ast}) - u_{D}(\omega))\de H(\omega)
        \right\vert
        \\
        \leq
        &
        \frac{1}{y - y^{\ast}}
        \int_{[\hat{s}(y), x^{\ast}]} \left\vert \int_{[\omega, x^{\ast}]} u_{D}^{\prime}(r) \de r \right\vert \de H(\omega)
        \\
        \leq
        &
        \frac{L}{y - y^{\ast}}
        \int_{[\hat{s}(y), x^{\ast}]} (x^{\ast} - \omega) \de H(\omega)
        \\
        =
        &
        \frac{L}{y - y^{\ast}}
        (x^{\ast} - \hat{x}(y)) (H(x^{\ast}) - H(\hat{s}(y)))
        +
        \frac{L}{y - y^{\ast}}
        \int_{[\hat{s}(y), x^{\ast}]} (\hat{x}(y) - \omega)\de H(\omega)
        \\
        =
        &
        \frac{L}{y - y^{\ast}}
        (x^{\ast} - \hat{x}(y)) (H(x^{\ast}) - H(\hat{s}(y)))
        +
        \frac{L}{y - y^{\ast}}
        \int_{[x^{\ast}, \hat{t}(y)]} (\omega - \hat{x}(y))\de H(\omega)
        \\
        \leq
        &
        \frac{L}{y - y^{\ast}}
        (x^{\ast} - \hat{x}(y)) (H(x^{\ast}) - H(\hat{s}(y)))
        +
        \frac{L}{y - y^{\ast}}
        (\hat{t}(y) - \hat{x}(y)) (H(\hat{t}(y)) - H(x^{\ast}))
    \end{align*}
    Now, $\frac{1}{y - y^{\ast}} (x^{\ast} - \hat{x}(y)) (H(x^{\ast}) - H(\hat{s}(y)))$ vanishes as $y\to y^{\ast}$ since $\hat{x}$ is differentiable at $y^{\ast}$, and since $\hat{s}(y) \to x^{\ast}$ as $y\to y^{\ast}$.
    Moreover, $\frac{1}{y - y^{\ast}} (\hat{t}(y) - \hat{x}(y)) (H(\hat{t}(y)) - H(x^{\ast}))$ vanishes since $\hat{t}(y)$ is differentiable at $y^{\ast}$, and since $\hat{t}(y)$ and $\hat{x}(y)$ both converge to $x^{\ast}$ as $y\to y^{\ast}$.
    Thus, the integral in \eqref{eq:perturbation:2} vanishes when normalized by $y - y^{\ast}$.
    \qed

\subsection{Proof of \headercref{Theorem}{{thm:general_extreme_points}}}
    Let $F$ be MIC.
    For later reference, by \Cref{lemma:price_function_IC}, there exists a continuous, convex function $p\colon [0, 1]\to\mathbb{R}$ such that $p\geq G$ and $\supp F\subseteq \lbrace m\in[0, 1]\colon p(m) = G(m)\rbrace$.

    By \citet[Theorem 2]{kleiner2021extreme}, to show that $F$ is an extreme point of $\MPC(H)$, we have to show that, for every open subinterval $(a, b)$ on which the ICDF of $F$ lies strictly below the ICDF of $H$, the support of $F$ has at most two points in $(a, b)$.

    Let $[a, b]$ be a subinterval of $[0, 1]$ such that $I_{F}(m) < I_{H}(m)$ for all $m\in (a, b)$.
    
    Let $m\in (\supp F) \cap (a, b)$.
    We claim that there is a neighborhood $O$ of $m$ such that $(\supp F) \cap O = \lbrace m\rbrace$.
    Towards a contradiction, suppose not.
    Then, there is a sequence $(m_{n})_{n\in\mathbb{N}}$ in $((\supp F) \setminus \lbrace m\rbrace) \cap (a, b)$ converging to $m$.
    By possibly passing to a subsequence, let the sequence be strictly monotone.
    Assume the sequence is strictly increasing, the case where it is strictly decreasing being similar.
    Consider the line segment between $(m_{n}, I_{F}(m_{n}))$ and $(m, I_{F}(m))$; it is given by $t\mapsto I_{F}(m_{n}) + (I_{F}(m) - I_{F}(m_{n}))(t - m_{n}) / (m - m_{n})$ for $t\in [m_{n}, m]$
    Find $n$ sufficiently large such that this line segment does not intersect $I_{H}$ on $[m_{n}, m]$; such $n$ exits since $I_{F}$ lies strictly below $I_{H}$ on $(a, b)$ and since ICDFs are $1$-Lipschitz.
    Consider the function that coincides with $I_{F}$ below $m_{n}$ and above $m$, and that coincides with the chosen line segment on $[m_{n}, m]$.
    This function is weakly increasing and convex, coincides with $I_{H}$ at $0$ and $1$, and lies below $I_{H}$.
    Thus, its derivative $\tilde{F}$ is an experiment.
    Since $m_{n}$ and $m$ are in the support of $F$, and since $I_{\tilde{F}}$ and $I_{F}$ coincide outside $[m_{n}, m]$, it holds $\supp \tilde{F}\subseteq \supp F \subseteq \lbrace m^{\prime}\in [0, 1]\colon p(m^{\prime}) = G(m^{\prime})\rbrace$.
    In particular, \Cref{lemma:price_function_IC} implies that $\tilde{F}$ is IC.
    Finally, $\tilde{F}$ is an MPS of $F$ since $I_{F}$ is convex on $[m_{n}, m]$ while $I_{\tilde{F}}$ is affine on this interval and coincides with $I_{F}$ at $m_{n}$ and $m$.
    Since $F$ is maximal, we infer $F = \tilde{F}$.
    In particular, $I_{F}$ is affine on $[m_{n}, m]$, meaning $F$ assigns no mass to the open interval $(m_{n}, m)$.
    However, by strict monotonicity, $m_{n+1} \in (m_{n}, m)$, contradicting $m_{n+1} \in \supp F$.

    We now claim that for all closed subintervals $[c, d]$ of $(a, b)$ the intersection $(\supp F) \cap [c, d]$ is finite. Indeed, for all $m\in (\supp F) \cap [c, d]$, let $O_{m}$ be as in the previous paragraph, i.e. $(\supp F) \cap O_{m} = \lbrace m\rbrace$. This yields an open cover of the compact set $(\supp F) \cap [c, d]$. Now find a finite subcover to conclude that $(\supp F) \cap [c, d]$ is finite.

    Let $[c, d]\subseteq (a, b)$ be arbitrary.
    We claim $(\supp F) \cap [c, d]$ contains at most two points.
    Towards a contradiction, suppose not.
    Since $(\supp F) \cap [c, d]$ is finite, there exist three consecutive points $x, y, z$ in $(\supp F) \cap [c, d]$; by ``consecutive'' we mean that there is no other point in $\supp F$ between $x$ and $y$, nor between $y$ and $z$.
    In particular, $I_{F}$ is affine between $x$ and $y$; call the slope $\alpha$.
    Also, $I_{F}$ is affine between $y$ and $z$; call the slope $\beta$.
    Since $y$ is in the support of $F$, it holds $\alpha < \beta$.
    Consider the function $\tilde{J}$ obtained from $I_{F}$ by raising $\alpha$ by $\varepsilon_{\alpha} > 0$ while decreasing $\beta$ by $\varepsilon_{\beta} > 0$, holding fixed $I_{F}$ at $x$ and at $z$.
    (Thus, $I_{F}$ is shifted up at $y$.)
    Finally, let $\tilde{J}$ coincide with $I_{F}$ outside $[x, z]$.
    For $\varepsilon_{\alpha}$ and $\varepsilon_{\beta}$ sufficiently small, the function $\tilde{J}$ is convex since $\alpha < \beta$.
    Moreover, since $I_{F}$ lies strictly below $I_{H}$ on $(a, b)$, and since $I_{F}$ is affine on each of $[x, y]$ and $[y, z]$, for $\varepsilon_{\alpha}$ and $\varepsilon_{\beta}$ sufficiently small also $\tilde{J}$ lies below $I_{H}$ on $[x, z]$ (and thus on all of $[0, 1]$).\footnote{To see this, note $I_{H} - I_{F}$ is positive and bounded away from $0$ across $[x, z]$, by continuity. As $\varepsilon_{\alpha} \to 0$ and $\varepsilon_{\beta} \to 0$, the function $\tilde{J}$ converges uniformly to $I_{F}$ on $[x, z]$ since $I_{F}$ is piecewise affine.}
    Thus, the derivative of $\tilde{J}$, call it $\tilde{F}$, is an experiment.
    By construction $\tilde{F}$ is a proper MPS of $F$.
    Finally, since $\supp\tilde{F}\subseteq \supp F \subseteq \lbrace m\in [0, 1]\colon p(m) = G(m)\rbrace$, \Cref{lemma:price_function_IC} implies that $\tilde{F}$ is IC. Contradiction to maximality of $F$.

    Since in the previous paragraph $[c, d] \subseteq (a, b)$ was arbitrary, the intersection $(\supp F) \cap (a, b)$ contains at most two points.
    Since $[a, b]$ was an arbitrary interval such that $I_{F}< I_{H}$ on $(a, b)$, Theorem 2 of \citet{kleiner2021extreme} implies that $F$ is an extreme point of $\MPC(H)$. 
    \qed

\bibliographystyle{ecta-fullname} 
\bibliography{references}

\newpage
\clearpage

\begin{center}
    {\Large \textbf{Online Appendix: Delegated Information Provision}}  \\ \vspace{1cm}  Francesco Bilotta \hspace{1cm} Christoph Carnehl \hspace{1cm} Justus Preusser  \vspace{1cm}
\end{center}

\vspace{1cm}
\appendix
\setcounter{section}{0}
\renewcommand{\thesection}{O\Alph{section}}
\renewcommand{\theHsection}{O.\Alph{section}}

\section{Ordered Experiments and Garblings}\label{appendix:ordered_experiments}
\begin{lemma}\label{lemma:ordered_experiments}
    If $F\in \MPC(H)$ and $F^{\prime}\in \MPC(F)$, then there exist a signal $(\mathcal{S}, \pi)$ inducing $F$, and a signal $(\mathcal{S}^{\prime}, \pi^{\prime})$ inducing $F^{\prime}$ such that $(\mathcal{S}^{\prime}, \pi^{\prime})$ is a garbling of $(\mathcal{S}, \pi)$.
\end{lemma}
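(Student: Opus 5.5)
The plan is to build both signals on a common probability space out of quantile (Skorokhod-type) representations. Signals here are maps $\pi(\omega,z)$ with $z$ uniform and independent of $\omega$. The strategy has three parts: realize $F$ as a martingale step from the state, realize $F'$ as a further martingale step from $F$, and then take the second signal to be the first signal together with extra independent noise.

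First, because $F\in\MPC(H)$, Strassen's theorem yields a Markov kernel $K(\cdot\mid\omega)$ on $[0,1]$ with the following two properties: if $\bm{\omega}\sim H$ and $\bm{m}\mid\bm{\omega}\sim K(\cdot\mid\omega)$, then $\bm{m}\sim F$ and $\mathbb{E}[\bm{\omega}\mid\bm{m}]=\bm{m}$.

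Likewise, because $F'\in\MPC(F)$, there is a kernel $K'(\cdot\mid m)$ with two matching properties: if $\bm{m}\sim F$ and $\bm{m}'\mid\bm{m}\sim K'(\cdot\mid m)$, then $\bm{m}'\sim F'$ and $\mathbb{E}[\bm{m}\mid\bm{m}']=\bm{m}'$.

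Next, I use the single uniform $z$ to generate two independent uniforms $z_1,z_2$. This can be done with a measurable bijection $[0,1]\to[0,1]^2$ that preserves Lebesgue measure, for instance by interleaving binary digits. I then define
\[
\pi(\omega,z)=\bigl(Q_K(\omega,z_1),\,z_2\bigr),
\]
where $Q_K(\omega,\cdot)$ is the quantile function of $K(\cdot\mid\omega)$, and take $\mathcal{S}=[0,1]^2$. Joint measurability of $Q_K$ follows from measurability of the kernel. I also define $\pi'(\omega,z)=Q_{K'}(Q_K(\omega,z_1),z_2)$ with $\mathcal{S}'=[0,1]$. By construction $\pi'$ is a deterministic function of $\pi$, namely $(m,u)\mapsto Q_{K'}(m,u)$, so $(\mathcal{S}',\pi')$ is a garbling of $(\mathcal{S},\pi)$.

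The key step is to check that each signal induces the intended distribution of posterior means.

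For $\pi$, the component $z_2$ is independent of $(\bm{\omega},z_1)$. Hence $\mathbb{E}[\bm{\omega}\mid\pi]=\mathbb{E}[\bm{\omega}\mid \bm{m}]=\bm{m}$, and the induced experiment is $F$.

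For $\pi'$, the main obstacle is that the triple $(\bm{\omega},\bm{m},\bm{m}')$ must form a Markov chain $\bm{\omega}\to\bm{m}\to\bm{m}'$. This holds because $z_2$ is independent of $(\bm{\omega},z_1)$, so that $\bm{\omega}\perp\bm{m}'\mid\bm{m}$. Using this conditional independence,
\[
\mathbb{E}[\bm{\omega}\mid\bm{m}']=\mathbb{E}\bigl[\mathbb{E}[\bm{\omega}\mid\bm{m},\bm{m}']\bigm|\bm{m}'\bigr]=\mathbb{E}[\bm{m}\mid\bm{m}']=\bm{m}'.
\]
The middle equality uses $\mathbb{E}[\bm{\omega}\mid\bm{m},\bm{m}']=\mathbb{E}[\bm{\omega}\mid\bm{m}]=\bm{m}$, and the last uses the martingale property of $K'$. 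It follows that the posterior mean under $\pi'$ is $\bm{m}'$, which is distributed according to $F'$.

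Finally, I will verify the measure-theoretic details. The existence of the kernels follows from Strassen's theorem on the compact space $[0,1]$, and the properties of the digit-interleaving map are standard. With these checks the lemma follows.
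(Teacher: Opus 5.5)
Your proposal is correct and follows essentially the same route as the paper: both glue the two martingale couplings into a Markov chain $\bm{\omega}\to\bm{m}\to\bm{m}'$ via conditional independence. You go further than the paper in two places. You realize the signals explicitly as maps of $(\omega,z)$ using quantile functions and a split uniform. You also give the tower-property argument that $\bm{m}'$ is its own posterior mean, which the paper only asserts with ``Similarly''.
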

\begin{proof}[Proof of \Cref{lemma:ordered_experiments}]
Since $F\in \MPC(H)$, there exists $K\in \Delta([0, 1]^{2})$ whose marginals are $F$ and $H$, and such that $m = \int \omega \de K(\omega\vert m)$ for $F$-almost all $m$. Similarly, since $F^{\prime}\in \MPC(H)$, there exists $K^{\prime}\in\Delta([0, 1]^{2})$ whose marginals are $F^{\prime}$ and $F$, and such that $m^{\prime} = \int m \de K^{\prime}(m\vert m^{\prime})$ for $F^{\prime}$-almost all $m^{\prime}$. Now define a joint CDF $K^{\ast}$ on $[0, 1]^{3}$ for all $m^{\prime}, m, \omega$ by $K^{\ast}(m^{\prime}, m, \omega) = \int_{[0, 1]^{2}} \bm{1}_{\tilde{m} \leq m} \bm{1}_{\tilde{\omega}\leq \omega} K^{\prime}(m^{\prime}\vert \tilde{m}) \de K(\tilde{m}, \tilde{\omega})$.
The joint $K^{\ast}$ yields random variables $\bm{\omega}$, $\bm{m}$, $\bm{m}^{\prime}$ such that (i) $\bm{\omega}\sim H$, (ii) conditional on $\bm{\omega} = \omega$, the CDF of $\bm{m}$ is $K(\cdot\vert \omega)$, and (iii), conditional on $\bm{\omega} = \omega$ and $\bm{m} = m$ the CDF of $\bm{m}^{\prime}$ is $K^{\prime}(\cdot\vert m)$ (note the conditional independence). The random variable $\bm{m}$ defines a signal whose distributions of posterior means is $F$ since $m = \int \omega \de K(\omega\vert m)$ holds for $F$-almost all $m$. Similarly, $\bm{m}^{\prime}$ defines a signal whose distribution of posterior means is $F^{\prime}$. By the conditional independence in (iii), the signal induced by $\bm{m}^{\prime}$ is a garbling of the signal induced by $\bm{m}$.
\end{proof}
    
\section{Profitable Restrictions Beyond \textit{S}-shaped Persuasion}\label{app:general}
In this appendix, we provide the definition of locally strict curvature preferences and incomplete full revelation that we alluded to in \Cref{ssec:beyond-s-shape}.
Then, we prove that these conditions are sufficient for the existence of profitable restrictions when the designer has strict preferences for information, and under smoothness assumptions on $H$, $G$, and $u_{D}$.

\subsection{Definitions}

\subsubsection{Locally strict curvature preferences}
We first introduce an auxiliary definition.
Let $T = (x_{1}, y_{1}, y_{2}, x_{2}) \in (0, 1)^{4} $ be such that $x_{1} < y_{1} \leq y_{2} < x_{2}$.
Let $p\colon [x_{1}, x_{2}]\to \mathbb{R}$ be a function.
Say $T$ is \emph{supported by} $p$ if $p$ is continuous and convex, lies weakly above $G$ on $[x_{1}, x_{2}]$, and coincides with $G$ at each point in $\lbrace x_{1}, y_{1}, y_{2}, x_{2}\rbrace$.
The tuple $T$ \emph{can be supported by a price function} if $T$ is supported by at least one function $p$.
In this definition, $y_{1}$ and $y_{2}$ should be thought of the atoms in a bi-pooling interval $[x_{1}, x_{2}]$ of an extreme point (possibly, $y_{1} = y_{2}$).

\begin{definition}\label{assumption:smooth_support}
    The experimenter has \emph{locally strict curvature preferences} if the following hold.

    (I) The state space $[0, 1]$ can be partitioned into a finite number of intervals such that, on each interval, $G$ is either strictly convex or strictly concave. 
    Let $\Cvx$ and $\Ccv$, respectively, be the union of intervals on which $G$ is strictly convex and strictly concave, respectively.

    (II) For all $T = (x_{1}, y_{1}, y_{2}, x_{2})$ such that $x_{1}, x_{2} \in \Cvx$ and $y_{1}, y_{2} \in \Ccv$ and $0 < x_{1} < y_{1} \leq y_{2} < x_{2} < 1$, if $T$ can be supported by an affine function,
    then there exists a continuous function $\tilde{T} = (\tilde{x}_{1}, \tilde{y}_{1}, \tilde{y}_{2}, \tilde{x}_{2})\colon [0, 1]\to (0, 1)^{4}$ such that:
    \begin{itemize}
        \item it holds $\tilde{T}(0) = T$ and $\tilde{x}_{1}(r) \leq x_{1}$ and $\tilde{x}_{2}(r) \geq x_{2}$ for all $r\in (0, 1]$;
        \item $\tilde{y}_{1}$ and $\tilde{y}_{2}$ are continuously differentiable, and at $r = 0$ their derivatives are non-zero; further, $\tilde{x}_{1}$ and $\tilde{x}_{2}$ are differentiable at $r = 0$;
        \item for all $r\in (0, 1]$, the tuple $\tilde{T}(r)$ can be supported by a price function.
    \end{itemize}
\end{definition}

In plain words, \Cref{assumption:smooth_support} says that if some tuple can be supported by an affine price function, then there are nearby tuples that ``smoothly approximate'' the given one and that can also be supported by price functions.
\Cref{fig:smooth_support:bipool,fig:smooth_support:pool} depict such approximations schematically.
In both figures, there is an affine function $p$ (dashed blue) that coincides with $G$ at $x_{1}$, $y_{1}$, $y_{2}$, and $x_{2}$, and otherwise lies above $G$. Further, $p$ and $G$ are tangential at $y_{1}$ and $y_{2}$, where $G$ is concave.
The points $y_{1}$ and $ y_{2}$ are distinct in \Cref{fig:smooth_support:bipool}; they coincide in \Cref{fig:smooth_support:pool}.
As alluded to earlier, $y_{1}$ and $y_{2}$ should be thought of the atoms in a bi-pooling interval $[x_{1}, x_{2}]$ of an extreme point (possibly, $y_{1} = y_{2}$).
The affine function $p$ is a price function (restricted to the interval $[x_{1}, x_{2}]$) verifying that this extreme point is a best reply of the experimenter to the prior.

The dotted orange line depicts a price function $\tilde{p}$ that touches $G$ at nearby but distinct points $\tilde{x}_{1}$, $\tilde{y}_{1}$, $\tilde{y}_{2}$, and $\tilde{x}_{2}$.
The points $\tilde{y}_{1}$ and $\tilde{y}_{2}$ are distinct in \Cref{fig:smooth_support:bipool}; they coincide in \Cref{fig:smooth_support:pool}. 
A substantive assumption in \Cref{fig:smooth_support:bipool} is that the kinked part of $\tilde{p}$ between $\tilde{y}_{1}$ and $\tilde{y}_{2}$ lies above $G$; likewise, in \Cref{fig:smooth_support:pool} the kinked part of $\tilde{p}$ between $\tilde{y}_{1}$ and $\tilde{x}_{2}$ lies above $G$.
As a further substantive assumption, one can choose $(\tilde{x}_{1}, \tilde{y}_{1}, \tilde{y}_{2}, \tilde{x}_{2})$ arbitrarily close to $(x_{1}, y_{1}, y_{2}, x_{2})$ to meet the differentiability assumptions of \Cref{assumption:smooth_support}.
We shall use the price function $\tilde{p}$ to construct other IC experiments whose support contains $\tilde{x}_{1}$, $\tilde{y}_{1}$, $\tilde{y}_{2}$, and $\tilde{x}_{2}$; the fact that $\tilde{p}$ coincides with $G$ at these points allows to verify IC.

\begin{figure}[t!]
\centering
    \includegraphics[width=0.75\textwidth]{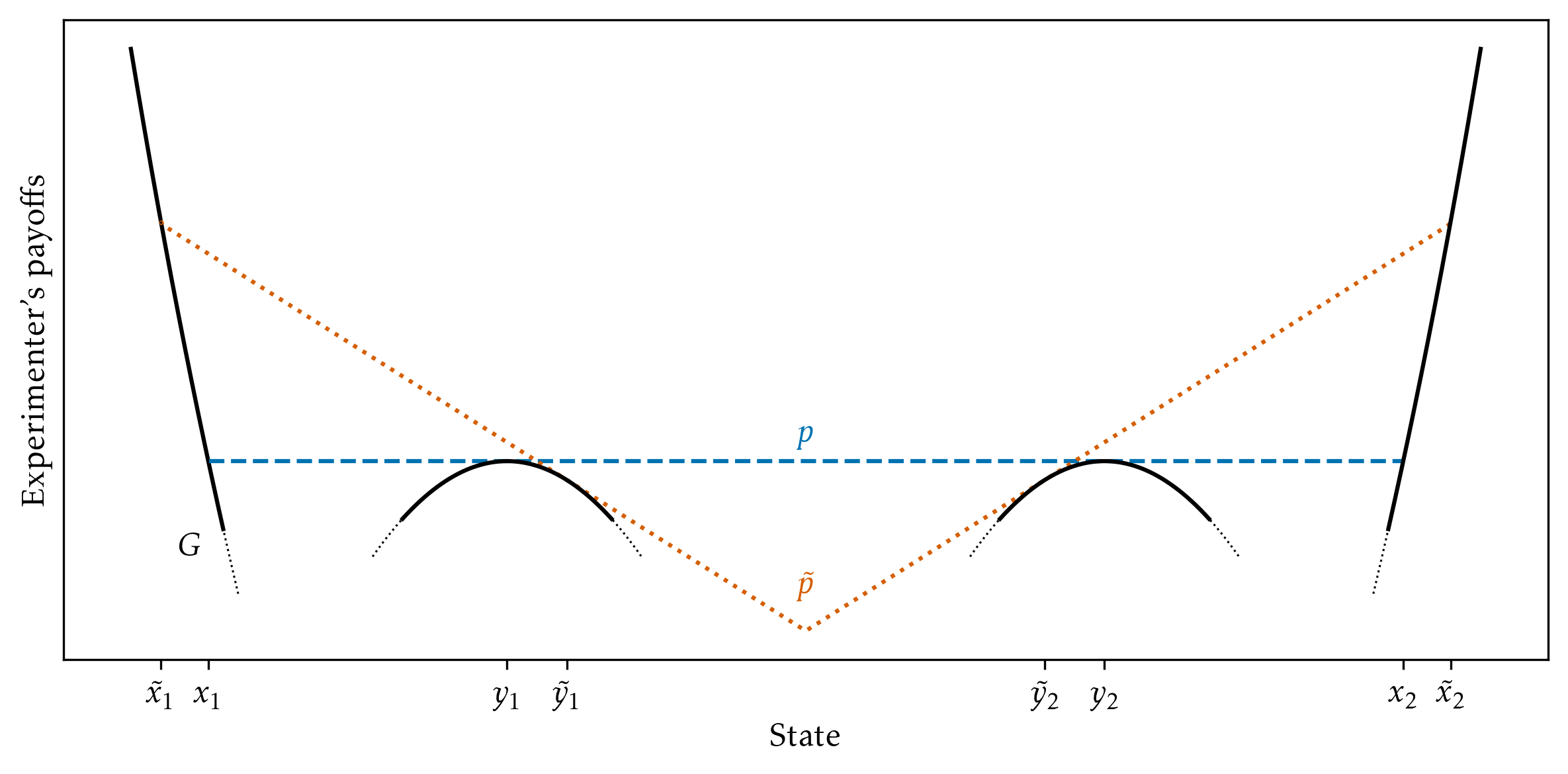}
    \caption{\emph{Two tuples and supporting price functions (colored dashed and dotted). The experimenter's payoff $G$ is shown in black. The points $y_{1}$ and $y_{2}$ lie on two distinct intervals on which $G$ is concave.}}
    \label{fig:smooth_support:bipool}
\end{figure}

\begin{figure}[t!]
    \centering
    \includegraphics[width=0.75\textwidth]{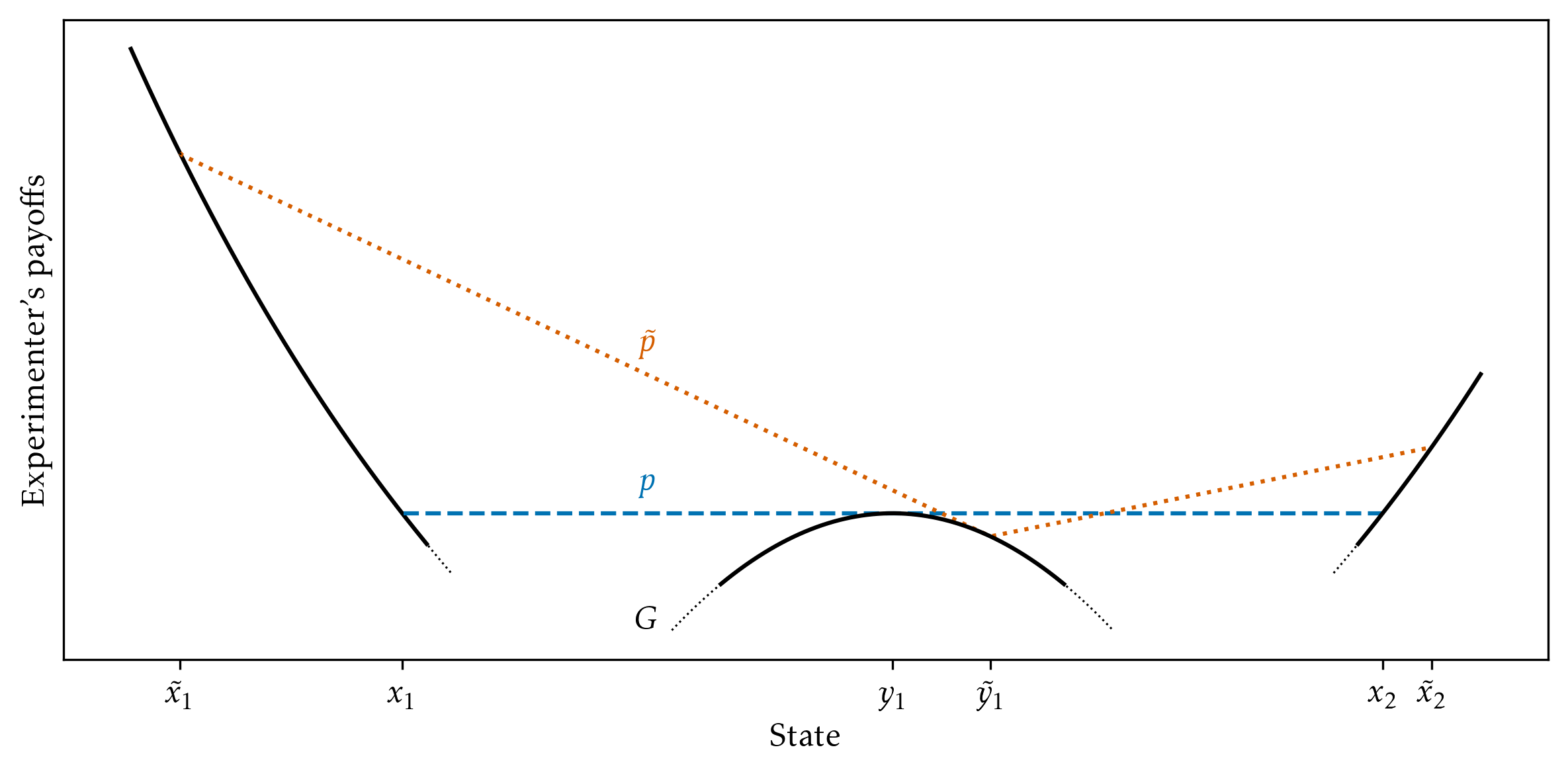}
    \caption{\emph{Two tuples and supporting price functions (colored dashed and dotted). The experimenter's payoff $G$ is shown in black. Both tuples admit a single point that lies in an interval where $G$ is concave.}}
    \label{fig:smooth_support:pool}
\end{figure}

\subsubsection{Incomplete full revelation}
Incomplete full revelation concerns the extreme points of $\MPC(H)$.
Recall, for each extreme point $F$, there are countably many, disjoint \emph{bi-pooling intervals} $[x_{1}, x_{2}]$, such that $F$ pools $[x_{1}, x_{2}]$ to at most two atoms and $I_{F}$ lies strictly below $I_{H}$ on the interior $(x_{1}, x_{2})$.
Outside the union of the bi-pooling intervals, $I_{F}$ coincides with $I_{H}$.
A non-trivial interval $[a, b]$ is \emph{fully revealing} if $I_{F}$ coincides with $I_{H}$ on $[a, b]$.

Note carefully that full revelation intervals are non-trivial by definition.
An extreme point need not admit any full revelation intervals.

\begin{definition}\label{def:incomplete_full_revelation}
    An extreme point $F$ of $\MPC(H)$ admits \emph{incomplete full revelation} if there is a non-degenerate bi-pooling interval $[x_{1}, x_{2}]$ of $F$ and $\delta > 0$ satisfying at least one of the following:
    \begin{enumerate}
        \item $F$ admits two atoms on $(x_{1}, x_{2})$, both $[x_{1} - \delta, x_{1}]$ and $[x_{2}, x_{2} + \delta]$ are full revelation intervals, and $0 < x_{1} < x_{2} < 1$ holds.
        \item $F$ admits exactly one atom on $(x_{1}, x_{2})$, and at least one of the following holds:
        \begin{enumerate}
            \item Both $[x_{1} - \delta, x_{1}]$ and $[x_{2}, x_{2} + \delta]$ are full revelation intervals, and $0 < x_{1} < x_{2} < 1$ holds. 
            \item $[x_{1} - \delta, x_{1}]$ is a full revelation interval, and $0 < x_{1} < x_{2} = 1$ holds.
            \item $[x_{2}, x_{2} + \delta]$ is a full revelation interval, and $0 = x_{1} < x_{2} < 1$ holds.
        \end{enumerate}
    \end{enumerate}
\end{definition}
In words, $F$ fully reveals the state on non-degenerate intervals to the left and right of a bi-pooling interval (cases (1) and (2a)).
Cases (2b) and (2c) are as (2a), except that one neighboring interval is the boundary of the state space.

\subsection{Profitable Restrictions}
Say full delegation is not optimal if there does not exist optimal $F\in\BR(H)$.

There exists an extreme point of $\MPC(H)$ that is designer-preferred among $\BR(H)$.
We refer to such an extreme point as a \emph{DEF} (\textbf{d}esigner-optimal \textbf{e}xtreme point under \textbf{f}ull delegation). 
We provide sufficient conditions such that a given DEF is not an optimal IC experiment; accordingly, full delegation is not optimal.

\begin{theorem}\label{thm:general_perturbation}
    Let the designer's payoff $u_{D}$ be continuously differentiable and strictly convex, let the experimenter's payoff $u_E$ be continuously differentiable, and let the prior $H$ admit a strictly positive density.
    If the experimenter has locally strict curvature preferences and there exists a DEF with incomplete full revelation, then full delegation is not optimal.
\end{theorem}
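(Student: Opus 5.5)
Let $F^{\circ}$ be a DEF with incomplete full revelation, with bi-pooling interval $[x_{1},x_{2}]$ and full-revelation neighbours as in \Cref{def:incomplete_full_revelation}. I will exhibit an IC experiment $\tilde F_{r}$, indexed by $r\in(0,1]$, with $\int u_{D}\de\tilde F_{r} > \int u_{D}\de F^{\circ}$ for small $r$. Since $F^{\circ}$ is designer-optimal within $\BR(H)$, and every $F\in\BR(H)$ is IC, this shows no element of $\BR(H)$ is optimal among IC experiments. The construction copies the proof of \Cref{thm:profitable_restriction}: move the atoms of the bi-pooling interval outward (a first-order gain), and restore mean consistency by pooling small intervals just outside $x_{1}$ and $x_{2}$ (a second-order loss).

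\textbf{Step 1 (supporting tuple).} By \Cref{lemma:price_function_IC}, $F^{\circ}$ has a price function $p\geq G$ touching $G$ on $\supp F^{\circ}$. This support contains $x_{1}-\delta$, $x_{2}+\delta$ (or arbitrarily close to them) and the atoms $y_{1}\le y_{2}$. On the full-revelation intervals $p=G$, so $G$ is convex there and $x_{1},x_{2}$ lie in (the closure of) $\Cvx$. Convexity of $p$ together with $p=G$ on a neighbourhood makes $p$ tangent at $x_{1},x_{2}$, and $p$ is affine on $[x_{1},x_{2}]$ because $F^{\circ}$ has no mass strictly between the points there. Hence $T=(x_{1},y_{1},y_{2},x_{2})$ is supported by an affine function. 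The atoms satisfy $p=G$ with $p\ge G$, so $G$ is tangent there; I will argue $y_{i}\in\Ccv$ (if $y_i\in\Cvx$, splitting the atom would be weakly profitable, and strictness of (I) rules out indifference). In the one-atom cases, set $y_{1}=y_{2}$, and in case (2b)/(2c) treat the boundary end as fixed.

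\textbf{Step 2 (perturbation).} Apply (II) to get the path $\tilde T(r)$ and its price functions $\tilde p_{r}$. Define $\tilde F_{r}$ as follows. It equals $F^{\circ}$ away from $[x_{1}-\delta,x_{2}+\delta]$. It reveals states outside $[s_{1}(r),s_{2}(r)]$. It puts atoms at $\tilde x_{1}(r)$ and $\tilde x_{2}(r)$, obtained by pooling $[s_{1},t_{1}]$ and $[t_{2},s_{2}]$, and atoms at $\tilde y_{1}(r),\tilde y_{2}(r)$, obtained by bi-pooling $[t_{1},t_{2}]$. The thresholds $s_i,t_i$ are chosen via mean and mass equations exactly as in \eqref{eq:MIC_parametrization:2}–\eqref{eq:MIC_parametrization:3}. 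For IC, I paste $\tilde p_{r}$ on $[\tilde x_{1},\tilde x_{2}]$ with $G$ on the revealed region and with $p$ elsewhere. Convexity of the pasted function follows from tangency at $\tilde x_i\in\Cvx$, and then \Cref{lemma:price_function_IC} applies. Feasibility for small $r$ uses $\tilde x_{1}\le x_{1}$, $\tilde x_{2}\ge x_{2}$, positivity of $h$, and the implicit function theorem at $r=0$. The nonzero derivatives of $\tilde y_i$ make the $t_i$ differentiable.

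\textbf{Step 3 (first-order comparison).} Decompose $\int u_{D}\de\tilde F_{r}-\int u_{D}\de F^{\circ}$ as in \eqref{eq:perturbation:1}–\eqref{eq:perturbation:2}. The pooling losses on $[s_i,t_i]$ are $o(r)$, by the same bound as in the proof of \Cref{thm:profitable_restriction} using bounded $u_D'$ and differentiability of $\tilde x_i,t_i$. The remaining first-order term is the derivative of the payoff of a (bi-)pooled lottery whose outer support points $x_{1},x_{2}$ stay fixed while its inner atoms move with mean and mass held fixed. By strict convexity of $u_D$, each term has the form $u_D'(y_i)-\text{slope of the chord}$ times mass, with sign matching the direction $\tilde y_i'(0)$. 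I must choose the orientation of the path ($r$ or its reversal, equivalently taking $\tilde y_1'<0$ and $\tilde y_2'>0$ for spreading) so that the atoms move apart or toward the ends. That yields a strictly positive derivative.

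\textbf{Main obstacle.} I expect the main obstacle to be Step 3's sign in the two-atom case. Definition (II) only guarantees nonzero derivatives, not an outward direction. The two atoms may move in the same direction, so the gain need not be a mean-preserving spread. I would first try to show that the mass and mean constraints on $[t_1,t_2]$ force the induced local lottery to be a strict MPS of the original for one of the two orientations of $r$. If that fails, I would compute the derivative as a linear functional of $(\tilde y_1',\tilde y_2')$ and argue via convexity that it is nonzero and can be made positive. Tracking that the revealed-interval losses are genuinely second order with non-differentiable $s_i$ is handled as in the $S$-shaped proof.
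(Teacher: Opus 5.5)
Your architecture matches the paper's: a first-order gain from moving the atoms of a bi-pooling interval, an $o(r)$ loss from pooling small intervals next to $x_1,x_2$, and IC via a pasted price function. But the step you flag as the main obstacle is resolved the wrong way, so Step~3 fails as written.

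You propose to orient the path so that $\tilde y_1'(0)<0<\tilde y_2'(0)$, i.e.\ so that the atoms move apart. This fails for three reasons.
\begin{enumerate}
\item No such choice exists. (II) only supplies supported tuples for $r\in(0,1]$, so the path cannot be reversed.
\item Moving apart is infeasible. The price function $p$ is affine on $[x_1,x_2]$ with slope $g(y_1)=g(y_2)$, $G\le p$ there, and $G$ is strictly concave near $y_1,y_2$. A convex $\tilde p\ge G$ touching $G$ at $\tilde y_1$ has $g(\tilde y_1)$ as a subgradient, so $G(\tilde y_2)\ge G(\tilde y_1)+g(\tilde y_1)(\tilde y_2-\tilde y_1)$. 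If $\tilde y_1<y_1$, this tangent line exceeds $G(y_1)=p(y_1)$ at $y_1$ and is steeper than $p$. Hence it lies strictly above $p\ge G$ at $\tilde y_2$, a contradiction. So $\tilde y_1(r)\ge y_1$ and, symmetrically, $\tilde y_2(r)\le y_2$, giving $\tilde y_1'(0)>0>\tilde y_2'(0)$. The atoms are forced toward each other; in particular they cannot move in the same direction, which disposes of your worry.
\item Moving apart would not help anyway. By your own sign rule the left term has the sign of $\tilde y_1'(0)$. Equivalently, with $x_1$ fixed, moving $y_1$ toward $x_1$ at fixed mean contracts the two-point lottery on $\{x_1,y_1\}$ that governs the first-order term.
\end{enumerate}

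The missing idea is that the forced inward direction is exactly the profitable one. Keep $I_F$ on $[\tilde y_1,\tilde y_2]$, with slope $q=F(y_1)$. Define $\tilde t_i$ by requiring the tangent to $I_H$ at $\tilde t_i$ to pass through $(\tilde y_i,I_F(\tilde y_i))$; this is the extra condition your ``mean and mass equations'' lack for a two-atom pool. Then pool $[\tilde s_1,\tilde t_1]$ and $[\tilde t_2,\tilde s_2]$ to $\tilde x_1$ and $\tilde x_2$. The left first-order term is $\tilde y_1'(0)\bigl(u_D'(y_1)-\frac{u_D(y_1)-u_D(x_1)}{y_1-x_1}\bigr)(q-H(x_1))>0$, and symmetrically on the right with $\tilde y_2'(0)<0$. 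Each half-lottery, on $\{x_1,y_1\}$ and on $\{y_2,x_2\}$, is spread, as in \Cref{thm:profitable_restriction}.

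Three smaller points:
\begin{itemize}
\item (II) requires $0<x_1$ and $x_2<1$, so it does not apply in cases (2b)/(2c). There the paper builds $\tilde p$ directly from the tangent to $G$ at $\tilde y>y$, with the atom moving away from its single revealed neighbour.
\item In case (2a) the sign of $\tilde y_1'(0)$ is not forced. Either sign is handled by moving inward the threshold on the side the atom moves away from, not by reversing $r$.
\item Affinity of $p$ on $[x_1,x_2]$ follows neither from \Cref{lemma:price_function_IC} nor from the absence of mass between the atoms. You need the price function certifying $F\in\BR(H)$, with $\int p\,\de F=\int p\,\de H$, which forces $p$ to be affine where $I_F<I_H$.
\end{itemize}
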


\begin{proof}[Proof of \Cref{thm:general_perturbation}]
    Let $F$ be a DEF with incomplete full information.
    Let $\delta > 0$ and $[x_{1}, x_{2}]$ be as in the definition of incomplete full information. Let $y_{1}$ and $y_{2}$ be the atoms in $[x_{1}, x_{2}]$ (possibly $y_{1} = y_{2}$).
    
    Find a price function $p$ as in Corollary 1 of \citet{dworczak2019simple}, i.e. $p$ is convex and continuous, and it holds $p \geq G$, $\supp F\subseteq \lbrace m\in [0, 1]\colon p(m) = G(m)\rbrace$ and $\int p\de F = \int p \de H$.
    Since $[x_{1}, x_{2}]$ is bi-pooling, Corollary 3 of \citet{arieli2023optimal} tells us that $p$ is affine on $[x_{1}, x_{2}]$.

    We distinguish the cases of \Cref{def:incomplete_full_revelation} concerning the form of $[x_{1}, x_{2}]$.
    In all cases, the proof strategy is to use \Cref{assumption:smooth_support} to find a price function $\tilde{p}$ that touches $G$ close to points at which $p$ equals $G$.
    Using the price-theoretic characterization of IC, we obtain a particular IC experiment $\tilde{F}$ whose support is close to the support of $F$.
    For the particularly constructed experiment, we show the designer is strictly better off than under $F$, which completes the proof.

    \textbf{Case 1.} \emph{$F$ admits two atoms, $y_{1} < y_{2}$, on $(x_{1}, x_{2})$, both $[x_{1} - \delta, x_{1}]$ and $[x_{2}, x_{2} + \delta]$ are full revelation intervals, and $0 < x_{1} < x_{2} < 1$ holds.}

    Since $p$ is affine on $[x_{1}, x_{2}]$ and $y_{1}, y_{2} \in (x_{1}, x_{2})$, both $y_{1}$ and $y_{2}$ must be in the interior of $\Ccv$ (in view of \Cref{assumption:smooth_support}).
    Also, since $I_{F} = I_{H}$ on $[x_{1} - \delta, x_{1}]\cup [x_{2}, x_{2} + \delta]$, the functions $p$ and $G$ agree on this set; in particular, $G$ is convex on this set.
    Thus, $[x_{1} - \delta, x_{1}]\cup [x_{2}, x_{2} + \delta] \subseteq \Cvx$.
    Consequently, the tuple $(x_{1}, y_{1}, y_{2}, x_{2})$ meets the hypothesis of \Cref{assumption:smooth_support}.
    Let $\tilde{T} = (\tilde{x}_{1}, \tilde{y}_{1}, \tilde{y}_{2}, \tilde{x}_{2})$ be as in \Cref{assumption:smooth_support}; for all $r \in (0, 1]$, let $\tilde{p}(\cdot, r)$ be a supporting price function.

    Throughout, we take $r$ sufficiently close to $0$ such that $\tilde{x}_{1}(r) \in [x_{1} - \delta, x]$ and $\tilde{x}_{2}(r) \in [x_{2}, x_{2} + \delta]$, which is possible by continuity of $\tilde{T}$.

    The function $\tilde{p}(\cdot, r)$ obtained from \Cref{assumption:smooth_support} has domain $[\tilde{x}_{1}(r), \tilde{x}_{2}(r)]$.
    We extend to $[0, 1]$ by setting $\tilde{p}(m, r) = p(m)$ for all $m\notin [\tilde{x}_{1}(r), \tilde{x}_{2}(r)]$.
    The resulting function clearly lies above $G$.
    One may verify that $\tilde{p}(\cdot, r)$ is convex by observing that $\tilde{p}(\cdot, r)$ lies above $G$ on $[\tilde{x}_{1}(r), \tilde{x}_{2}(r)]$, while $\tilde{x}_{1}(r)$ and $\tilde{x}_{2}(r)$, respectively, lie in the intervals $[x_{1} - \delta, x_{1}]$ and $[x_{2}, x_{2} + \delta]$, respectively, on which $p$ coincides with $G$.

    We claim $\tilde{y}_{1}^{\prime}(0) > 0$ and $\tilde{y}_{2}^{\prime}(0) < 0$.
    Indeed, for small $r$ we have $\tilde{y}_{1}(r), \tilde{y}_{2}(r) \in \Ccv$ since $y_{1}$ and $y_{2}$ are in the interior of $\Ccv$.
    The points $(y_{1}, G(y_{1}))$ and $(y_{2}, G(y_{2}))$ in $\mathbb{R}^{2}$ lie on the graph of an affine function with slope $g(y_{1}) = g(y_{2})$.
    Thus, for $\tilde{T}(r)$ to be supported by a convex price function, concavity of $G$ around $y_{1}$ and $y_{2}$ requires $\tilde{y}_{1}(r) \geq y_{1}$ and $\tilde{y}_{2}(r) \leq y_{2}$ for sufficiently small $r$.
    Since $\tilde{y}_{1}$ and $\tilde{y}_{2}$ are assumed to be continuously differentiable with non-zero derivatives at $r= 0$, we conclude $\tilde{y}_{1}^{\prime}(0) > 0$ and $\tilde{y}_{2}^{\prime}(0) < 0$.

    \Cref{fig:smooth_support:bipool} schematically depicts the assumed and derived properties of $p$ and $\tilde{p}(\cdot, r)$ for some fixed $r$.

    Let $q$ be the constant slope of $I_{F}$ on $(y_{1}, y_{2})$, i.e. $q = F(y_{1})$.
    Since $[x_{1}, x_{2}]$ is bi-pooling with atoms $y_{1}$ and $y_{2}$, both $i\in\lbrace 1, 2\rbrace$ satisfy
    \begin{equation*}
        I_{H}(x_{i}) + H(x_{i}) (y_{i} - x_{i}) = I_{F}(y_{i}) + q (y_{i} - y_{i}),
    \end{equation*}
    where the right side of course simply equals $I_{F}(y_{i})$.
    The function $t\mapsto I_{H}(t) + H(t) (y_{i} - t)$ has a non-zero derivative at $t = x_{i}$.
    Therefore, for sufficiently small $r$ the Inverse Function Theorem yields $\tilde{t}_{i}(r)$ such that
    \begin{equation}\label{eq:general_perturbation:bipool:tildet_def}
        I_{H}(\tilde{t}_{i}(r)) + H(\tilde{t}_{i}(r)) (\tilde{y}_{i}(r) - \tilde{t}_{i}) = I_{F}(y_{i}) + q (\tilde{y}_{i}(r) - y_{i}),
    \end{equation}
    Moreover, $\tilde{t}_{i}(r)\to x_{i}$ as $r\to 0$, and $\tilde{t}_{i}(r)$ is differentiable with respect to $r$, with
    \begin{equation*}
        \tilde{t}_{i}^{\prime}(0)
        =
        \tilde{y}_{i}^{\prime}(0) \frac{q - H(x_{i})}{h(x_{i})(y_{i} - x_{i})}
        .
    \end{equation*}
    We have $\tilde{t}_{1}^{\prime}(0) > 0$ and $\tilde{t}_{2}^{\prime}(0) < 0$ since $\tilde{y}_{1}^{\prime}(0) > 0$ and $\tilde{y}_{2}^{\prime}(0) < 0$ and $x_{1} < y_{1}$ and $y_{2} < x_{2}$.\footnote{Also, $H(x_{1}) < q = F(y_{1}) < H(x_{2})$ since the slope of $I_{F}$ is $F(y_{1})$ on $(y_{1}, y_{2})$, while it is $H(x_{i})$ at $x_{i}$, for both $i$.}
    Finally, note that since $I_{F}$ is affine on $[y_{1}, y_{2}]$, the right side of \eqref{eq:general_perturbation:bipool:tildet_def} simply equals $I_{F}(\tilde{y}_{i}(r))$, for all $r$.

    Using $(\tilde{t}_{1}(r), \tilde{x}_{1}(r), \tilde{t}_{2}(r), \tilde{x}_{2}(r))\to (x_{1}, x_{1}, x_{2}, x_{2})$ as $r\to 0$, an application of the Intermediate Value Theorem shows that for $r$ sufficiently small there exist $\tilde{s}_{1}(r) \leq \tilde{x}_{1}(r)$ and $\tilde{s}_{2}(r) \geq \tilde{x}_{2}(r)$ such that
    \begin{equation}\label{eq:general_perturbation:bipool:tildes_def}
        \int_{[\tilde{s}_{1}(r), \tilde{t}_{1}(r)]} (\tilde{x}_{1}(r) - \omega) \de H(\omega)
        =
        \int_{[\tilde{t}_{2}(r), \tilde{s}_{2}(r)]} (\tilde{x}_{2}(r) - \omega) \de H(\omega) = 0
        ,
    \end{equation}
    i.e. $\tilde{x}_{1}(r)$ is the expectation with respect to $H$ conditional on $[\tilde{s}_{1}, \tilde{t}_{1}(r)]$, and analogously for $\tilde{x}_{2}(r)$.
    We have $\tilde{s}_{1}(r)\to x_{1}$ and $\tilde{s}_{2}(r)\to x_{2}$ as $r\to 0$.
    The equations \eqref{eq:general_perturbation:bipool:tildes_def} imply
    \begin{equation*}
        \forall i\in\lbrace 1, 2\rbrace,\quad
        I_{H}(\tilde{s}_{i}(r)) + H(\tilde{s}_{i}(r)) (\tilde{x}_{i}(r) - \tilde{s}_{i}(r)) = I_{H}(\tilde{t}_{i}(r)) + H(\tilde{t}_{i}(r)) (\tilde{x}_{i}(r) - \tilde{t}_{i}(r))
        .
    \end{equation*}
    
    In what follows, fix $r$ sufficiently small such that $\tilde{s}_{1}(r), \tilde{t}_{1}(r), \tilde{t}_{2}(r)$, and $\tilde{s}_{2}(r)$ are well-defined, and $x_{1} - \delta < \tilde{s}_{1}(r) < \tilde{x}_{1}(r) < \tilde{t}_{1}(r) < \tilde{y}_{1}(r) < \tilde{y}_{2}(r) < \tilde{t}_{2}(r) < \tilde{x}_{2}(r) < \tilde{s}_{2}(r) < x_{2} + \delta$.

    We define an experiment $\tilde{F}_{r}$ via its integrated CDF, $I_{\tilde{F}_{r}}$.
    For all $m\in [0, 1]$, let
    \begin{equation*}
        I_{\tilde{F}_{r}}(m) = 
        \begin{cases}
            I_{F}(m) \quad&\mbox{if } m \notin [\tilde{s}_{1}(r), \tilde{s}_{2}(r)],\\
            I_{H}(\tilde{s}_{1}(r)) + H(\tilde{s}_{1}(r))(m - \tilde{s}_{1}(r)) \quad&\mbox{if } m \in [\tilde{s}_{1}(r), \tilde{x}_{1}(r)],\\
            I_{H}(\tilde{t}_{1}(r)) + H(\tilde{t}_{1}(r))(m - \tilde{t}_{1}(r)) \quad&\mbox{if } m \in (\tilde{x}_{1}(r), \tilde{y}_{1}(r)],\\
            I_{F}(m) \quad&\mbox{if } m \in (\tilde{y}_{1}(r), \tilde{y}_{2}(r)],\\
            I_{H}(\tilde{t}_{2}(r)) + H(\tilde{t}_{2}(r))(m - \tilde{t}_{2}(r)) \quad&\mbox{if } m \in (\tilde{y}_{2}(r), \tilde{x}_{2}(r)],\\
            I_{H}(\tilde{s}_{2}(r)) + H(\tilde{s}_{2}(r))(m - \tilde{s}_{2}(r)) \quad&\mbox{if } m \in (\tilde{x}_{2}(r), \tilde{s}_{2}(r)].
        \end{cases}
    \end{equation*}
    The choice of $(\tilde{s}_{1}(r), \tilde{t}_{1}(r), \tilde{t}_{2}(r), \tilde{s}_{2}(r))$ in \eqref{eq:general_perturbation:bipool:tildet_def} and \eqref{eq:general_perturbation:bipool:tildes_def} ensures that $\tilde{F}_{r}$ is an experiment, where we recall that $I_{F}$ is affine on $(y_{1}, y_{2})$ and given by the right side of \eqref{eq:general_perturbation:bipool:tildet_def}.
    \Cref{lemma:price_function_IC} implies that $\tilde{F}_{r}$ is IC since $\supp\tilde{F}_{r}\subseteq\lbrace m\in [0, 1]\colon \tilde{p}(m, r) = G(m)\rbrace$.

    We show that the designer strictly prefers $\tilde{F}_{r}$ to $F$ for $r$ sufficiently small.
    The designer's utility from $\tilde{F}_{r}$ is given by
    \begin{align}
        \label{eq:general_perturbation:bipool:Fprimeutil:1}
        &\int_{x_{1} - \delta}^{\tilde{s}_{1}(r)} u_{D}(\omega)\de H(\omega) +
        (H(\tilde{t}_{1}(r)) - H(\tilde{s}_{1}(r))) u_{D}(\tilde{x}_{1}(r)) + (q - H(\tilde{t}_{1}(r))) u_{D}(\tilde{y}_{1}(r))
        \\
        \label{eq:general_perturbation:bipool:Fprimeutil:2}
        &+ (H(\tilde{t}_{2}(r)) - q) u_{D}(\tilde{y}_{2}(r))
        + (H(\tilde{s}_{2}(r)) - H(\tilde{t}_{2}(r))) u_{D}(\tilde{x}_{2}(r))
        + \int_{\tilde{s}_{2}(r)}^{x_{2} + \delta} u_{D}(\omega)\de H(\omega)
        \\
        \nonumber
        &+
        \int_{[0, 1]\setminus [x_{1} - \delta, x_{2} + \delta]} u_{D}(m) \de F(m)
        .
    \end{align}
    The designer's utility from $F$ is given by
    \begin{align}
        \label{eq:general_perturbation:bipool:Futil:1}
        &\int_{x_{1} - \delta}^{x_{1}} u_{D}(\omega)\de H(\omega)
         + (q - H(x_{1})) u_{D}(y_{1})
        \\
        \label{eq:general_perturbation:bipool:Futil:2}
        &+ (H(x_{2}) - q) u_{D}(y_{2})
        + \int_{x_{2}}^{x_{2} + \delta} u_{D}(\omega)\de H(\omega)
        \\
        \nonumber
        &+
        \int_{[0, 1]\setminus [x_{1} - \delta, x_{2} + \delta]} u_{D}(m) \de F(m)
        .
    \end{align}
    We claim that for $r$ sufficiently small, the sum in \eqref{eq:general_perturbation:bipool:Fprimeutil:1} is strictly larger than  the sum in \eqref{eq:general_perturbation:bipool:Futil:1}, and the sum in \eqref{eq:general_perturbation:bipool:Fprimeutil:2} is strictly larger than  the sum in \eqref{eq:general_perturbation:bipool:Futil:2}.
    The arguments are similar, so we show only the former.
    Dividing by $r$, the difference between \eqref{eq:general_perturbation:bipool:Fprimeutil:1} and \eqref{eq:general_perturbation:bipool:Futil:1} is
    \begin{align}
        & \frac{(u_{D}(\tilde{y}_{1}(r)) - u_{D}(y_{1}))}{r}(q - H(x_{1}))
        + 
        (u_{D}(y_{1}) - u_{D}(\tilde{x}_{1}(r))) \frac{(H(x_{1}) - H(\tilde{t}_{1}(r)))}{r}
        \label{eq:general_perturbation:bipool:comparison:1}
        \\
        & +
        \frac{1}{r}
        \int_{[\tilde{s}_{1}(r), x_{1}]}(u_{D}(\tilde{x}_{1}(r)) - u_{D}(\omega))\de H(\omega)
        \label{eq:general_perturbation:bipool:comparison:2}
    \end{align}
    As $r\to 0$, the sum of the two terms in \eqref{eq:general_perturbation:bipool:comparison:1} converges to
    \begin{align*}
        &\tilde{y}_{1}^{\prime}(0)
        \left(
        u_{D}^{\prime}(y_{1}) (q - H(x_{1})) - (u_{D}(y_{1}) - u_{D}(x_{1})) h(x_{1}) \tilde{t}_{1}^{\prime}(0)
        \right)
        \\
        =
        &\tilde{y}_{1}^{\prime}(0)
        \left(
        u_{D}^{\prime}(y_{1})  - \frac{u_{D}(y_{1}) - u_{D}(x_{1})}{y_{1} - x_{1}}
        \right)
        (q - H(x_{1}))
        .
    \end{align*}
    This limit is strictly positive since $u_{D}$ is strictly convex and since $\tilde{y}_{1}^{\prime}(0) > 0$ holds.
    Turning to the integral in \eqref{eq:general_perturbation:bipool:comparison:2}, we argue that this integral vanishes.
    We have
    \begin{align*}
        &
        \frac{1}{r}
        \left\vert
        \int_{[\tilde{s}_{1}(r), x_{1}]}(u_{D}(\tilde{x}_{1}(r)) - u_{D}(\omega))\de H(\omega)
        \right\vert
        \\
        \leq &
        \frac{1}{r} \left\vert \left(u_{D}(\tilde{x}_{1}(r)) - u_{D}(x_{1})\right) \left(H(x_{1}) - H(\tilde{s}_{1}(r))\right) \right\vert
        +
        \frac{1}{r}
        \left\vert
        \int_{[\tilde{s}_{1}(r), x_{1}]}(u_{D}(x_{1}) - u_{D}(\omega))\de H(\omega)
        \right\vert.
    \end{align*}
    The first term in this upper bound, $\frac{1}{r} \left(u_{D}(\tilde{x}_{1}(r)) - u_{D}(x_{1})\right) \left(H(x_{1}) - H(\tilde{s}_{1}(r))\right)$, vanishes as $r\to 0$ since $\tilde{x}_{1}(r)$ is differentiable at $r = 0$, and since $\tilde{s}_{1}(r) \to x_{1}$ as $r\to 0$.
    Consider the second term.    
    Find $L > 0$ such that $\vert u_{D}^{\prime}\vert \leq L$.
    We then have
    \begin{align*}
        &\frac{1}{r}
        \left\vert
        \int_{[\tilde{s}_{1}(r), x_{1}]}(u_{D}(x_{1}) - u_{D}(\omega))\de H(\omega)
        \right\vert
        \\
        \leq
        &
        \frac{1}{r}
        \int_{[\tilde{s}_{1}(r), x_{1}]} \left\vert (u_{D}(x_{1}) - u_{D}(\omega)) \right\vert \de H(\omega)
        \\ 
        =
        &
        \frac{1}{r}
        \int_{[\tilde{s}_{1}(r), x_{1}]} \left\vert \int_{[\omega, x_{1}]} u_{D}^{\prime}(r) \de r \right\vert \de H(\omega)
        \\
        \leq
        &
        \frac{L}{r}
        \int_{[\tilde{s}_{1}(r), x_{1}]} (x_{1} - \omega) \de H(\omega)
        \\
        =
        &
        \frac{L}{r}
        (x_{1} - \tilde{x}_{1}(r)) (H(x_{1}) - H(\tilde{s}_{1}(r)))
        +
        \frac{L}{r}
        \int_{[\tilde{s}_{1}(r), x_{1}]} (\tilde{x}_{1}(r) - \omega)\de H(\omega)
        \\
        =
        &
        \frac{L}{r}
        (x_{1} - \tilde{x}_{1}(r)) (H(x_{1}) - H(\tilde{s}_{1}(r)))
        +
        \frac{L}{r}
        \int_{[x_{1}, \tilde{t}_{1}(r)]} (\omega - \tilde{x}_{1}(r))\de H(\omega)
        \qquad\mbox{(by \eqref{eq:general_perturbation:bipool:tildes_def})}
        \\
        \leq
        &
        \frac{L}{r}
        (x_{1} - \tilde{x}_{1}(r)) (H(x_{1}) - H(\tilde{s}_{1}(r)))
        +
        \frac{L}{r}
        (\tilde{t}_{1}(r) - \tilde{x}_{1}(r)) (H(\tilde{t}_{1}(r)) - H(x_{1}))
    \end{align*}
    Now, $\frac{1}{r} (x_{1} - \tilde{x}_{1}(r)) (H(x_{1}) - H(\tilde{s}_{1}(r)))$ vanishes as $r\to 0$ since $\tilde{x}_{1}$ is differentiable at $r=0$, and since $\tilde{s}_{1}(r) \to x_{1}$ as $r\to 0$.
    Moreover, $\frac{1}{r} (\tilde{t}_{1}(r) - \tilde{x}_{1}(r)) (H(\tilde{t}_{1}(r)) - H(x_{1}))$ vanishes since $\tilde{t}_{1}(r)$ is differentiable at $r=0$, and since $\tilde{t}_{1}(r)$ and $\tilde{x}_{1}(r)$ both converge to $x_{1}$ as $r\to 0$.
    Thus, \eqref{eq:general_perturbation:bipool:comparison:2} vanishes.
    Thus, the sum of \eqref{eq:general_perturbation:bipool:comparison:1} and \eqref{eq:general_perturbation:bipool:comparison:2} is strictly positive for $r$ sufficiently close to $0$.

    \bigskip

    \textbf{Case 2.} \emph{$F$ admits exactly one atom $y_{1}$ on $(x_{1}, x_{2})$, both $[x_{1} - \delta, x_{1}]$ and $[x_{2}, x_{2} + \delta]$ are full revelation intervals, and $0 < x_{1} < x_{2} < 1$ holds.}

    Similarly to the previous case, we have $y_{1} \in \interior \Ccv$, and $[x_{1} - \delta , x_{1}] \cup [x_{2}, x_{2} + \delta] \subseteq\Cvx$.
    Thus, the tuple $(x_{1}, y_{1}, y_{1}, x_{2})$ meets the hypothesis of \Cref{assumption:smooth_support}.
    Let $\tilde{T} = (\tilde{x}_{1}, \tilde{y}_{1}, \tilde{y}_{2}, \tilde{x}_{2})$ be as postulated by \Cref{assumption:smooth_support}.
    For all $r \in (0, 1]$, let $\tilde{p}(\cdot, r)$ be a supporting price function.
    The component $\tilde{y}_{2}(r)$ shall be irrelevant for our argument.
    
    Recall that $\tilde{y}_{1}$ is continuously differentiable with non-zero derivative at $r = 0$.
    Without loss, let $\tilde{y}_{1}^{\prime}(0) > 0$, the other case being similar.

    Throughout, take $r$ sufficiently small such that $\tilde{y}_{1}(r) > y_{1}$ and $\tilde{x}_{1}(r) \in [x_{1} - \delta, x_{1}]$ and $\tilde{x}_{2}(r) \in [x_{2}, x_{2} + \delta]$, where we recall that $\tilde{x}_{1}(r) \leq x_{1}$ and $\tilde{x}_{2}(r) \geq x_{2}$ follow directly from \Cref{assumption:smooth_support}.

    \Cref{fig:smooth_support:pool} schematically depicts the assumed and derived properties of $p$ and $\tilde{p}(\cdot, r)$ for some fixed $r$.
    
    As in the proof for the previous case, we extend $\tilde{p}(\cdot, r)$ to $[0, 1]$ by setting $\tilde{p}(m, r) = p(m)$ for all $m\notin [\tilde{x}_{1}(r), \tilde{x}_{2}(r)]$.
    The resulting function is continuous, convex, and lies above $G$.

    Similar to the previous case, we next construct three thresholds $\tilde{s}_{1}(r), \tilde{t}_{1}(r), \tilde{s}_{2}(r)$. 
    (The reader is invited to think of the fourth threshold $\tilde{t}_{2}(r)$ from the previous case as being fixed to $x_{2}$.)
    Note, it holds
    \begin{equation*}
        I_{H}(x_{1}) + H(x_{1}) (y_{1} - x_{1}) = I_{H}(x_{2}) + H(x_{2}) (y_{1} - x_{2}),
    \end{equation*}
    and the function $t\mapsto I_{H}(t) + H(t) (y_{1} - t)$ has a strictly positive derivative at $x_{1}$.
    Therefore, an application of the Inverse Function Theorem shows that for $r$ sufficiently small there exists $\tilde{t}_{1}(r)$ such that
    \begin{align*}
        I_{H}(\tilde{t}_{1}(r)) + H(\tilde{t}_{1}(r)) (\tilde{y}_{1}(r) - \tilde{t}_{1}(r)) &= I_{H}(x_{2}) + H(x_{2}) (\tilde{y}_{1}(r) - x_{2})
        \\
        \Leftrightarrow\quad
        \int_{[\tilde{t}_{1}(r), x_{2}]} (\tilde{y}_{1}(r) - \omega) \de H(\omega) &= 0.
    \end{align*}
    Moreover, $\tilde{t}_{1}(r)\to x_{1}$ as $r\to 0$, and $\tilde{t}_{1}(r)$ is differentiable with respect to $r$, with
    \begin{equation*}
        \tilde{t}_{1}^{\prime}(0)
        =
        \tilde{y}_{1}^{\prime}(0) \frac{H(x_{2}) - H(x_{1})}{h(x_{1})(y_{1} - x_{1})}
        .
    \end{equation*}
    Note $\tilde{t}_{1}^{\prime}(0) > 0$ since $\tilde{y}_{1}^{\prime}(0) > 0$ and $x_{1} < y_{1} < x_{2}$.

    Using $(\tilde{x}_{1}(r), \tilde{t}_{1}(r), \tilde{x}_{2}(r))\to (x_{1}, x_{1}, x_{2})$ as $r\to 0$ and $\tilde{x}_{1}(r) \leq x_{1} < \tilde{t}_{1}(r)$ and $x_{2} \leq \tilde{x}_{2}(r)$, an application of the Intermediate Value Theorem shows that for $r$ sufficiently small there exist $\tilde{s}_{1}(r)$ and $\tilde{s}_{2}(r)$ such that
    \begin{equation}\label{eq:general_perturbation:pool:tildes_def}
        \int_{[\tilde{s}_{1}(r), \tilde{t}_{1}(r)]} (\tilde{x}_{1}(r) - \omega) \de H(\omega) 
        =
        \int_{[x_{2}, \tilde{s}_{2}(r)]} (\tilde{x}_{2}(r) - \omega) \de H(\omega) = 0
        .
    \end{equation}
    We have $\tilde{s}_{1}(r) \leq \tilde{x}_{1}(r)$ and $\tilde{x}_{2}(r) \leq \tilde{s}_{2}(r)$.
    Moreover, $\tilde{s}_{1}(r)\to x_{1}$ and $\tilde{s}_{2}(r)\to x_{2}$ as $r\to 0$.
    The equations \eqref{eq:general_perturbation:pool:tildes_def} are equivalent to
    \begin{align*}
        I_{H}(\tilde{s}_{1}(r)) + H(\tilde{s}_{1}(r)) (\tilde{x}_{1}(r) - \tilde{s}_{1}(r)) &= I_{H}(\tilde{t}_{1}(r)) + H(\tilde{t}_{1}(r)) (\tilde{x}_{1}(r) - \tilde{t}_{1}(r))
        ,
        \\
        I_{H}(x_{2}) + H(x_{2}) (\tilde{x}_{2}(r) - x_{2}) &= I_{H}(\tilde{s}_{2}(r)) + H(\tilde{s}_{2}(r)) (\tilde{x}_{1}(r) - \tilde{s}_{2}(r))
    \end{align*}

    Define the experiment $\tilde{F}_{r}$ via its integrated CDF as follows:
    \begin{equation*}
    \forall m\in [0, 1]\quad
    I_{\tilde{F}_{r}}(m) = 
    \begin{cases}
        I_{F}(m) \quad&\mbox{if } m \notin [\tilde{s}_{1}(r), \tilde{s}_{2}(r)],\\
        I_{H}(\tilde{s}_{1}(r)) + H(\tilde{s}_{1}(r))(m - \tilde{s}_{1}(r)) \quad&\mbox{if } m \in [\tilde{s}_{1}(r), \tilde{x}_{1}(r)],\\
        I_{H}(\tilde{t}_{1}(r)) + H(\tilde{t}_{1}(r))(m - \tilde{t}_{1}(r)) \quad&\mbox{if } m \in (\tilde{x}_{1}(r), \tilde{y}_{1}(r)],\\
        I_{H}(x_{2}) + H(x_{2}) (m - x_{2}) \quad&\mbox{if } m \in (\tilde{y}_{1}(r), \tilde{x}_{2}(r)],\\
        I_{H}(\tilde{s}_{2}(r)) + H(\tilde{s}_{2}(r))(m - \tilde{s}_{2}(r)) \quad&\mbox{if } m \in (\tilde{x}_{2}(r), \tilde{s}_{2}(r)].
    \end{cases} 
    \end{equation*}
    The choice of $\tilde{t}_{1}(r)$, $\tilde{s}_{1}(r)$, and $\tilde{s}_{2}(r)$ ensure that $\tilde{F}_{r}$ is an experiment.
    \Cref{lemma:price_function_IC} implies that $\tilde{F}_{r}$ is IC since $\supp \tilde{F}_{r} \subseteq\lbrace m\in [0, 1]\colon \tilde{p}(m, r) = G(m)\rbrace$.
    We show the designer is strictly better off under $\tilde{F}_{r}$ than $F$ for $r$ sufficiently small.

The designer's utility from $\tilde{F}_{r}$, denoted $U_{D}(\tilde{F}_{r})$, is given by
\begin{align*}
    U_{D}(\tilde{F}_{r})
    =
    (H(\tilde{t}_{1}(r)) - H(\tilde{s}_{1}(r))) u_{D}(\tilde{x}_{1}(r))
    +
    (H(x_{2}) - H(\tilde{t}_{1}(r))) u_{D}(\tilde{y}_{1}(r))
    \\
    +
    (H(\tilde{s}_{2}(r)) - H(x_{2})) u_{D}(\tilde{x}_{2}(r))
    +
    \int_{[0, 1]\setminus [\tilde{s}_{1}(r), \tilde{s}_{2}(r)]} u_{D}(m)\de F(m)
\end{align*}
The designer's utility from $F$, denoted $U_{D}(F)$, is given by
\begin{align*}
    U_{D}(F)=
    (H(x_{2}) - H(x_{1})) u_{D}(y_{1})
    +
    \int_{[\tilde{s}_{1}(r), x_{1}]} u_{D}(\omega) \de H(\omega)
    \\
    +
    \int_{[x_{2}, \tilde{s}_{2}(r)]} u_{D}(\omega) \de H(\omega)
    +
    \int_{[0, 1]\setminus [\tilde{s}_{1}(r), \tilde{s}_{2}(r)]} u_{D}(m) \de F(m).
\end{align*}
The difference $\frac{1}{r}(U_{D}(\tilde{F}_{r}) - U_{D}(F))$ equals
\begin{align}
    \label{eq:general_perturbation:pool:difference:1}
    &(H(x_{2}) - H(x_{1}))\frac{u_{D}(\tilde{y}_{1}(r)) - u_{D}(y_{1})}{r}
    + \frac{(H(x_{1}) - H(\tilde{t}_{1}(r)))}{r} (u_{D}(\tilde{y}_{1}(r)) - u_{D}(\tilde{x}_{1}(r)))
    \\
    \label{eq:general_perturbation:pool:difference:2}
    &+
    \frac{1}{r}
    \int_{[\tilde{s}_{1}(r), x_{1}]} (u_{D}(\tilde{x}_{1}(r)) - u_{D}(\omega)) \de H(\omega)
    \\
    \label{eq:general_perturbation:pool:difference:3}
    &+ \frac{1}{r}
    \int_{[x_{2}, \tilde{s}_{2}(r)]} (u_{D}(\tilde{x}_{2}(r)) - u_{D}(\omega)) \de H(\omega)
    .
\end{align}
We argue that, as $r\to 0$, the sum in \eqref{eq:general_perturbation:pool:difference:1} converges to a strictly positive term, whereas both \eqref{eq:general_perturbation:pool:difference:2} and \eqref{eq:general_perturbation:pool:difference:3} vanish.

The limit of \eqref{eq:general_perturbation:pool:difference:1} as $r\to 0$ equals
\begin{equation*}
    \left(H(x_{2}) - H(x_{1})\right) \tilde{y}_{1}^{\prime}(0) \left(u_{D}^{\prime}(y_{1}) - \frac{u_{D}(y_{1}) - u_{D}(x_{1})}{y_{1} - x_{1}}\right).
\end{equation*}
This limit is strictly positive since $u_{D}$ is strictly convex and $\tilde{y}_{1}^{\prime}(0) > 0$.

Next, the integral in \eqref{eq:general_perturbation:pool:difference:2} vanishes; the argument is analogous to the argument that in the previous \textbf{Case 1} the integral \eqref{eq:general_perturbation:bipool:comparison:2} vanishes.
We have
    \begin{align*}
        &
        \frac{1}{r}
        \left\vert
        \int_{[\tilde{s}_{1}(r), x_{1}]}(u_{D}(\tilde{x}_{1}(r)) - u_{D}(\omega))\de H(\omega)
        \right\vert
        \\
        \leq &
        \frac{1}{r} \left\vert \left(u_{D}(\tilde{x}_{1}(r)) - u_{D}(x_{1})\right) \left(H(x_{1}) - H(\tilde{s}_{1}(r))\right) \right\vert
        +
        \frac{1}{r}
        \left\vert
        \int_{[\tilde{s}_{1}(r), x_{1}]}(u_{D}(x_{1}) - u_{D}(\omega))\de H(\omega)
        \right\vert.
    \end{align*}
    The first term in this upper bound, $\frac{1}{r} \left(u_{D}(\tilde{x}_{1}(r)) - u_{D}(x_{1})\right) \left(H(x_{1}) - H(\tilde{s}_{1}(r))\right)$, vanishes as $r\to 0$ since $\tilde{x}_{1}(r)$ is differentiable at $r = 0$, and since $\tilde{s}_{1}(r) \to x_{1}$ as $r\to 0$.
    Consider the second term.    
    Find $L > 0$ such that $\vert u_{D}^{\prime}\vert \leq L$.
    We have
    \begin{align*}
        &\frac{1}{r}
        \left\vert
        \int_{[\tilde{s}_{1}(r), x_{1}]}(u_{D}(x_{1}) - u_{D}(\omega))\de H(\omega)
        \right\vert
        \\
        \leq
        &
        \frac{1}{r}
        \int_{[\tilde{s}_{1}(r), x_{1}]} \left\vert u_{D}(x_{1}) - u_{D}(\omega)) \right\vert \de H(\omega)
        \\ 
        =
        &
        \frac{1}{r}
        \int_{[\tilde{s}_{1}(r), x_{1}]} \left\vert \int_{[\omega, x_{1}]} u_{D}^{\prime}(r) \de r \right\vert \de H(\omega)
        \\
        \leq
        &
        \frac{L}{r}
        \int_{[\tilde{s}_{1}(r), x_{1}]} (x_{1} - \omega) \de H(\omega)
        \\
        =
        &
        \frac{L}{r}
        (x_{1} - \tilde{x}_{1}(r)) (H(x_{1}) - H(\tilde{s}_{1}(r)))
        +
        \frac{L}{r}
        \int_{[\tilde{s}_{1}(r), x_{1}]} (\tilde{x}_{1}(r) - \omega)\de H(\omega)
        \\
        =
        &
        \frac{L}{r}
        (x_{1} - \tilde{x}_{1}(r)) (H(x_{1}) - H(\tilde{s}_{1}(r)))
        +
        \frac{L}{r}
        \int_{[x_{1}, \tilde{t}_{1}(r)]} (\omega - \tilde{x}_{1}(r))\de H(\omega)
        \qquad\mbox{(by \eqref{eq:general_perturbation:pool:tildes_def})}
        \\
        \leq
        &
        \frac{L}{r}
        (x_{1} - \tilde{x}_{1}(r)) (H(x_{1}) - H(\tilde{s}_{1}(r)))
        +
        \frac{L}{r}
        (\tilde{t}_{1}(r) - \tilde{x}_{1}(r)) (H(\tilde{t}_{1}(r)) - H(x_{1}))
    \end{align*}
    Now, $\frac{1}{r} (x_{1} - \tilde{x}_{1}(r)) (H(x_{1}) - H(\tilde{s}_{1}(r)))$ vanishes as $r\to 0$ since $\tilde{x}_{1}$ is differentiable at $r=0$, and since $\tilde{s}_{1}(r) \to x_{1}$ as $r\to 0$.
    Moreover, $\frac{1}{r} (\tilde{t}_{1}(r) - \tilde{x}_{1}(r)) (H(\tilde{t}_{1}(r)) - H(x_{1}))$ vanishes since $\tilde{t}_{1}(r)$ is differentiable at $r=0$, and since $\tilde{t}_{1}(r)$ and $\tilde{x}_{1}(r)$ both converge to $x_{1}$ as $r\to 0$.
    Thus, \eqref{eq:general_perturbation:pool:difference:2} vanishes.

Finally, the integral \eqref{eq:general_perturbation:pool:difference:3} vanishes; the argument is simpler than but analogous to the argument that \eqref{eq:general_perturbation:pool:difference:2} vanishes.
We have
\begin{align*}
    \left\vert \eqref{eq:general_perturbation:pool:difference:3} \right\vert
    =
    &
    \frac{1}{r}
    \left\vert
    \int_{[x_{2}, \tilde{s}_{2}(r)]} (u_{D}(\tilde{x}_{2}(r)) - u_{D}(\omega)) \de H(\omega)
    \right\vert
    \\
    \leq &
    \frac{1}{r} \left\vert\left(u_{D}(\tilde{x}_{2}(r)) - u_{D}(x_{2})\right) \left(H(\tilde{s}_{2}(r)) - H(x_{2})\right) \right\vert
    +
    \frac{1}{r}
    \left\vert
    \int_{[x_{2}, \tilde{s}_{2}(r)]} (u_{D}(x_{2}) - u_{D}(\omega)) \de H(\omega)
    \right\vert.
\end{align*}
The first term in this upper bound, $\frac{1}{r} \left(u_{D}(\tilde{x}_{2}(r)) - u_{D}(x_{2})\right) \left(H(\tilde{s}_{2}(r)) - H(x_{2})\right)$, vanishes as $r\to 0$ since $\tilde{x}_{2}(r)$ is differentiable at $r = 0$, and since $\tilde{s}_{2}(r) \to x_{2}$ as $r\to 0$.
Consider the second term.    
Find $L > 0$ such that $\vert u_{D}^{\prime}\vert \leq L$.
We have
\begin{align*}
    \frac{1}{r}
    \left\vert
    \int_{[x_{2}, \tilde{s}_{2}(r)]} (u_{D}(x_{2}) - u_{D}(\omega)) \de H(\omega)
    \right\vert
    \leq
    &
    \frac{1}{r}
    \int_{[x_{2}, \tilde{s}_{2}(r)]} \left\vert (u_{D}(x_{2}) - u_{D}(\omega)) \right\vert \de H(\omega)
    \\ 
    =
    &
    \frac{1}{r}
    \int_{[x_{2}, \tilde{s}_{2}(r)]} \left\vert - \int_{[x_{2}, \omega]} u_{D}^{\prime}(r) \de r\right\vert \de H(\omega)
    \\
    \leq
    &
    \frac{L}{r}
    \int_{[x_{2}, \tilde{s}_{2}(r)]} (\omega - x_{2}) \de H(\omega)
    \\
    =
    &
    \frac{L}{r}
    \int_{[x_{2}, \tilde{s}_{2}(r)]} (\hat{x}_{2}(r) - x_{2}) \de H(\omega)
    \qquad\mbox{(by \eqref{eq:general_perturbation:pool:tildes_def})}
    \\
    =
    &
    \frac{L}{r}
    (\hat{x}_{2}(r) - x_{2}) (H(\tilde{s}_{2}(r)) - H(x_{2})),
\end{align*}
Now, $\frac{L}{r}
    (\hat{x}_{2}(r) - x_{2}) (H(\tilde{s}_{2}(r)) - H(x_{2}))$ vanishes as $r\to 0$ since $\hat{x}_{2}$ is differentiable at $r = 0$ and since $\tilde{s}_{2}(r) \to x_{2}$ as $r \to 0$.
Thus, \eqref{eq:general_perturbation:pool:difference:3} vanishes.

\bigskip

\textbf{Case 3.} \emph{$F$ admits exactly one atom $y_{1}$ on $(x_{1}, x_{2})$, the interval $[x_{1} - \delta, x_{1}]$ is a full revelation interval, and $0 < x_{1} < x_{2} = 1$ holds.}

The proof is similar to the proof of \Cref{thm:profitable_restriction}, and hence we shall be brief.
Abbreviate $y = y_{1}$ and $x = x_{1}$.
As before, $y \in \interior\Ccv$, and $[x - \delta, x] \subseteq \Cvx$.
In this case, the price $p$ is affine on $[x, 1]$; in particular, $G(x) + g(y)(y - x) = G(y)$.
For $\tilde{y}$ converging to $y$ from above, consider the affine function $t\mapsto G(\tilde{y}) + g(\tilde{y})(t - \tilde{y})$.
For $\tilde{y}$ close to $y$, this affine function intersects $G$ at a point $\tilde{x}$ in $[x - \delta, x]$ since $y \in \interior\Ccv$, and $[x - \delta, x] \subseteq \Cvx$.
Consider the function $\tilde{p}$ that coincides with $p$ up to $\tilde{x}$, then equals the affine function up to $\tilde{y}$, and finally continues with a sufficiently steep slope such that $\tilde{p}$ is convex and lies above $G$ on $[\tilde{y}, 1]$.
The function $\tilde{p}$ is convex and lies above $G$ on all of $[0, 1]$.
Now find $\tilde{t}$ such that $\tilde{y} = \mathbb{E}_{H}[\bm{\omega}\mid\bm{\omega}\in [\tilde{t}, 1]]$ and $\tilde{x} = \mathbb{E}_{H}[\bm{\omega}\mid \bm{\omega}\in [\tilde{s}, \tilde{t}]]$, which is possible for $\tilde{y}$ sufficiently close to $y$.
Finally, let $\tilde{F}$ be the experiment that coincides with $F$ up to $\tilde{s}$, then pools $H$ on $[\tilde{s}, \tilde{t}]$ to $\tilde{x}$, and pools $H$ on $[\tilde{t}, 1]$ to $\tilde{y}$.
For $\tilde{y}$ sufficiently close to $y$, the point $\tilde{s}$ is in $[x - \delta, x]$, implying that $\tilde{F}$ is supported on points $m$ such that $\tilde{p}(m) = G(m)$.
In particular $\tilde{F}$ is IC.
A calculation similar to the one in the proof of \Cref{thm:profitable_restriction} shows that the designer is strictly better off under $\tilde{F}$ than under $F$.

\textbf{Case 4.} \emph{$F$ admits exactly one atom $y_{1}$ on $(x_{1}, x_{2})$, the interval $[x_{2}, x_{2} + \delta]$ is a full revelation interval, and $0 = x_{1} < x_{2} < 1$ holds.} Analogous to the previous case.
\end{proof}

\section{Extreme Point Representation}\label{appendix:extreme_point_representation}
In this appendix, we provide a general extreme point representation of IC experiments.
Let $G$ satisfy part (i) of the regularity condition of \citet{dworczak2019simple}, as in \Cref{sec:discussion:extreme_points}; in particular, \Cref{lemma:price_function_IC} applies.

Let $\nu$ be a Borel probability measure on $\MPC(H)$, where $\MPC(H)$ has the $L^{1}$-norm.
Let $F\in\MPC(H)$.
Say $\nu$ \emph{represents} $F$ if $V(F) = \int V(\Phi) \de\nu(\Phi)$ holds for all continuous linear functionals $V\colon\MPC(H)\to \mathbb{R}$.
Say $\nu$ is \emph{supported on} a measurable set $\mathcal{F}\subseteq \MPC(H)$ if $\nu(\mathcal{F}) = 1$.

\begin{theorem}\label{thm:ic-extreme-points}
    For all IC experiments $F$ there exists a Borel probability measure $\nu$ that represents $F$ and is supported on the set of incentive-compatible extreme points of $\MPC(H)$.\footnote{The set of incentive-compatible extreme points of $\MPC(H)$ is measurable since IC experiments form a closed set (see \Cref{appendix:optimal_existence}) while the extreme points form a $G_{\delta}$-set (\citet[Lemma 7.63]{aliprantis2006infinite}).}
\end{theorem}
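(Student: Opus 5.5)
The plan is to apply Choquet's theorem to the set $\MPC(F)$ and then transfer incentive compatibility from $F$ to the extreme points that carry the representing measure. Equipped with the $L^{1}$-norm, $\MPC(F)$ is a convex, compact, metrizable subset of a locally convex space. Choquet's theorem (in the metrizable case) then yields a Borel probability measure $\nu$ on $\MPC(F)$ with two properties: $\nu$ is concentrated on the extreme points of $\MPC(F)$, and $V(F) = \int V(\Phi)\de\nu(\Phi)$ for every continuous linear functional $V$. The functional $V$ is a priori defined on the ambient space, and hence also on $\MPC(H)$. We may view $\nu$ as a measure on $\MPC(H)$ because $\MPC(F)\subseteq\MPC(H)$.

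Two facts remain to be shown. The first is that every extreme point of $\MPC(F)$ that lies in the support of $\nu$ is IC. The second is that such points are extreme in $\MPC(H)$, not merely in $\MPC(F)$.

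For incentive compatibility, I would use \Cref{lemma:price_function_IC}. Since $F$ is IC, there is a convex continuous $p\geq G$ with $\supp F\subseteq\lbrace p = G\rbrace$. For any $\Phi\in\MPC(F)$, the support of $\Phi$ need not lie in $\lbrace p=G\rbrace$, so this fails in general. I would therefore use the barycenter identity. The map $\Phi\mapsto\int (p-G)\de\Phi$ is a continuous linear functional; a bounded integrand suffices for $L^{1}$-continuity via integration by parts when $p-G$ is of bounded variation, which holds under the regularity assumption. It is nonnegative, and it vanishes at $F$, so it vanishes $\nu$-a.e. Hence $\nu$-almost every $\Phi$ satisfies $\supp\Phi\subseteq\lbrace p=G\rbrace$, using that $\lbrace p = G\rbrace$ is closed because $G$ is upper semicontinuous and $p$ is continuous. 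By \Cref{lemma:price_function_IC}, these $\Phi$ are IC. Because $G$ may be discontinuous, I would approximate $p-G$ from below by continuous nonnegative functions, or handle the finitely many jumps directly.

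For extremality, the key step is to show that extreme points of $\MPC(F)$ obtained this way are extreme in $\MPC(H)$. I do not expect this to hold in general: if $F$ itself is not extreme in $\MPC(H)$, an extreme point of $\MPC(F)$ can still be a nontrivial mixture in $\MPC(H)$. So I would modify the plan. First, by the Kleiner et al.\ characterization, extreme points of $\MPC(F)$ are bi-pooling relative to $I_F$. Then I would decompose $F$ itself rather than working inside $\MPC(F)$: take $\nu$ from Choquet on $\MPC(H)$ restricted to the closed face $\mathcal{K}=\lbrace\Phi\in\MPC(H):\int(p-G)\de\Phi=0\rbrace$. This set is a closed face of $\MPC(H)$ because it is the zero set of a nonnegative continuous linear functional. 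It is convex and compact, it contains $F$, and its extreme points are extreme points of $\MPC(H)$ precisely because it is a face. Choquet's theorem applied to $\mathcal{K}$ gives $\nu$ supported on the extreme points of $\mathcal{K}$, which are extreme in $\MPC(H)$ and, by the support argument above, IC. This face argument replaces the problematic transfer step. The main obstacle is then verifying that $\mathcal{K}$ is closed and that the functional is continuous in the $L^{1}$ topology given the possible discontinuities of $G$. I would first try to show that $\Phi\mapsto\int(p-G)\de\Phi$ is lower semicontinuous, which still makes $\mathcal{K}$ closed since the functional is nonnegative, and then check measurability as in the footnote to the theorem.
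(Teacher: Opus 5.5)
Your final argument, Choquet's theorem applied to the set $\mathcal{K}=\{\Phi\in\MPC(H)\colon \int(p-G)\de\Phi=0\}$, is correct, and it takes a different route from the paper's.

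\textbf{The paper's route.} The paper applies Choquet's theorem to all of $\MPC(H)$, in Kleiner et al.'s version, and obtains $\nu$ concentrated on extreme points of $\MPC(H)$. Only afterwards does it show that $\nu$-almost every such point is IC. With $K=\{p=G\}$, it tests against the continuous function $\gamma(m)=\min_{m'\in K}\vert m-m'\vert$. Then $\Phi\mapsto\int\gamma\de\Phi$ is a genuinely continuous linear functional that is nonnegative and vanishes at $F$. Hence it vanishes $\nu$-a.e., and \Cref{lemma:price_function_IC} finishes the argument.

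\textbf{Your route.} You reverse the order and restrict to the face first. Extremality in $\MPC(H)$ is then inherited for free, and IC holds for every element of $\mathcal{K}$, not merely almost every one. The cost is that you must show $\mathcal{K}$ is closed.

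\textbf{Closedness of $\mathcal{K}$.} Your first justification is wrong. Integration by parts does not make $\Phi\mapsto\int(p-G)\de\Phi$ continuous in $L^{1}$ when $G$ jumps, because weakly converging measures can move mass onto a jump point where the integrand drops. Your fallback is right, though. The function $p-G$ is bounded and lower semicontinuous, and $L^{1}$-convergence of CDFs on $[0,1]$ is weak convergence. By the Portmanteau theorem the functional is therefore lower semicontinuous, and $\mathcal{K}$, which is a sublevel set by nonnegativity, is closed.

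\textbf{A simpler variant.} You could define the face with the paper's $\gamma$ instead of $p-G$. It is the same set $\{\Phi\colon \Phi(\{p>G\})=0\}$, and closedness is then immediate. Seen this way, the two proofs are two ends of one idea: this face is exposed by a nonnegative continuous affine functional, so any representing measure of $F$ is concentrated on it.

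The detour through $\MPC(F)$ and the remark about bi-pooling relative to $I_{F}$ can simply be dropped.
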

The content of \Cref{thm:ic-extreme-points} is that the extreme points can be chosen to be IC, and that they are extreme points of the entire set $\MPC(H)$.

\Cref{thm:ic-extreme-points} does \emph{not} imply that the set of IC experiments is convex. 
Indeed, it is not generally convex: for $S$-shaped preferences of the experimenter, IC experiments can have at most one atom in the concave part of $G$ (\Cref{lemma:s_shaped_IC}), implying that a proper mixture of two distinct double censorship experiments is not IC.

\begin{proof}[Proof of \Cref{thm:ic-extreme-points}]
    Let $F$ be IC.
    By \Cref{lemma:price_function_IC}, there exists a continuous, convex function $p\colon [0, 1]\to\mathbb{R}$ such that $p\geq G$ and $\supp F\subseteq \lbrace m\in[0, 1]\colon p(m) = G(m)\rbrace$.
    By Choquet's Representation Theorem (e.g. \citet[Proposition 1]{kleiner2021extreme}), there exists a measure $\nu$ that represents $F$ and is supported on the extreme points of $\MPC(H)$.
    
    We argue that $\nu$-almost all $\Phi$ are IC.
    We use \Cref{lemma:price_function_IC}.
    Let $K = \lbrace m\in[0, 1]\colon p(m) = G(m)\rbrace$ and $O= [0, 1]\setminus K$.
    The set $K$ is closed and $O$ is open (in $[0, 1]$) since $p - G \geq 0$ holds and since $p - G$ is lower semicontinuous.

    Let $\gamma$ be a continuous function such that $\gamma(m) > 0$ for all $m\in O$ and $\gamma(m) = 0$ for all $m\in K$; e.g., let $\gamma(m) = \min_{m^{\prime}\in K}\vert m - m^{\prime}\vert$, where the minimum is well-defined and continuous by closedness of $K$ and Berge's Maximum Theorem.
    Thus, the linear functional $\Phi\mapsto \int \gamma(m)\de\Phi(m)$ is continuous.
    Since $\nu$ represents $F$ and $\supp F\subseteq \lbrace m\in[0, 1]\colon p(m) = G(m)\rbrace = K$, we have $0 = \int \int_{O} \gamma(m)\de\Phi(m)\de\nu(\Phi)$.
    Since $\gamma$ is strictly positive on $O$, it follows that $\nu$-almost all $\Phi$ assign mass $0$ to the set $O$. 
    Since $[0, 1]\setminus O = K  = \lbrace m\in [0, 1]\colon p(m) = G(m)\rbrace$ and since $K$ is closed, it follows that $\nu$-almost all $\Phi$ satisfy $\supp \Phi \subseteq \lbrace m\in [0, 1]\colon p(m) = G(m)\rbrace$.
    Invoking \Cref{lemma:price_function_IC} with the price function $p$, it follows that $\nu$-almost all $\Phi$ are IC.
\end{proof}

\end{document}